\documentclass[a4paper,USenglish,cleveref, autoref, thm-restate, numberwithinsect]{lipics-v2021}

\hideLIPIcs

\title{A Graph Width Perspective on Partially Ordered Hamiltonian Paths and Cycles I}
\subtitle{Treewidth, Pathwidth, and Grid Graphs} 

\titlerunning{A Graph Width Perspective on Partially Ordered Hamiltonian Paths and Cycles I} 

\author{Jesse Beisegel}{Institute of Mathematics, Brandenburg University of Technology, Cottbus, Germany}{jesse.beisegel@b-tu.de}{https://orcid.org/0000-0002-8760-0169}{}
\author{Katharina Klost}{Institute of Computer Science, Freie Universität Berlin, Germany}{katharina.klost@fu-berlin.de}{https://orcid.org/0000-0002-9884-3297}{}
\author{Kristin Knorr}{Institute of Computer Science, Freie Universität Berlin, Germany}{kristin.knorr@fu-berlin.de}{https://orcid.org/0000-0003-4239-424X}{}
\author{Fabienne Ratajczak}{Institute of Mathematics, Brandenburg University of Technology, Cottbus, Germany}{fabienne.ratajczak@b-tu.de}{https://orcid.org/0000-0002-5823-1771}{The research was funded by the \emph{Federal Ministry for Digital and Transport Germany} through the research project \emph{MoVeToLausitz} (grant number 19FS2032C).}
\author{Robert Scheffler}{Institute of Mathematics, Brandenburg University of Technology, Cottbus, Germany}{robert.scheffler@b-tu.de}{https://orcid.org/0000-0001-6007-4202}{}

\authorrunning{J. Beisegel, K. Klost, K. Knorr, F. Ratacjzak, and R. Scheffler} 

\Copyright{Jesse Beisegel, Katharina Klost, Kristin Knorr, Fabienne Ratacjzak, and Robert Scheffler}

\keywords{Hamiltonian path, Hamiltonian cycle, partial order, treewidth, pathwidth, bandwidth, grid graphs}

\relatedversion{} 

\acknowledgements{We want to thank Philipp Wolf Schleicher for his helpful suggestions.}

\nolinenumbers 

\hideLIPIcs

\theoremstyle{definition}
\newtheorem{problem}{Problem}

\crefname{observation}{Observation}{Observations}
\Crefname{observation}{Observation}{Observations}

\ccsdesc[500]{Mathematics of computing~Paths and connectivity problems}
\ccsdesc[500]{Mathematics of computing~Graph algorithms}
\ccsdesc[500]{Theory of computation~Problems, reductions and completeness}

\newcommand{\param}[1]{{#1}}

\renewcommand{\P}{\ensuremath{\mathsf{P}}}
\newcommand{\NP}{\ensuremath{\mathsf{NP}}}
\newcommand{\XP}{\ensuremath{\mathsf{XP}}{}}
\newcommand{\FPT}{\ensuremath{\mathsf{FPT}}}
\newcommand{\W}{\ensuremath{\mathsf{W[1]}}}
\newcommand{\MSO}{\ensuremath{\mathsf{MSO}}}
\newcommand{\cP}{\ensuremath{\mathcal{P}}}
\newcommand{\A}{\ensuremath{\mathcal{A}}}
\renewcommand{\O}{\ensuremath{\mathcal{O}}}
\newcommand{\G}{\ensuremath{\mathcal{G}}}

\newcommand{\C}{\ensuremath{\mathcal{C}}}
\newcommand{\B}{\ensuremath{\mathcal{B}}}
\newcommand{\N}{\ensuremath{\mathbb{N}}}

\newcommand{\pef}[1]{(P\ref{#1})}

\usepackage{enumerate}
\usepackage{xcolor}

\usepackage{tikz}
\usetikzlibrary{decorations.pathreplacing,arrows,math,shapes}
\usetikzlibrary{decorations.pathmorphing}
\tikzstyle{vertex}=[draw, circle, fill=black, inner sep=1.5pt]
\definecolor{variables}{RGB}{0,102,204}
\definecolor{clauses}{RGB}{204,0,0}
\definecolor{backbone}{RGB}{255,153,0}
\definecolor{variables2}{RGB}{153,204,0}
\definecolor{clauses2}{RGB}{0,100,0}
\definecolor{hard}{RGB}{204,0,0}
\definecolor{easy}{RGB}{142,196,222}

\begin{document}

\maketitle

\begin{abstract}
We consider the problem of finding a Hamiltonian path or a Hamiltonian cycle with precedence constraints in the form of a partial order on the vertex set. We show that the path problem is \NP-complete for graphs of pathwidth~4 while the cycle problem is \NP-complete on graphs of pathwidth~5. We complement these results by giving polynomial-time algorithms for graphs of pathwidth~3 and treewidth~2 for Hamiltonian paths as well as pathwidth~4 and treewidth~3 for Hamiltonian cycles. Furthermore, we study the complexity of the path and cycle problems on rectangular grid graphs of bounded height. For these, we show that the path and cycle problems are \NP-complete when the height of the grid is greater or equal to~7 and~9, respectively. In the variant where we look for minimum edge-weighted Hamiltonian paths and cycles, the problems are \NP-hard for heights~5 and~6, respectively. \end{abstract}

\section{Introduction}

The problems \textsc{Hamiltonian Cycle} and \textsc{Hamiltonian Path} as well as their weighted versions \textsc{Traveling Salesperson Problem} (TSP) and \textsc{Path Traveling Salesperson Problem} (Path TSP) belong to the most famous \NP-hard graph problems and have a wide range of applications. In some of these applications it is necessary to fulfill certain precedence constraints, i.e., one vertex has to be visited before another. For a long time, the research on Hamiltonian paths and cycles with such precedence constraints has focused on the case where the given graph is a complete graph with certain weights on the edges (see \cref{sec:related-work} for references). In contrast, the problems without constraints have been extensively studied also for the unweighted case on a wide range of graph classes and graph width parameters. 
To combine these two research directions, Beisegel et al.~\cite{beisegel2024computing} introduced the \textsc{Partially Ordered Hamiltonian Path Problem (POHPP)} where we are given a graph together with a partial order on its vertices and search for a Hamiltonian path that is a linear extension of the partial order.

Here, we will continue this research by considering the problem on some famous graph classes and graph width parameters, namely proper interval graphs and grid graphs as well as graphs of bounded \param{treewidth}, \param{pathwidth}, and \param{bandwidth}. In contrast to Beisegel et al.~\cite{beisegel2024computing}, we do not only focus on the Hamiltonian path case, but also consider the cycle case, i.e., the \textsc{Partially Ordered Hamiltonian Cycle Problem (POHCP)}. In their conclusion, Beisegel et al.~\cite{beisegel2024computing} suggested to use (partial) cyclic orders to define this problem. Such cyclic orders consist of triplets that force one vertex to be between the other two in the cycle. However, the authors also mentioned that such a variant would imply certain difficulties. In contrast to partial orders, cyclic orders are not necessarily extendable, i.e., there does not exist a \emph{total} cyclic ordering of the vertices that fulfill all the constraints. Even worse, the problem of deciding whether such an extension exists is \NP-complete~\cite{galil1978cyclic}. To circumvent this issue for TSP, the authors of \cite{ahmed2001travelling} and \cite{bianco1994exact} assume that the cycle must begin and end in a certain vertex (the headquarter) and the vertices in between form a linear extension of a given partial order. We adapt this approach slightly for the Hamiltonian cycle problem by searching for a Hamiltonian path that forms a linear extension of a given partial order and whose endpoints are adjacent.

\subsection{Related Work}\label{sec:related-work}

\subparagraph{Hamiltonian Paths and Cycles with Precedence}

Both the cycle and the path variant of the TSP have been considered together with precedence constraints. The cycle variant is known as \textsc{Traveling Salesman Problem with Precedence Constraints} (TSP-PC) and has been studied, e.g., in \cite{ahmed2001travelling,bianco1994exact}. The path variant, known as the \textsc{Sequential Ordering Problem} (SOP) or the \textsc{Minimum Setup Scheduling Problem}, has been studied, e.g., in~\cite{ascheuer1993cutting,colbourn1985minimizing,escudero1988inexact,escudero1988implementation}. 

Of course, all these problems are \NP-hard and the research focus has mainly been on heuristics and integer-programming approaches. Furthermore, these problems are defined over complete graphs with an additional cost function. The unweighted variants \textsc{Hamiltonian Path} and \textsc{Hamiltonian Cycle} with precedence constraints for non-complete graphs have not received the same level of attention for a long time. Results have been only given for the very restricted variants where one or both endpoints of the Hamiltonian path are fixed. For these problems, polynomial-time algorithms have been presented for several graph classes including (proper) interval graphs~\cite{asdre2010fixed,asdre2010polynomial,li2017linear,mertzios2010optimal}, distance-hereditary graphs~\cite{hsieh2004efficient,yeh1998path}, and rectangular grid graphs~\cite{itai1982hamilton}.

To overcome this lack of research, Beisegel et al.~\cite{beisegel2024computing} introduced the \textsc{Partially Ordered Hamiltonian Path Problem} (POHPP), where we are given a graph together with a partial order $\pi$ and search for a Hamiltonian path that is a linear extension of $\pi$, as well as the edge-weighted variant \textsc{Minimum Partially Ordered Hamiltonian Path Problem} (MinPOHPP). The authors show that POHPP is already \NP-hard for complete bipartite graphs and complete split graphs -- graph classes where \textsc{Hamiltonian Path} is trivial.
They also show that  POHPP is \W-hard when parameterized by the partial order's \param{width}, i.e., the largest number of pairwise incomparable elements, and that the \XP{} algorithm for that parametrization presented in 1985 by Colbourn and Pulleyblank~\cite{colbourn1985minimizing} is asymptotically optimal -- assuming the Exponential Time Hypothesis (ETH). They improve the algorithm to \FPT{} time if the problem is parameterized by the partial order's \param{distance to linear order}. Finally, they give a polynomial-time algorithm for MinPOHPP on outerplanar graphs.

\subparagraph{Hamiltonicity and Treewidth} As \textsc{Hamiltonian Path} and \textsc{Hamiltonian Cycle} belong to the most famous \NP-hard graph problems, their complexity has been studied for a wide range of graph classes and graph width parameters.\footnote{In general, \textsc{Hamiltonian Path} seems to lead a shadowy existence since many positive and negative algorithmic results are only given for its more popular sibling \textsc{Hamiltonian Cycle}.} In particular, these problems have gained attention for \param{treewidth} and related parameters.

One of the first results were polynomial-time algorithms for \textsc{Hamiltonian Cycle} and TSP on graphs of bounded \param{bandwidth}~\cite{lawler1985traveling,monien1980bounding}, a parameter that is more restrictive than \param{treewidth}. Since both \textsc{Hamiltonian Cycle} and \textsc{Hamiltonian Path} are expressible in $\MSO_2$ logic, Courcelle's theorem~\cite{courcelle1990monadic,courcelle1992monadic} implies \FPT{} algorithms for these problems when parameterized by \param{treewidth}, i.e., there is a computable function $f$ such that the problems can be solved in time $f(k) \cdot n^{\O(1)}$ on graphs with $n$ vertices and \param{treewidth}~$k$. However, the function $f$ given in Courcelle's theorem is far from being practical. Nevertheless, using standard approaches one can give an $k^{\O(k)} \cdot n$ algorithm for \textsc{Hamiltonian Cycle} and \textsc{Hamiltonian Path} on graphs of \param{treewidth} $k$.\footnote{There seems to be no broader description of this algorithm available; it is asked for in Exercise~7.19 of the textbook of Cygan et al.~\cite{cygan2015param}. A short sketch of the algorithm for \textsc{Hamiltonian Cycle} is given by Ziobro and Pilipczuk~\cite{ziobro2019finding}.} For other graph problems such as \textsc{Independent Set} and \textsc{Vertex Cover}, algorithms with running times $2^{\O(k)} \cdot n$ have been known for a long time. Contrarily, it was open for a longer time whether similar algorithms can be also given for non-local problems such as \textsc{Hamiltonian Cycle} and \textsc{Hamiltonian Path}. Such algorithms have been presented in~\cite{bodlaender2015deterministic,cygan2018fast,cygan2022solving}. An experimental study of these approaches can be found in \cite{ziobro2019finding}.

\subparagraph{Hamiltonicity and (Proper) Interval Graphs} Interval graphs, i.e., the intersection graphs of intervals on the real line, have a very restricted structure. This allows for linear-time algorithms for both \textsc{Hamiltonian Path} and \textsc{Hamiltonian Cycle}~\cite{damschke1993paths,keil1985finding}. The subclass of proper interval graphs, where intervals are forbidden to be properly included in each other, has an even simpler structure which implies that such a graph has a Hamiltonian path or cycle if and only if it is connected or 2-connected, respectively~\cite{bertossi1983finding}. This property also holds for the more general \emph{semi-proper interval graphs}~\cite{scheffler2025semi}. Both problems can also be solved in polynomial time on graphs that have $k$ vertices whose deletion constructs a proper interval graph~\cite{golovach2020graph}.

As complete split graphs form a subclass of interval graphs, the results given by Beisegel et al.~\cite{beisegel2024computing} imply that the POHPP is \NP-complete on interval graphs. The subcase where only one endpoint of the Hamiltonian path is fixed, can be solved in linear time~\cite{li2017linear}. Surprisingly, the complexity of the case where both endpoints of the path are fixed is still open on interval graphs. However, for proper interval graphs this problem can be solved in linear time~\cite{asdre2010polynomial,mertzios2010optimal}.

\subparagraph{Hamiltonicity and Grids} For grid graphs, i.e., induced subgraphs of the infinite grid, \textsc{Hamiltonian Path} and \textsc{Hamiltonian Cycle} are \NP-complete~\cite{itai1982hamilton}. If we restrict ourselves to rectangular grid graphs, then both problems become trivial. Rectangular grid graphs always have a Hamiltonian path. Furthermore, they have a Hamiltonian cycle if and only if they are 2-connected and have an even number of vertices. Itai et al.~\cite{itai1982hamilton} showed that one can also solve \textsc{Hamiltonian Path} on rectangular grid graphs if the end vertices of the path are fixed. Their algorithm was later improved to a linear running time by Chen et al.~\cite{chen2002efficient}. In contrast to these results, the complexity of \textsc{Hamiltonian Path} on general solid grid graphs, i.e., grid graphs without holes, seems to be still open while \textsc{Hamiltonian Cycle} has been shown to be polynomial-time solvable on that class by Umans and Lenhart~\cite{umans1997hamiltonian}. Keshavarz-Kohjerdi and Bagheri considered the Hamiltonian path problem with fixed endpoints on several subclasses of solid grid graphs, namely \emph{$C$-shaped}~\cite{keshavarz-kohjerdi2020linear}, \emph{$L$-shaped}~\cite{keshavarz-kohjerdi2016hamiltonian} as well as \emph{$E$-alphabet} and \emph{$F$-alphabet} grid graphs~\cite{keshavarz-kohjerdi2012hamiltonian}.

\subsection{Our Contribution}

We study the complexity of the POHPP and POHCP for graphs of bounded \param{treewidth}, \param{pathwidth} and \param{bandwidth} (see \cref{fig:results} for a summary). In particular, we show that POHPP is \NP-complete on proper interval graphs of \param{clique number}~5 implying that the problem is \NP-complete on graphs of \param{bandwidth}~4. For POHCP, we show the same for proper interval graphs of \param{clique number}~6 and graphs of \param{bandwidth}~5. 

We also show that these hardness results are tight (if $\P \neq \NP$). To this end, we present polynomial-time algorithms for MinPOHPP and MinPOHCP on graphs of \param{pathwidth}~3 and~4, respectively. Additionally, we present algorithms for graphs of \param{treewidth}~2 and~3, respectively, leaving open the case of (Min)POHPP for graphs of \param{treewidth}~3 as well as (Min)POHCP of \param{treewidth}~4.

One might wonder if there are more structured graph classes with bounded \param{bandwidth} or \param{pathwidth} for which the problem becomes easier. For the very structured class of rectangular grid graphs, we are able to show that POHPP is \NP-complete if the height of the grid graph is~7 and MinPOHPP is \NP-hard if the height is~5. When considering the cycle case, then the unweighted problem POHCP is \NP-complete for height~9 and the weighted problem MinPOHCP is \NP-hard for height~6. Note that we can solve MinPOHPP on grids of height~3 and MinPOHCP on grids of height~4 as the height of the grid graph is an upper bound on its \param{pathwidth}.

\begin{figure}[t]
    \centering
    \begin{tikzpicture}[scale=0.8]
    \footnotesize
    \node at (5.5,5.75) {parameter values};
    
    \foreach \x in {1,2,...,10} {
        \draw[lightgray] (\x,-6) -- (\x,4.5);
        \node at (\x,5) {$\x$};
    }
    \node[align=center] at (-1,3.375) {bandwidth / \\ pathwidth};
    
    \draw[fill=easy] (0.5,4) rectangle (3.5,4.25);
    \draw[fill=hard] (3.5,4) rectangle (10.5,4.25);
    \node[align=left] at (12,4.125) {POHPP\phantom{Min}};
    
    \draw[fill=easy] (0.5,3.5) rectangle (3.5,3.75);
    \draw[fill=hard] (3.5,3.5) rectangle (10.5,3.75);
    \node[align=left]  at (12,3.625) {MinPOHPP};
    
    \draw[fill=easy] (0.5,3) rectangle (4.5,3.25);
    \draw[fill=hard] (4.5,3) rectangle (10.5,3.25);
    \node[align=left]  at (12,3.125) {POHCP\phantom{Min}};
    
    \draw[fill=easy] (0.5,2.5) rectangle (4.5,2.75);
    \draw[fill=hard] (4.5,2.5) rectangle (10.5,2.75);
    \node[align=left] at (12,2.625) {MinPOHCP};

    \draw[thick] (-1,2) -- (12,2);

    \begin{scope}[yshift=-2.75cm]
        \node[align=center] at (-1,3.375) {treewidth};
    
        \draw[fill=easy] (0.5,4) rectangle (2.5,4.25);
        \draw[fill=hard] (3.5,4) rectangle (10.5,4.25);
        \node[align=left] at (12,4.125) {POHPP\phantom{Min}};
        
        \draw[fill=easy] (0.5,3.5) rectangle (2.5,3.75);
        \draw[fill=hard] (3.5,3.5) rectangle (10.5,3.75);
        \node[align=left]  at (12,3.625) {MinPOHPP};
        
        \draw[fill=easy] (0.5,3) rectangle (3.5,3.25);
        \draw[fill=hard] (4.5,3) rectangle (10.5,3.25);
        \node[align=left]  at (12,3.125) {POHCP\phantom{Min}};
        
        \draw[fill=easy] (0.5,2.5) rectangle (3.5,2.75);
        \draw[fill=hard] (4.5,2.5) rectangle (10.5,2.75);
        \node[align=left] at (12,2.625) {MinPOHCP};
    
        \draw[thick] (-1,2) -- (12,2);
    \end{scope}

    \begin{scope}[yshift=-5.5cm]
        \node[align=center] at (-1.1,3.375) {height of \\ rectangular \\ grid graph};
    
        \draw[fill=easy] (0.5,4) rectangle (3.5,4.25);
        \draw[fill=hard] (6.5,4) rectangle (10.5,4.25);
        \node[align=left] at (12,4.125) {POHPP\phantom{Min}};
        
        \draw[fill=easy] (0.5,3.5) rectangle (3.5,3.75);
        \draw[fill=hard] (4.5,3.5) rectangle (10.5,3.75);
        \node[align=left]  at (12,3.625) {MinPOHPP};
        
        \draw[fill=easy] (0.5,3) rectangle (4.5,3.25);
        \draw[fill=hard] (8.5,3) rectangle (10.5,3.25);
        \node[align=left]  at (12,3.125) {POHCP\phantom{Min}};
        
        \draw[fill=easy] (0.5,2.5) rectangle (4.5,2.75);
        \draw[fill=hard] (5.5,2.5) rectangle (10.5,2.75);
        \node[align=left] at (12,2.625) {MinPOHCP};
    
        \draw[thick] (-1,2) -- (12,2);
    \end{scope}

    \begin{scope}[yshift=-8.25cm]
        \node[align=center] at (-1.1,3.375) {outerplanarity};
    
        \draw[fill=easy] (0.5,4) rectangle (1.5,4.25);
        \draw[fill=hard] (3.5,4) rectangle (10.5,4.25);
        \node[align=left] at (12,4.125) {POHPP\phantom{Min}};
        
        \draw[fill=easy] (0.5,3.5) rectangle (1.5,3.75);
        \draw[fill=hard] (2.5,3.5) rectangle (10.5,3.75);
        \node[align=left]  at (12,3.625) {MinPOHPP};
        
        \draw[fill=easy] (0.5,3) rectangle (1.5,3.25);
        \draw[fill=hard] (4.5,3) rectangle (10.5,3.25);
        \node[align=left]  at (12,3.125) {POHCP\phantom{Min}};
        
        \draw[fill=easy] (0.5,2.5) rectangle (1.5,2.75);
        \draw[fill=hard] (2.5,2.5) rectangle (10.5,2.75);
        \node[align=left] at (12,2.625) {MinPOHCP};
    \end{scope}
\end{tikzpicture}
    \caption{Overview of the complexity results and open cases. The light blue bars stand for polynomial-time algorithms, the dark red bars stand for \NP-hardness. The parameter values that are not covered by red or blue bars are open cases.}
    \label{fig:results}
\end{figure}
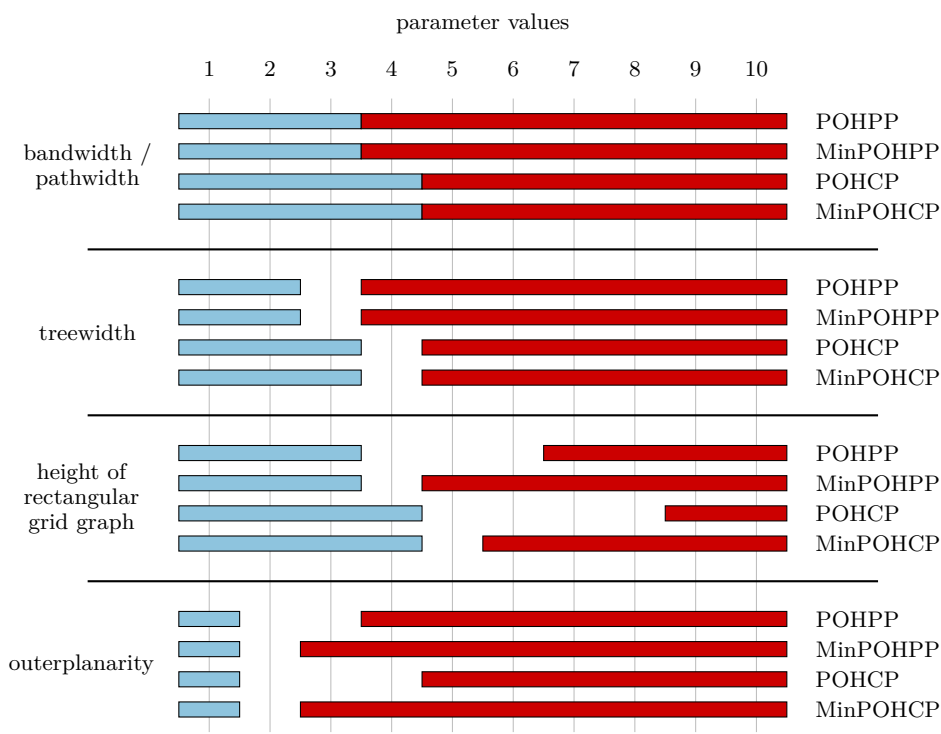

\section{Preliminaries}

\subparagraph{Graphs and Graph Classes}
All the graphs considered here are finite. For the standard notation of graphs we refer to the book of Diestel~\cite{diestel}. A graph $G$ is a \emph{proper interval graph} if it has a vertex ordering $(v_1, \dots, v_n)$ such that for all edges $v_iv_k$ and every $j \in \{i, \dots, k\}$ it holds that $v_iv_j$ and $v_jv_k$ are in $E(G)$. We call such an ordering a \emph{proper interval ordering}. A graph is a \emph{complete split graph} if it can be partitioned into a clique $C$ and an independent set $I$ such that all edges between $C$ and $I$ are present in the graph.

A graph is \emph{planar} if it has a crossing-free embedding in the plane, and together with this embedding it is called a \emph{plane graph}. For a plane graph $G$ we call the regions of $\mathbb{R}^2 \setminus G$ the \emph{faces} of $G$. Every plane graph has exactly one unbounded face which is called the \emph{outer face}. The \emph{outerplanarity} of an embedding is the number of times all vertices of the outer face have to be deleted to create the graph without vertices. The \emph{outerplanarity} of a planar graph is the smallest outerplanarity of one of its embeddings.

A \emph{\(w \times h\)-rectangular grid graph} has vertex set \(\{1,\dots,w\}\times \{1,\dots,h\}\) and two vertices $(x,y)$ and $(x',y')$ are adjacent if $x = x'$ and $|y - y'| = 1$ or $y = y'$ and $|x - x'| = 1$. We call $w$ and $h$ the \emph{width} and \emph{height} of the graph. A graph is a \emph{partial grid graph} if it is a subgraph of some rectangular grid graph. The \emph{height} of a partial grid graph $G$ is the smallest value $h$ such that $G$ is a subgraph of a $w \times h$-rectangular grid graph.

\subparagraph{Graph Width Parameters}

We start with the definition of tree decompositions and path decompositions.

\begin{definition}
    A \emph{tree decomposition} of a graph $G$ is a pair $(T,\{X_t\}_{t\in V(T)})$ consisting of a tree $T$ and a mapping assigning to each node $t\in V(T)$ a set $X_t\subseteq V(G)$ (called \emph{bag}) such that 
    \begin{enumerate}
    \item $\bigcup_{t \in V(T)} X_t = V(G)$
    \item for every edge $uv \in E(G)$, there is a $t \in V(T)$ such that $u,v \in X_t$
    \item for every vertex $v\in V(G)$, the nodes of bags containing $v$ form a subtree of $T$
    \end{enumerate}
    The \emph{width} of a tree decomposition is the maximum size of a bag minus~1. The \emph{\param{treewidth}} of a graph $G$ is the minimal width of a tree decomposition of $G$.
\end{definition}

The terms \emph{path decomposition} and \emph{\param{pathwidth}} are defined accordingly, where the tree $T$ is replaced by a path. We then simply use the ordering $X_1, \dots, X_k$  of the bags in that path to describe the decomposition. 

The \emph{\param{bandwidth}} of a graph $G$ is the smallest integer $k$ such that there exists a vertex ordering $\sigma$ of $G$ where $|\sigma(u) - \sigma(v)| \leq k$ for every edge $uv \in E(G)$. It is easy to see that the \param{treewidth} of a graph is at most its \param{pathwidth} and its \param{pathwidth} is at most its \param{bandwidth}~\cite{kaplan1996pathwidth}. For proper interval graphs, the \param{bandwidth} depends on the \emph{\param{clique number}}, i.e., the size of the largest complete subgraph of the graph. In fact, the following proposition is a direct consequence of Theorem~4.1 in~\cite{{kaplan1996pathwidth}}. 

\begin{proposition}[Kaplan and Shamir~{\cite[Theorem~4.1]{kaplan1996pathwidth}}]\label{prop:proper-bandwidth}
    The \param{bandwidth} of a proper interval graph is equal to its \param{clique number} minus~1.
\end{proposition}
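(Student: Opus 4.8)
The plan is to establish the two inequalities separately. Write $\omega$ for the clique number of $G$. The lower bound $\mathrm{bw}(G) \ge \omega - 1$ holds for every graph, not just proper interval graphs: fix any vertex ordering $\sigma$ and consider a clique $K$ of size $\omega$. Among the vertices of $K$, let $u$ and $v$ be the ones with the smallest and the largest value under $\sigma$. Since $K$ occupies $\omega$ distinct positions, we have $|\sigma(u) - \sigma(v)| \ge \omega - 1$, and because $u$ and $v$ are adjacent this single edge already forces the bandwidth of $\sigma$ to be at least $\omega - 1$. As $\sigma$ was arbitrary, $\mathrm{bw}(G) \ge \omega - 1$.

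For the upper bound I would exhibit a single ordering of bandwidth at most $\omega - 1$, namely a proper interval ordering $(v_1, \dots, v_n)$ of $G$ with $\sigma(v_i) = i$. The key step is the following claim: whenever $v_iv_k$ is an edge with $i < k$, the set $\{v_i, v_{i+1}, \dots, v_k\}$ induces a clique. To see this, take any two indices $p < q$ in $\{i, \dots, k\}$. Applying the ordering property to the edge $v_iv_k$ and the index $q \in \{i, \dots, k\}$ yields the edge $v_iv_q$; applying it again to $v_iv_q$ and the index $p \in \{i, \dots, q\}$ yields $v_pv_q$. Hence every pair in $\{v_i, \dots, v_k\}$ is adjacent.

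Granting the claim, an edge $v_iv_k$ with $i<k$ produces a clique of size $k - i + 1$, so $k - i + 1 \le \omega$ and therefore $|\sigma(v_i) - \sigma(v_k)| = k - i \le \omega - 1$. Thus the proper interval ordering witnesses $\mathrm{bw}(G) \le \omega - 1$, and combined with the lower bound this gives $\mathrm{bw}(G) = \omega - 1$. The only real content is the structural claim that an edge in a proper interval ordering spans a clique; once that observation is in place the bandwidth bound is immediate, so I expect no serious obstacle beyond setting up the double application of the ordering property carefully.
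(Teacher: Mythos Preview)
Your proof is correct. The paper itself does not prove this proposition; it simply attributes the result to Kaplan and Shamir~\cite{kaplan1996pathwidth} and states it without argument, so there is no ``paper's proof'' to compare against. Your two-inequality approach (the generic lower bound from any $\omega$-clique, and the upper bound by showing that in a proper interval ordering every edge spans a clique) is exactly the standard way to see this fact, and your double application of the ordering property to derive the clique is clean and complete.
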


\subparagraph{Ordered Hamiltonian Paths and Cycles}

A \emph{Hamiltonian path} of a graph $G$ is a path that contains all the vertices of $G$. A graph is \emph{traceable} if it has a Hamiltonian path. A \emph{Hamiltonian cycle} of a graph $G$ is a cycle that contains all the vertices of $G$. If a graph contains a Hamiltonian cycle, then the graph is \emph{Hamiltonian}. Here, we only consider \emph{ordered} Hamiltonian paths and cycles. For an ordered Hamiltonian path one of the two possible orderings of the path is fixed. For an ordered Hamiltonian cycle, we fix one start vertex and one of the two possible orderings. Given a partial order $\pi$ on a graph's vertex set, an ordered Hamiltonian path or an ordered Hamiltonian cycle is \emph{$\pi$-extending} if its order is a linear extension of $\pi$. Using these concepts we can define generalizations of the using precedence constraints on the vertices. For \textsc{Hamiltonian Path} this is straightforward.

\begin{problem}{\textsc{Partially Ordered Hamiltonian Path Problem} (POHPP)}
\begin{description}
\item[\textbf{Instance:}] A graph $G$, a partial order $\pi$ on the vertex set of $G$.
\item[\textbf{Question:}]
Is there an ordered Hamiltonian path $(v_1, \dots, v_n)$ in $G$ such that for all $i, j \in \{1,\dots,n\}$ it holds that if $(v_i,v_j) \in \pi$, then $i \leq j$?
 \end{description}
\end{problem}

For \textsc{Hamiltonian Cycle} we define the problem with precedence constraints as follows.

\begin{problem}{\textsc{Partially Ordered Hamiltonian Cycle Problem} (POHCP)}
\begin{description}
\item[\textbf{Instance:}] A graph $G$, a partial order $\pi$ on the vertex set of $G$.
\item[\textbf{Question:}]
Is there an ordered Hamiltonian path $(v_1, \dots, v_n)$ in $G$ such that $v_1$ and $v_n$ are adjacent and for all $i, j \in \{1,\dots,n\}$ it holds that if $(v_i,v_j) \in \pi$, then $i \leq j$?
 \end{description}
\end{problem}

The edge-weighted versions where we look for a Hamiltonian path or cycle with minimum total weight are called MinPOHPP and MinPOHCP.

For \textsc{Hamiltonian Path} and \textsc{Hamiltonian Cycle}, there exist graph classes where one of the problems is trivial while the other is \NP-hard. It is easy to show that \textsc{Hamiltonian Path} is \NP-complete on the graphs that are not Hamiltonian.\footnote{Take a graph $G$ and add a universal vertex $u$ and a leaf that is only adjacent to $u$. The resulting graph is not Hamiltonian and has a Hamiltonian path if and only if $G$ has a Hamiltonian path.} In contrast, \textsc{Hamiltonian Cycle} is \NP-complete on traceable graphs~\cite[Lemma~21.18]{korte2018combinatorial}. For the variants with precedence constraints, the latter result is not possible (unless $\P = \NP$) as we can solve POHCP using an algorithm for POHPP.

\begin{observation}\label{obs:cycle-to-path}
    Let $G$ be a graph with $m$ edges and $\pi$ be a partial order on the vertex set of $G$. We can solve (Min)POHCP for $(G,\pi)$ by solving $\O(m)$ instances of (Min)POHPP on $(G,\pi')$ for some partial orders $\pi'$.
\end{observation}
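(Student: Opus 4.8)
The plan is to reduce (Min)POHCP to (Min)POHPP by guessing the edge that closes the cycle. A solution of POHCP is an ordered Hamiltonian path $(v_1,\dots,v_n)$ whose endpoints are joined by some edge $e = v_1v_n \in E(G)$, and conversely any $\pi$-extending Hamiltonian path whose endpoints happen to be adjacent is such a solution. Hence it suffices to decide, for each candidate closing edge $e=uv$ and each of its two orientations, whether there is a $\pi$-extending Hamiltonian path that starts in $u$ and ends in $v$ (respectively starts in $v$ and ends in $u$).

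To phrase ``starts in $u$ and ends in $v$'' as a POHPP instance, I would obtain $\pi'$ from $\pi$ by forcing $u$ to be the global minimum and $v$ the global maximum; concretely, take the transitive closure of
\[
    \pi \;\cup\; \{(u,w) : w \in V(G)\setminus\{u\}\} \;\cup\; \{(w,v) : w \in V(G)\setminus\{v\}\}.
\]
If this relation is antisymmetric, it is a partial order $\pi'$ with $\pi \subseteq \pi'$, and in every linear extension $u$ occupies the first and $v$ the last position. Thus the $\pi'$-extending Hamiltonian paths are exactly the $\pi$-extending Hamiltonian paths that start in $u$ and end in $v$. If the relation is not antisymmetric, then $\pi$ already forces some vertex before $u$ or after $v$, so no admissible path with these endpoints exists and the case is discarded. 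Running over both orientations of all $m$ edges produces $2m = \O(m)$ instances, and $(G,\pi)$ is a yes-instance of POHCP if and only if at least one of them is a yes-instance of POHPP.

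For the weighted versions the same enumeration applies. For a fixed closing edge $e=uv$ and a fixed orientation, a Hamiltonian cycle closed by $e$ consists of a $\pi'$-extending Hamiltonian path from $u$ to $v$ together with the edge $e$, so the minimum weight of such a cycle equals the MinPOHPP optimum for $\pi'$ plus $w(e)$. Minimizing this quantity over all $\O(m)$ instances yields the MinPOHCP optimum, since every feasible cycle is captured by the single instance determined by its own closing edge and orientation.

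The only point requiring genuine care is the correctness of the $\pi'$ construction, which I would verify in detail: one has to confirm that the antisymmetry test identifies exactly the infeasible endpoint choices. This is routine, since the transitive closure fails to be antisymmetric precisely when $\pi$ contains a comparability incompatible with $u$ being minimal or $v$ maximal -- equivalently, precisely when no $\pi$-extending Hamiltonian path can place $u$ first and $v$ last -- so discarding these cases loses no solutions.
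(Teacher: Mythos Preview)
Your proposal is correct and follows essentially the same approach as the paper. The only cosmetic difference is that the paper restricts upfront to edges $xy$ with $x$ minimal and $y$ maximal in $\pi$, whereas you enumerate all oriented edges and discard those for which the augmented relation fails antisymmetry; as you observe yourself, these two filters are equivalent, so the arguments coincide.
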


\begin{proof}
    We consider every pair of adjacent vertices $x$ and $y$ where $x$ is minimal in $\pi$ and $y$ is maximal in $\pi$. Construct the partial order $\pi_{x,y}$ by making $x$ the unique minimal element and $y$ the unique maximal element in $\pi$. Every cost-minimal $\pi$-extending Hamiltonian cycle in $G$ contains a cost-minimal $\pi_{x,y}$-extending Hamiltonian path in $G$ for some pair $(x,y)$. 
\end{proof}

Unless $\P = \NP$, we cannot give the similar result for the reverse direction. This follows from the fact that POHPP is \NP-complete on graphs of \param{pathwidth}~4 while POHCP can be solved in polynomial time on that graph class (see \cref{cor:NPhard_pathwidth,thm:pw4}). However, if we are allowed to add a universal vertex, then we can reduce POHPP to POHCP.

\begin{observation}\label{obs:path-to-cycle}
    Let $G$ be a graph and $\pi$ be a partial order on the vertex set of $G$. Let $G'$ be the graph constructed from $G$ by adding a universal vertex $u$ with zero weight edges and let $\pi'$ be the partial order constructed from $\pi$ by making $u$ the unique maximal element. A $\pi$-extending Hamiltonian path $\cP$ in $G$ is cost-minimal if and only if $\cP \oplus u$ is a cost-minimal $\pi'$-extending Hamiltonian cycle in $G'$.
\end{observation}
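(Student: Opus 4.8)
The plan is to establish the biconditional by relating Hamiltonian paths in $G$ directly to Hamiltonian cycles in $G'$ via the map $\cP \mapsto \cP \oplus u$, where $\oplus$ denotes appending the universal vertex $u$ to the end of the path. First I would observe that because $u$ is universal in $G'$, the sequence $\cP \oplus u = (v_1, \dots, v_n, u)$ is always a valid ordered Hamiltonian path in $G'$ whenever $\cP = (v_1, \dots, v_n)$ is a Hamiltonian path in $G$: the only new adjacency required is $v_n u$, which exists since $u$ is adjacent to everything. Moreover, $v_1$ and $u$ are adjacent for the same reason, so $\cP \oplus u$ closes into an ordered Hamiltonian cycle in $G'$ (in the sense of the POHCP definition, a Hamiltonian path whose endpoints are adjacent).

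Next I would verify that the order constraints transfer correctly. Since $\pi'$ is obtained from $\pi$ by adding $u$ as the unique maximal element, any linear extension of $\pi'$ must place $u$ last; conversely, placing $u$ last and extending $\pi$ on the remaining vertices yields exactly the linear extensions of $\pi'$. Hence $\cP$ is $\pi$-extending if and only if $\cP \oplus u$ is $\pi'$-extending, and the latter automatically has $u$ as its final vertex, matching the forced position of $u$ in any $\pi'$-extending cycle of $G'$. This gives a bijection between $\pi$-extending Hamiltonian paths of $G$ and $\pi'$-extending Hamiltonian cycles of $G'$.

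The remaining point concerns costs. Since the two edges incident to $u$ in $G'$ (namely $v_n u$ and $u v_1$) both have zero weight by construction, the total weight of the cycle $\cP \oplus u$ equals the total weight of the path $\cP$. Because the bijection above preserves weight exactly, a path $\cP$ attains the minimum weight over all $\pi$-extending Hamiltonian paths of $G$ precisely when its image $\cP \oplus u$ attains the minimum weight over all $\pi'$-extending Hamiltonian cycles of $G'$, which is the claimed equivalence.

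The main obstacle, such as it is, lies not in any calculation but in making the correspondence watertight at the boundary: one must check that \emph{every} $\pi'$-extending Hamiltonian cycle of $G'$ arises as $\cP \oplus u$ for some $\pi$-extending path $\cP$ of $G$, i.e.\ that $u$ is necessarily the last element and that deleting it leaves a genuine Hamiltonian path of $G$ (not merely of $G'$). This follows because $u$ is the unique $\pi'$-maximal element, forcing it to occupy the final position of any $\pi'$-extending sequence, and because $u \notin V(G)$ its removal yields a spanning ordered path on $V(G)$ using only edges of $G$. Once this surjectivity is confirmed, the weight-preservation argument closes the proof.
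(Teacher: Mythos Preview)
Your argument is correct and complete. The paper states this as an observation without proof, treating it as self-evident; your write-up simply spells out the bijection $\cP \leftrightarrow \cP \oplus u$ together with the weight-preservation that the paper leaves implicit, so there is nothing to compare against.
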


Note that \cref{obs:path-to-cycle,obs:cycle-to-path} differ in two ways. First, \cref{obs:cycle-to-path} solves both problems on the same graph while in \cref{obs:path-to-cycle} the graphs of the two problems differ. The second difference concerns the reduction type. The procedure given in \cref{obs:path-to-cycle} is a \emph{Karp reduction} also known as polynomial-time many-one reduction, i.e., we only have to solve (Min)POHCP once to solve (Min)POHPP. In contrast, \cref{obs:cycle-to-path} presents a \emph{Cook reduction} that uses an oracle for (Min)POHCP a polynomial number of times to solve (Min)POHPP. \Cref{obs:cycle-to-path} implies that POHPP does not have a polynomial-time algorithm on some graph class $\G$ if POHCP does not have such an algorithm. However, the concept of \NP-completeness is normally considered using Karp reductions. Therefore, \cref{obs:cycle-to-path} cannot be used to prove \NP-completeness of POHPP on some graph class $\G$ if POHCP is \NP-complete on $\G$. In fact, it is conjectured that the set of problems that are \NP-complete using Karp reductions are a strict subset of those that are \NP-complete using Cook reductions (see, e.g., \cite{lutz1996cook,mandal2014separating}).

\Cref{obs:path-to-cycle} implies the following.

\begin{corollary}\label{cor:path-to-cycle}
    Let $\G$ be a graph class that is closed under the addition of universal vertices. (Min)POHPP is linear-time many-one reducible to (Min)POHCP on $\G$.
\end{corollary}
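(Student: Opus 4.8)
The plan is to read off \cref{cor:path-to-cycle} directly from \cref{obs:path-to-cycle}; the only thing left to check is that the construction there is a single linear-time many-one reduction. Given an instance $(G,\pi)$ of (Min)POHPP with $G\in\G$, I would build $(G',\pi')$ exactly as in \cref{obs:path-to-cycle}: attach a universal vertex $u$ via zero-weight edges and extend $\pi$ to $\pi'$ by making $u$ the unique maximal element. Since $\G$ is closed under adding universal vertices, $G'\in\G$, so $(G',\pi')$ is a valid (Min)POHCP-instance on $\G$. If $G$ has $n$ vertices and $m$ edges, the construction adds $n$ new edges and $n$ new comparabilities $(v,u)$ while copying the rest of $\pi$, so it runs in $\O(n+m)$ time, i.e.\ linear in the input size.

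For correctness I would simply invoke the bijection supplied by \cref{obs:path-to-cycle}: the map $\cP\mapsto\cP\oplus u$ takes cost-minimal $\pi$-extending Hamiltonian paths in $G$ to cost-minimal $\pi'$-extending Hamiltonian cycles in $G'$, and the inverse deletes the last vertex of any $\pi'$-extending cycle. Because the two cycle edges incident to $u$ have weight zero, cost is preserved in both directions; hence $(G,\pi)$ is a yes-instance of POHPP precisely when $(G',\pi')$ is a yes-instance of POHCP, and the two optima agree. This yields the reduction for both the decision and the minimization variants at once.

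The main point to verify --- and what distinguishes this from the Cook reduction in \cref{obs:cycle-to-path} --- is that a \emph{single} oracle call suffices. This hinges on the reverse map being well defined: every $\pi'$-extending Hamiltonian cycle must terminate at $u$, since $u$ is the unique maximal element of $\pi'$, and deleting $u$ then leaves a $\pi$-extending Hamiltonian path in $G$. Checking this one step explicitly is the only nontrivial obstacle; everything else is bookkeeping that \cref{obs:path-to-cycle} has already done.
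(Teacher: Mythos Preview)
Your proposal is correct and follows exactly the approach the paper intends: the paper states the corollary as an immediate consequence of \cref{obs:path-to-cycle} without further proof, and you have simply spelled out the linear-time construction and the closure hypothesis that make the single many-one call work. There is nothing to add.
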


As mentioned in the introduction, the POHPP was shown to be \NP-complete on the class of complete split graphs~\cite{beisegel2024computing}. As this class is closed under the addition of universal vertices, \cref{cor:path-to-cycle} implies the \NP-completeness of POHCP on that class.

\begin{corollary}
    POHCP is \NP-complete on complete split graphs.
\end{corollary}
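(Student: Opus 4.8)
The plan is to combine the known \NP-hardness of POHPP on complete split graphs \cite{beisegel2024computing} with the reduction of \cref{cor:path-to-cycle}; the only thing that genuinely needs checking is that complete split graphs form a class closed under the addition of universal vertices, so that the corollary applies. To see this, I would take a complete split graph $G$ with $V(G)$ partitioned into a clique $C$ and an independent set $I$ such that all edges between $C$ and $I$ are present, and let $G'$ arise from $G$ by adding a universal vertex $u$. Putting $C' = C \cup \{u\}$ and $I' = I$, the set $C'$ is a clique (as $u$ is adjacent to all of $C$), $I'$ is independent, and all edges between $C'$ and $I'$ are present (the former edges between $C$ and $I$ together with the edges joining $u$ to every vertex of $I$). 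Hence $G'$ is again a complete split graph.

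With the closure property in hand, \cref{cor:path-to-cycle} furnishes a linear-time many-one reduction from POHPP to POHCP on complete split graphs: an instance $(G,\pi)$ is sent to $(G',\pi')$ where $G'$ adds the universal vertex $u$ and $\pi'$ makes $u$ the unique maximal element, and by \cref{obs:path-to-cycle} this preserves solvability while the closure property keeps $G'$ inside the class. Since POHPP is \NP-hard on complete split graphs, so is POHCP.

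Finally I would note that POHCP lies in \NP for arbitrary inputs, hence in particular here: given an ordering $(v_1,\dots,v_n)$ as a certificate, one checks in polynomial time that each $v_iv_{i+1}$ is an edge, that $v_1v_n$ is an edge, and that $(v_i,v_j)\in\pi$ implies $i\le j$. Together with the hardness this yields \NP-completeness. I do not anticipate a real obstacle; the only subtle point is confirming that the many-one reduction of \cref{obs:path-to-cycle} still works for the unweighted decision version --- there the weights are immaterial and the universality of $u$ guarantees that the required closing adjacency $v_1v_n$ is always available, so mere existence of a solution is faithfully preserved.
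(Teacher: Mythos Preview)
Your proposal is correct and follows essentially the same approach as the paper: the paper likewise observes that complete split graphs are closed under the addition of universal vertices and then invokes \cref{cor:path-to-cycle} together with the known \NP-completeness of POHPP on complete split graphs. Your write-up merely makes explicit the easy closure argument and the membership in \NP, which the paper leaves implicit.
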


\section{Hardness}
In this section we show that POHPP and POHCP are \NP-complete for graphs with bounded bandwidth, pathwidth and treewidth. 
Containement in \NP is obvious and thus we only have to show hardness.
We first show that POHPP and POHCP are \NP-hard for interval graphs of clique number 5 and 6, respectively.
Then we focus on a very structured graph class, namely rectangular grid graphs and show that the problems remain hard if the height is bounded. 
We then extend the result to the MinPOHPP and MinPOHCP, showing \NP-hardness in rectangular grid graphs for even smaller heights.

We will use reductions from two problems that are defined on Boolean formulas. 
The first problem is the well-known 3-SAT problem. The other problem is \textsc{Monotone Weighted 2-SAT} that uses \emph{positive monotone} Boolean formulas, that is, Boolean formulas where all literals are positive. 

\begin{problem}[\textsc{Monotone Weighted 2-SAT}]
~
\begin{description}
\item[\textbf{Instance:}] A positive monotone Boolean formula \(\Phi\) in $2$-CNF, \(k\in \N\)
\item[\textbf{Question:}] Is there a satisfying assignment of \(\Phi\) that sets at most \(k\) variables to \texttt{true}.
\end{description}
\end{problem}

A simple reduction from \textsc{Vertex Cover} shows that \textsc{Monotone Weighted 2-SAT} is \NP-hard~\cite{porschen2007algorithms}.

For both problems, we will assume that \(\Phi\) consists of the variables \(x_1,\dots, x_n\) and the clauses \(c_1,\dots c_m\).
The literals are denoted by \(\ell_i^j\) where \(\ell_i^j\) is the \(j\)-th literal in clause \(c_i\).
Note that for an instance of 3-SAT the literals can be negated.
\subsection{Proper Interval Graphs}

While both \textsc{Hamiltonian Path} and \textsc{Hamiltonian Cycle} can be solved in linear time on interval graphs~\cite{damschke1993paths,keil1985finding}, Beisegel et al.~\cite{beisegel2024computing} showed that the POHPP is \NP-complete an interval graphs of unbounded \param{clique number}. Here, we extend this result to proper interval graphs of bounded \param{clique number}. Note that on proper interval graphs \textsc{Hamiltonian Path} is trivial as a proper interval graph is traceable if and only if it is connected~\cite{bertossi1983finding}.

\begin{theorem}\label{thm:np-interval}
     POHPP is \NP-complete on proper interval graphs of \param{clique number}~5.
\end{theorem}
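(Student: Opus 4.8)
The plan is to reduce from \textsc{Monotone Weighted 2-SAT} to POHPP on a proper interval graph of clique number~5. The instance consists of a positive monotone 2-CNF formula $\Phi$ with variables $x_1,\dots,x_n$ and clauses $c_1,\dots,c_m$ together with a budget $k$. I would construct a proper interval graph $G$ together with a partial order $\pi$ such that a $\pi$-extending Hamiltonian path exists if and only if $\Phi$ has a satisfying assignment setting at most $k$ variables to \texttt{true}. Since the bandwidth of a proper interval graph equals its clique number minus one (\cref{prop:proper-bandwidth}), keeping the clique number bounded by~5 is equivalent to ensuring the graph has a proper interval ordering in which every edge spans at most four positions; I would therefore design the gadgets directly as a linear "track" of vertices, where each vertex is adjacent only to a constant-width window of its neighbors along the track, which automatically yields both the proper interval structure and the clique-number bound.

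\emph{Construction and intended traversal.} The idea is to lay out a long path-like backbone along which the Hamiltonian path must travel, and to attach, for each variable and each clause, a small gadget of bounded width. A variable gadget would offer the path a binary choice encoding \texttt{true}/\texttt{false}, realized by two alternative local detours that a Hamiltonian path can take; the partial order $\pi$ forces the path to commit to exactly one detour per variable before moving on. A clause gadget would contain a vertex (or short segment) that can only be legally covered if at least one of its two literals was set to \texttt{true}, enforced through precedence constraints that tie the covering of a clause vertex to the traversal direction chosen in the corresponding variable gadgets. To encode the budget~$k$, I would add a counting mechanism: a collection of $k$ "token" vertices that the path may expend, where each \texttt{true} choice consumes one token, and the precedence constraints make the path infeasible if more than $k$ tokens are needed. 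The main structural work is to interleave variable gadgets, clause gadgets, and the counting apparatus along a single left-to-right track so that the whole graph remains a proper interval graph of clique number~5.

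\emph{Correctness.} For the forward direction, given a satisfying assignment with at most $k$ true variables, I would trace out the corresponding Hamiltonian path by taking the \texttt{true} or \texttt{false} detour in each variable gadget, covering each clause vertex via a literal witnessed true, and spending the appropriate number of tokens, then verify that this traversal order is a linear extension of $\pi$. For the reverse direction, I would argue that any $\pi$-extending Hamiltonian path is \emph{forced} by the precedence constraints to respect the backbone's left-to-right progression, to make a clean binary choice in each variable gadget, and to cover every clause vertex, so that reading off the variable choices yields a satisfying assignment whose number of true variables is bounded by the token budget~$k$.

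\emph{Main obstacle.} The hard part will be reconciling three competing demands simultaneously: the proper interval (bounded-width) layout, the clique-number bound of exactly~5, and the rigidity of the precedence constraints needed to force the path to behave. In particular, proper interval graphs are highly structured and traceability is nearly automatic, so all the combinatorial difficulty must be pushed into $\pi$; the challenge is to engineer local gadgets whose only $\pi$-extending Hamiltonian traversals correspond to valid SAT choices, while the bounded window forbids the "long-range" edges one would naively use to link a clause to its variables. I expect the crux to be the propagation mechanism that carries the truth value of a variable down the track to each clause containing it using only short edges and precedence arcs, without inflating the clique number beyond~5.
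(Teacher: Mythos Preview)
Your proposal has a genuine gap in the choice of source problem. You reduce from \textsc{Monotone Weighted 2-SAT}, which forces you to encode a \emph{budget} constraint (``at most $k$ variables true'') inside an \emph{unweighted} Hamiltonian path instance. Your token mechanism is not made concrete, and it is not clear how a Hamiltonian path---which must visit every vertex exactly once---can ``expend'' a variable number of tokens so that the instance becomes infeasible past~$k$, all within a proper interval graph of clique number~5. The paper uses \textsc{Monotone Weighted 2-SAT} only for the \emph{weighted} grid results (MinPOHPP), precisely because edge weights give counting for free; for the unweighted POHPP on proper interval graphs it reduces from plain \textsc{3-SAT}, where no counting is needed at all.

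You also overestimate the difficulty of propagation. The partial order~$\pi$ is completely unconstrained by the graph's adjacency: a constraint $x_i \prec \ell_j^k$ can link a variable vertex to a literal vertex arbitrarily far down the track without any edge between them, and this is exactly what the paper does (its constraints (P4)--(P5)). So ``carrying the truth value down the track'' costs nothing and is not the crux. The actual work lies where you do not look: arranging that on the first $s$--$t$ pass the path is \emph{forced} to pick exactly one of $x_i,\overline{x}_i$ per variable and at least one literal vertex per clause. The paper achieves this with a separate backbone path and extra pegs $s',t'$ whose precedence constraints ($s \prec t \prec s' \prec t'$, with the backbone entirely after~$t$) force three internally vertex-disjoint traversals through each variable block, so the three available vertices $x_i,\overline{x}_i,r_i$ are each consumed by a different traversal. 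Nothing in your sketch supplies an analogue of this forcing argument.
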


\begin{proof}
We present a reduction from \textsc{3-SAT} to  POHPP on proper interval graphs of clique number~5. We construct an instance $(G,\pi)$ of  POHPP (see \cref{fig:unit-interval}). We use the following subgraphs for $G$.

\begin{description}
    \item[Start Gadget] The start gadget $S$ consists of the two adjacent vertices $s$ and $s'$.
    \item[Variable Gadget] For each $x_1, \dots, x_n$, there is a variable gadget $X_i$ that consists of the adjacent vertices $x_i$ and $\overline{x}_i$. We call these vertices \emph{variable vertices}.
    \item[Clause Gadget] For each clause $c_i$, there is a clause gadget $C_i$ consisting of the vertices $a^1_i$, $a^2_i$, $\ell^1_i$, $\ell^2_i$, $\ell^3_i$, $b^1_i$, and $b^2_i$. The vertices $\ell^1_i$, $\ell^2_i$ and $\ell^3_i$ form a clique. Similarly $a^1_i$ and $a^2_i$ as well as $b^1_i$ and $b^2_i$ form cliques of size~2. The vertices $a^1_i$ and $b^1_i$ are adjacent to all of $\ell^1_i$, $\ell^2_i$, and $\ell^3_i$. Vertex $a^2_i$ is only adjacent to $\ell^2_i$ and $\ell^3_i$, while $b^2_i$ is only adjacent to $\ell^1_i$ and $\ell^3_i$ (see two examples given in the boxes with rounded-corners in \cref{fig:unit-interval}).
    \item[End Gadget] The end gadget $T$ consists of the two adjacent vertices $t$ and $t'$.
    \item[Backbone] The backbone $B$ is a path consisting of the vertices $\{r_i \mid 0 \leq i \leq n\}$ and $\{u_i,v_i,w_i \mid 1 \leq i \leq m\}$ in the following order: $(r_0, r_1 \dots, r_n, u_1, v_1, w_1, \dots, u_n, v_n, w_n)$.
\end{description}

To complete the construction of $G$, we explain how these gadgets are combined. We order the gadgets in the following way $S, X_1, \dots, X_n, C_1, \dots, C_m, T$. For every gadget, we define \emph{entry vertices} and \emph{exit vertices}. For start, variable and end gadgets, all vertices are entry and exit vertices. For clause gadgets, $a_i^1$ and $a_i^2$ are the entry vertices, while $b_i^1$ and $b_i^2$ are the exit vertices. The exit vertices of a gadget are completely adjacent to the entry vertices of the succeeding gadget. The backbone is connected to all other gadgets in the following way. Vertex $r_0$ is adjacent to $s$, $s'$, $x_1$ and $\overline{x}_1$. Vertex $r_i$ with $1 \leq i < n$ is adjacent to $x_i$, $\overline{x}_i$, $x_{i+1}$, and $\overline{x}_{i+1}$. The vertex $r_n$ is adjacent to $x_n$, $\overline{x}_n$ as well as $a_1^1$ and $a_1^2$. Vertex $u_i$ is adjacent to $a_i^1$, $a_i^2$ as well as $\ell_i^1$, $\ell_i^2$, and $\ell_i^3$. Vertex $v_i$ is adjacent to $\ell_i^1$, $\ell_i^2$, and $\ell_i^3$ as well as $b_i^1$ and $b_i^2$. Vertex $w_i$ is adjacent to $b_i^1$, $b_i^2$, $a_{i+1}^1$, and $a_{i+1}^2$ except from $w_n$ which is adjacent to $b_n^1$, $b_n^2$ as well as $t$ and $t'$.

To prove that $G$ is a proper interval graph, we consider the following vertex ordering:
\begin{align*}
\sigma = (s, s', r_0, x_1, \overline{x}_1, r_1, x_2, \dots, x_n, \overline{x}_n, r_n, a_1^2, a_1^1, u_1, \ell_1^2, \ell_1^3, \ell_1^1, v_1, b_1^1, b_1^2, w_1, a_2^2, a_2^1,\\ \dots, b_m^1, b_m^2, w_m, t, t').
\end{align*}
It can easily be checked that this ordering is a proper interval ordering. Furthermore, the \param{bandwidth} of $\sigma$ is 4. Due to \cref{prop:proper-bandwidth}, the \param{clique number} of $G$ is at most~5.

The partial order $\pi$ on the vertex set of $G$ is the reflexive transitive closure of the relation containing the following constraints:

\begin{enumerate}[(P1)]
    \item $s \prec v$ for every $v \in V(G) \setminus \{s\}$,\label{ui:p1}
    \item $t \prec v$ for every vertex $v$ in the backbone $B$,\label{ui:p2}
    \item $t \prec s' \prec t'$,\label{ui:p3}
    \item $x_i \prec \ell_j^k$ if the $k$-th literal in $c_j$ is $x_i$,\label{ui:p4}
    \item $\overline{x}_i \prec \ell_j^k$ if the $k$-th literal in $c_j$ is $\overline{x}_i$.\label{ui:p5}
\end{enumerate}

\begin{figure}
    \centering
        \begin{tikzpicture}
        \footnotesize

        \begin{scope}

                \draw[rounded corners, fill, color=lightgray!40!white] (4.75,2.1) rectangle (7.25,-1.12);
        \draw[rounded corners, fill, color=lightgray!40!white] (10.25,2.1) rectangle (12.75,-1.12);
        \node[vertex, label=90:$s$] (s) at (0,1) {};
        \node[vertex, label=-90:$s'$] (s') at (0,0) {};
        \node[vertex, label=90:$x_1$] (x1) at (1,1) {};
        \node[vertex, label=-90:$\overline{x}_1$] (nx1) at (1,0) {};
        \node[vertex, label=90:$x_2$] (x2) at (2,1) {};
        \node[vertex, label=-90:$\overline{x}_2$] (nx2) at (2,0) {};
        \node[vertex, label=90:$x_n$] (xn) at (4,1) {};
        \node[vertex, label=-90:$\overline{x}_n$] (nxn) at (4,0) {};

        \draw (s) -- (x1) -- (nx1) -- (s);
        \draw (s') -- (x1) -- (nx1) -- (s') -- (s);
        \draw (x1) -- (x2) --
               (nx2) -- (nx1) -- (x2) -- (nx1) -- (x1) -- (nx2) -- (x2);
        \draw (x2) --+ (0.5, 0);
        \draw (nx2) --+ (0.5, 0);
        \draw (x2) --+ (0.5, -0.5);
        \draw (nx2) --+ (0.5, 0.5);
        
        \node at (3,0.5) {$\dots$};
        \node at (3,2.5) {$\dots$};

        \node at (9.25,0.5) {$\dots$};
        \node at (9.25,2.5) {$\dots$};

        \draw (xn) -- (nxn);
        \draw (xn) --+ (-0.5,0);
        \draw (nxn) --+ (-0.5, 0);
        \draw (xn) --+ (-0.5, -0.5);
        \draw (nxn) --+ (-0.5, 0.5);

        \node[vertex, label=90:$a_1^1$] (a11) at (5,1) {};
        \node[vertex, label=-90:$a_1^2$] (a12) at (5,0) {};

        \draw(xn) -- (a11) -- (a12) -- (nxn) -- (a11) -- (xn) -- (a12);

        \node[vertex, lightgray!50!gray, label=90:$r_0$] (r0) at (0.5,2.5) {};
        \draw (r0) --+ (0.05,-0.35);
        \draw (r0) --+ (0.125,-0.35);
        \draw (r0) --+ (-0.125,-0.35);
        \draw (r0) --+ (-0.05,-0.35);
        \node[vertex, lightgray!50!gray, label=90:$r_1$] (r1) at (1.5,2.5) {};
        \draw (r1) --+ (0.05,-0.35);
        \draw (r1) --+ (0.125,-0.35);
        \draw (r1) --+ (-0.125,-0.35);
        \draw (r1) --+ (-0.05,-0.35);
        \node[vertex, lightgray!50!gray, label=90:$r_2$] (r2) at (2.5,2.5) {};
        \draw (r2) --+ (0.05,-0.35);
        \draw (r2) --+ (0.125,-0.35);
        \draw (r2) --+ (-0.125,-0.35);
        \draw (r2) --+ (-0.05,-0.35);
        
        \draw (r0) -- (r1) -- (r2);
        
        \end{scope}

        \begin{scope}
        \node[vertex, label=90:{$\ell_1^1$}] (x11) at (6,1.5) {};
        \node[vertex, label=-4:$\ell_1^2$] (x12) at (6,0.5) {};
        \node[vertex, label=-90:$\ell_1^3$] (x13) at (6,-0.5) {};
        \node[vertex, label=90:$b_1^1$] (C11) at (7,1) {};
        \node[vertex, label=-90:$b_1^2$] (C12) at (7,0) {};

        \draw (a11) -- (a12);
        \draw (a12) -- (x12);
        \draw (a12) -- (x13);
        \draw (a11) -- (x11);
        \draw (a11) -- (x12);
        \draw (a11) -- (x13);
        
        \draw (x11) -- (x12) -- (x13);
        \draw[bend left] (x13) to (x11);

        \node[vertex, label=90:$a_2^1$] (a21) at (8,1) {};
        \node[vertex, label=-90:$a_2^2$] (a22) at (8,0) {};

        \draw (C11) -- (C12) -- (a21) -- (a22) -- (C11) -- (a21);
        \draw (C12) -- (a22);
        
        \draw (C11) -- (x11);
        \draw (C11) -- (x12);
        \draw[bend angle=15, bend left] (C11) to (x13);
        \draw[bend angle=15, bend right]  (C12) to (x11);
        \draw (C12) -- (x13);
        
        \draw (a21) --+ (0.5,0.25);
        \draw (a21) --+ (0.5,-0.25);
        \draw (a21) --+ (0.5,-0.75);
        \draw (a22) --+ (0.5,-0.25);
        \draw (a22) --+ (0.5,0.25);
        
        \begin{scope}[xshift=-2.5cm]
        \node[vertex, label=90:$a_{m}^1$] (am1) at (13,1) {};
        \node[vertex, label=-90:$a_{m}^2$] (am2) at (13,0) {};
        \node[vertex, label=90:$\ell_{m}^1$] (xm1) at (14,1.5) {};
        \node[vertex, label=-4:$\ell_{m}^2$] (xm2) at (14,0.5) {};
        \node[vertex, label=-90:{$\ell_{m}^3$}] (xm3) at (14,-0.5) {};
        \node[vertex, label=90:$b_m^1$] (cm1) at (15,1) {};
        \node[vertex, label=-90:$b_m^2$] (cm2) at (15,0) {};
        \node[vertex, label=90:$t$] (t) at (16,1) {};
        \node[vertex, label=-90:$t'$] (t') at (16,0) {};

        \node[vertex, lightgray!50!gray, label=90:$w_{m-1}$] (wm-1) at (12.5,2.5) {};
        \draw (wm-1) --+ (0.05,-0.35);
        \draw (wm-1) --+ (0.125,-0.35);
        \draw (wm-1) --+ (-0.125,-0.35);
        \draw (wm-1) --+ (-0.05,-0.35);
        \node[vertex, lightgray!50!gray, label=90:$u_m$] (um) at (13.5,2.5) {};
        \draw (um) --+ (-0.05,-0.35);
        \draw (um) --+ (-0.125,-0.35);
        \draw (um) --+ (0.125,-0.35);
        \draw (um) --+ (0.0825,-0.35);
        \draw (um) --+ (0.04167,-0.35);
        \node[vertex, lightgray!50!gray, label=90:$v_m$] (vm) at (14.5,2.5) {};
        \draw (vm) --+ (0.05,-0.35);
        \draw (vm) --+ (0.125,-0.35);
        \draw (vm) --+ (-0.125,-0.35);
        \draw (vm) --+ (-0.0825,-0.35);
        \draw (vm) --+ (-0.04167,-0.35);
        \node[vertex, lightgray!50!gray, label=90:$w_m$] (wm) at (15.5,2.5) {};
        \draw (wm) --+ (0.05,-0.35);
        \draw (wm) --+ (0.125,-0.35);
        \draw (wm) --+ (-0.125,-0.35);
        \draw (wm) --+ (-0.05,-0.35);

        \draw (wm-1) -- (um) -- (vm) -- (wm);
        \end{scope}
  
        \draw (am1) --+ (-0.5,0.0);
        \draw (am1) --+ (-0.5,-0.5);
    
        \draw (am2) --+ (-0.5,-0.0);
        \draw (am2) --+ (-0.5,0.5);
        
        \draw (am1) -- (xm1);
        \draw (am1) -- (xm2);
        \draw (am1) -- (xm3);
        \draw (am2) -- (xm2);
        \draw (am2) -- (xm3);
        \draw (am2) -- (am1);

        \draw (xm1) -- (xm2) -- (xm3);
        \draw[bend left] (xm3) to (xm1);

        \draw (cm1) -- (xm1);
        \draw (cm1) -- (xm2);
        \draw[bend angle=15, bend left] (cm1) to (xm3);
        \draw[bend angle=15, bend right] (cm2) to (xm1);
        \draw (cm2) -- (xm3);

        \draw (cm2) -- (cm1) -- (t) -- (cm2) -- (t') -- (t);
        \draw (cm1) -- (t');

        \node[vertex, lightgray!50!gray, label=90:$r_{n-1}$] (rn-1) at (3.5,2.5) {};
        \draw (rn-1) --+ (0.05,-0.35);
        \draw (rn-1) --+ (0.125,-0.35);
        \draw (rn-1) --+ (-0.125,-0.35);
        \draw (rn-1) --+ (-0.05,-0.35);
        \node[vertex, lightgray!50!gray, label=90:$r_n$] (rn) at (4.5,2.5) {};
        \draw (rn) --+ (0.05,-0.35);
        \draw (rn) --+ (0.125,-0.35);
        \draw (rn) --+ (-0.125,-0.35);
        \draw (rn) --+ (-0.05,-0.35);
        \node[vertex, lightgray!50!gray, label=90:$u_1$] (u1) at (5.5,2.5) {};
        \draw (u1) --+ (-0.05,-0.35);
        \draw (u1) --+ (-0.125,-0.35);
        \draw (u1) --+ (0.125,-0.35);
        \draw (u1) --+ (0.0825,-0.35);
        \draw (u1) --+ (0.04167,-0.35);
        \node[vertex, lightgray!50!gray, label=90:$v_1$] (v1) at (6.5,2.5) {};
        \draw (v1) --+ (0.05,-0.35);
        \draw (v1) --+ (0.125,-0.35);
        \draw (v1) --+ (-0.125,-0.35);
        \draw (v1) --+ (-0.0825,-0.35);
        \draw (v1) --+ (-0.04167,-0.35);
        \node[vertex, lightgray!50!gray, label=90:$w_1$] (w1) at (7.5,2.5) {};
        \draw (w1) --+ (-0.05,-0.35);
        \draw (w1) --+ (-0.125,-0.35);
        \draw (w1) --+ (0.125,-0.35);
        \draw (w1) --+ (0.05,-0.35);
        \node[vertex, lightgray!50!gray, label=90:$u_2$] (u2) at (8.5,2.5) {};
        \draw (u2) --+ (-0.05,-0.35);
        \draw (u2) --+ (-0.125,-0.35);
        \draw (u2) --+ (0.125,-0.35);
        \draw (u2) --+ (0.0825,-0.35);
        \draw (u2) --+ (0.04167,-0.35);
        
        \draw (rn-1) -- (rn) -- (u1) -- (v1) -- (w1) -- (u2);

        \draw[very thick, color=variables] (s) -- (x1)--(nx2) --+ (0.5, 0.5);
        \draw[very thick, color=variables] (nxn) --+(-0.5,0);
        \draw[very thick, color=variables] (nxn) -- (a11);
        \draw[very thick, color=clauses](a11) -- (x12)-- (C11) --(a21)  --+ (0.5,0.25);
        \draw[very thick, color=clauses] (am1) --+(-0.5,0);
        \draw[very thick, color=clauses] (am1) -- (xm3);
        \draw[very thick, color=clauses,bend angle=15, bend right] (xm3) to (cm1);
        \draw[very thick, color=clauses] (cm1) --(t);

        \draw[very thick, color=backbone](t)--(wm)--(vm)--(um)--(wm-1);
        \draw[very thick, color=backbone](u2)--(w1)--(v1) -- (u1) -- (rn) -- (rn-1);
        \draw[very thick, color=backbone](r2)--(r1)--(r0) -- (s');
        \draw[very thick, color=variables2](s')--(nx1) -- (x2) --+ (0.5,0);
        \draw[very thick, color=variables2](xn)--+(-0.5,0);
        \draw[very thick, color=variables2](xn)--(a12);
        \draw[very thick, color=clauses2](a12)--(x13);
        \draw[very thick, color=clauses2, bend left](x13) to (x11);
        \draw[very thick, color=clauses2, bend angle = 15, bend left](x11) to (C12);
        \draw[very thick, color=clauses2](C12)--(a22)  --+ (0.5,0.25);
        \draw[very thick, color=clauses2] (am2) --+(-0.5,0);
        \draw[very thick, color=clauses2](am2)--(xm2) --(xm1);
        \draw[very thick, color=clauses2, bend angle = 15, bend left](xm1) to (cm2);
        \draw[very thick, color=clauses2](cm2)--(t');
        
        \end{scope}
    \end{tikzpicture}
    \caption{Complete construction of the proof of \cref{thm:np-interval}. The boxes mark the clause gadgets. Gray vertices are forced to be to the right of $t$ in $\pi$.
    The thick path is a Hamiltonian path. The blue part visits the variable vertices whose corresponding literals are satisfied, the red part traverses through the vertices of satisfied literals, the orange part goes back to $s'$, and the light and dark green parts visit the remaining variable and clause vertices, respectively.}
    \label{fig:unit-interval}
\end{figure}
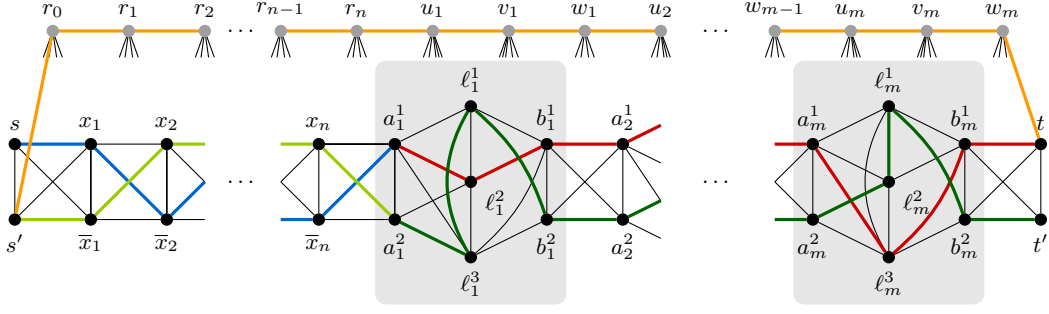

It remains to show that the instance of POHPP described above has a $\pi$-extending Hamiltonian path if and only if the given formula \(\Phi\) has a satisfying assignment.

\proofsubparagraph{From a $\pi$-extending path to a satisfying assignment} Let $\cP$ be a $\pi$-extending Hamiltonian path of $G$. Due to \pef{ui:p1}, the path $\cP$ starts with $s$. Let $\cP_{s,t}$ be the subpath of $\cP$ starting in $s$ and ending in $t$. 

\begin{claim}
    For any $i \in \{1,\dots,n\}$, $\cP_{s,t}$ contains exactly one of the two vertices $x_i$ and $\overline{x}_i$.  
\end{claim}

\begin{claimproof}
    Due to \pef{ui:p2}, the path $\cP_{s,t}$ cannot contain any of the vertices in the backbone. Any path between $s$ and $t$ that does not contain those vertices has to contain at least one of the variable vertices $x_i$ and $\overline{x}_i$ for every $i \in \{1,\dots,n\}$.

    Let $\cP_{t,s'}$ be the subpath of $\cP$ between $t$ and $s'$ and $\cP_{s',t'}$ be the subpath of $\cP$ between $s'$ and $t'$. Due to \pef{ui:p1} and \pef{ui:p3}, vertex $s$ is to the left of $t$ in $\cP$, $t$ is to the left of $s'$ in $\cP$ and $s'$ is to the left of $t'$ in $\cP$. Therefore, $\cP_{s,t}$, $\cP_{t,s'}$, and $\cP_{s',t'}$ do not share an inner vertex. All three paths have to use one of the vertices $x_i$, $\overline{x}_i$ and $r_i$. Hence, $\cP_{s,t}$ cannot contain both $x_i$ and $\overline{x}_i$.
\end{claimproof}

Due to this claim, we can define an assignment $\A$ of $\Phi$ in the following way: A variable $x_i$ is set to true in $\A$ if and only if the vertex $x_i$ is contained in $\cP_{s,t}$. Since $\cP_{s,t}$ cannot contain any of the $u_i$ or $v_i$, it is clear that $\cP_{s,t}$ must contain at least one of the vertices $\ell_i^1$, $\ell_i^2$, and $\ell_i^3$ for any $i \in \{1,\dots,m\}$. As those vertices can only be visited by $\cP$ if their respective variable vertex is to the left of them in $\cP$, it follows that $\A$ is an satisfying assignment of $\Phi$.

\proofsubparagraph{From a satisfying assignment to a $\pi$-extending Hamiltonian path} Let $\A$ be a satisfying assignment of $\Phi$. We define the ordered Hamiltonian path $\cP$ as follows (see color scheme in \cref{fig:unit-interval}). We start in $s$. Then we successively visit the variable vertices in such way that we visit $x_i$ if the variable $x_i$ is set to true in $\A$ and otherwise we visit $\overline{x}_i$ (blue). Afterwards, we visit $a_1^1$. As $\A$ is a satisfying assignment, there is at least one of the vertices $\ell_1^1$, $\ell_1^2$, and $\ell_1^3$ that can be visited as the next vertex. We choose one of these vertices as next vertex and then visit $b_1^1$. We repeat this process for each $i \in \{2, \dots, m\}$. Afterwards, we visit vertex $t$ (red). As the backbone vertices were only restricted by \pef{ui:p2}, we now can use the complete backbone in decreasing order to reach vertex $s'$ (orange). Next we visit all the remaining variable vertices in increasing order (light green).
In the clause gadget for \(c_1\), we first visit $a_1^2$. 
Then, we visit all previously unvisited literal vertices in an order that visits \(\ell_1^2\) or \(\ell_1 ^3\) first and \(\ell_1^1\) or \(\ell_1^3\) last, followed by \(b_1^2\).
We repeat this procedure for all $i \in \{2, \dots, m\}$ and finally visit $t'$ (dark green). The resulting path is a $\pi$-extending Hamiltonian path of $G$.
\end{proof}

\Cref{obs:path-to-cycle} implies directly the \NP-hardness of the POHCP on interval graphs of \param{clique number}~6. In contrast, proper interval graphs are not closed under the addition of universal vertices. However, we can slightly adapt the proof of \cref{thm:np-interval} to extend the cycle result to proper interval graphs

\begin{theorem}\label{thm:np-interval-cycle}
    POHCP is \NP-complete on proper interval graphs of \param{clique number}~6.
\end{theorem}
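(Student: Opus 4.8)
The natural approach is a reduction from \textsc{3-SAT} that reuses almost verbatim the gadgets and the precedence constraints \pef{ui:p1}--\pef{ui:p5} of the construction in the proof of \cref{thm:np-interval}. There the $\pi$-extending Hamiltonian path is forced to \emph{start} in the unique minimum $s$ by \pef{ui:p1}, but it may \emph{end} in any of the many maximal elements of $\pi$ (concretely it ends in $t'$). To turn the instance into a POHCP instance, I would force the path to end in a vertex that is adjacent to $s$, so that the two endpoints of \emph{every} $\pi$-extending Hamiltonian path are automatically adjacent; such a path is then precisely a solution of POHCP on the same graph. Since the only neighbours of $s$ lie in $\{s', r_0, x_1, \overline{x}_1\}$, the forced endpoint must be among them.

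The obstacle is geometric: in a proper interval ordering the leftmost and the rightmost vertex are never adjacent, and the path of \cref{thm:np-interval} reaches its far right end $t'$, which sits at the opposite end of the ordering from $s$. One cannot simply reverse the last, rightward sweep from $s'$ to $t'$, because \pef{ui:p4} and \pef{ui:p5} force each literal vertex $\ell_j^k$ to be visited \emph{after} its variable vertex, whereas a leftward sweep would traverse the clause gadgets before the corresponding variable gadgets. I would therefore add a second backbone $B'$, a path of fresh vertices running parallel to $B$ whose rightmost vertex is adjacent to $t'$ and whose leftmost vertex $z$ is adjacent to $s$, together with two new families of constraints: $t' \prec v$ for every $v \in V(B')$ (in analogy to \pef{ui:p2}) and $v \prec z$ for every $v \neq z$, making $z$ the unique maximum. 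This equips the path with a fourth, leftward sweep: after the three sweeps of \cref{thm:np-interval} have led from $s$ to $t'$ and covered the satisfiability part, the path runs back along $B'$ from right to left and finishes in $z$, so that $v_1 = s$ and $v_n = z$ are adjacent. As $B'$ carries only ``neutral'' vertices and is forced entirely after $t'$, the new sweep respects \pef{ui:p4} and \pef{ui:p5}.

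It then remains to verify three things, each in close analogy to \cref{thm:np-interval}. First, one exhibits a proper interval ordering of the extended graph by interleaving $B'$ into the old ordering $\sigma$ and checks that its bandwidth is $5$; \cref{prop:proper-bandwidth} then yields \param{clique number}~$6$. Second, for the forward direction one extends the Hamiltonian path produced from a satisfying assignment by the $B'$-sweep, obtaining a $\pi$-extending Hamiltonian path with adjacent endpoints. Third, for the backward direction one argues, exactly as in the claim of \cref{thm:np-interval}, that \pef{ui:p1}--\pef{ui:p3} together with $t' \prec v$ for $v \in V(B')$ keep $B'$ out of the three subpaths $\cP_{s,t}$, $\cP_{t,s'}$, $\cP_{s',t'}$ (all of $B'$ lies behind $t'$), so that the three-sweep structure on $V(G)\setminus V(B')$ is preserved and a POHCP solution still induces a satisfying assignment. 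The main point to get right is the interface between $t'$ and $B'$ and the interleaving of $B'$ into $\sigma$, since this is what controls the bandwidth and hence the clique number (see \cref{fig:unit-interval} for the original layout); everything else transfers from the path construction.
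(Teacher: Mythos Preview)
Your high-level plan coincides with the paper's: add a second backbone $B'$, force all of it after $t'$ in $\pi$, and use it for a fourth sweep back to a neighbour of $s$. The paper realises $B'$ as a vertex-by-vertex copy of $B$: each $x' \in B'$ inherits all neighbours of $x$ in $G - B$ and is additionally made adjacent to $x$ and to the successor of $x$ in $B$; inserting $x'$ immediately after $x$ in $\sigma$ then gives a proper interval ordering of bandwidth~$5$.

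There is, however, a gap in how you specify $B'$. You attach it to $G$ only through the two edges $zs$ and $b'_L t'$. The resulting graph is not even chordal, hence not proper interval: the path $B'$ together with any induced $s$--$t'$ path inside $G$ forms a long induced cycle. Put differently, in a proper interval ordering every vertex lying between the endpoints of an edge must be adjacent to both of them; an internal $B'$ vertex, having no edges into the gadgets, therefore cannot sit anywhere between $s$ and $t'$ in the ordering, yet $z$ must be placed near $s$ and the other end of $B'$ near $t'$. So the ``interleaving'' you defer cannot succeed without wiring the interior of $B'$ into the gadgets, which is precisely what the paper's construction does. Your extra constraint making $z$ the unique maximum is harmless but not needed; the paper manages without it.
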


\begin{proof}
    We construct the graph $G'$ by adding another backbone path $B'$ to the graph $G$ constructed in the proof of \cref{thm:np-interval} by introducing for every vertex $x$ of $B$ a copy $x'$. That is, $B'$ contains the vertices  $\{r'_i \mid 0 \leq i \leq n\}$ and $\{u'_i,v'_i,w'_i \mid 1 \leq i \leq m\}$ in the following order: $(r'_0, r'_1 \dots, r'_n, u'_1, v'_1, w'_1,  \dots,\allowbreak u'_n, v'_n, w'_n)$. 

    The vertex $x' \in B'$ has the same neighbors in $G - B$ as the vertex $x$. Furthermore, vertex $x'$ is adjacent to $x$ and to the successor of $x$ in $B$.

    We first claim that the resulting graph is a proper interval graph of \param{clique number}~6. To this end, we consider the following vertex ordering:
    \begin{align*}
    \sigma = (s, s', r_0, r'_0, x_1, \overline{x}_1, r_1, r'_1, x_2, \dots, x_n, \overline{x}_n, r_n, r'_n, a_1^2, a_1^1, u_1, u'_1, \ell_1^2, \ell_1^3, \ell_1^1, v_1, v'_1, \\ b_1^1, b_1^2, w_1, w'_1, a_2^2, a_2^1, \dots, b_m^1, b_m^2, w_m, w'_m, t, t').
    \end{align*}
    The ordering $\sigma$ forms a proper interval ordering of \param{bandwidth}~5. Similar as in \cref{thm:np-interval}, this proves that the graph is a proper interval graph of \param{clique number}~6, due to \cref{prop:proper-bandwidth}.

    For every element $x'$ of $B'$, we add a constraint to $\pi$ that forces $x'$ to be to the right of $t'$. The proof of the correctness of the reduction follows the same lines as the proof of \cref{thm:np-interval}. To construct a $\pi$-extending Hamiltonian cycle, we use the path $B'$ in reversed order to go from $t'$ to $s$.
\end{proof}

Combining the last two theorems with \cref{prop:proper-bandwidth}, we can conclude the following.

\begin{theorem}\label{cor:NPhard_pathwidth} The following statements are true.
\begin{enumerate}
    \item POHPP is \NP-complete on graphs of \param{bandwidth}, \param{pathwidth} or \param{treewidth}~$4$.
    \item POHCP is \NP-complete on graphs of \param{bandwidth}, \param{pathwidth} or \param{treewidth}~$5$.
\end{enumerate}
\end{theorem}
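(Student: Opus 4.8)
The plan is to obtain both statements as a short corollary of \cref{thm:np-interval,thm:np-interval-cycle} by computing the three width parameters of the hard instances exactly. Since membership in \NP{} is clear (an ordered Hamiltonian path is a polynomial certificate for all four problems, as noted at the start of the section), only the hardness has to be transferred, and for this it suffices to pin down \param{bandwidth}, \param{pathwidth}, and \param{treewidth} of the proper interval graphs produced by the two reductions.

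First I would fix, for the path case, the graph $G$ constructed in \cref{thm:np-interval}; it is a proper interval graph of \param{clique number}~$5$. By \cref{prop:proper-bandwidth} its \param{bandwidth} equals $5-1=4$. Using the parameter inequalities recalled in the preliminaries, namely $\param{treewidth} \le \param{pathwidth} \le \param{bandwidth}$, this already gives $\param{pathwidth} \le 4$ and $\param{treewidth} \le 4$. For the matching lower bound I would invoke the standard fact that a graph containing a clique of size $k$ has \param{treewidth} at least $k-1$ (in any tree decomposition some bag must contain the whole clique). As $G$ has a clique of size $5$, its \param{treewidth} is at least $4$. Combining the two bounds squeezes all three parameters to exactly~$4$, so the reduction of \cref{thm:np-interval} proves statement~1.

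The cycle case is identical up to the single numerical shift from clique number $5$ to $6$: the graph $G'$ of \cref{thm:np-interval-cycle} is a proper interval graph of \param{clique number}~$6$, so \cref{prop:proper-bandwidth} gives \param{bandwidth}~$5$, the clique of size $6$ forces \param{treewidth} $\ge 5$, and the inequality chain again collapses \param{treewidth}, \param{pathwidth}, and \param{bandwidth} to~$5$, yielding statement~2. I do not expect a genuine obstacle here; the only point that needs care is that the \param{clique number} of the constructed graphs is exactly (and not merely at most) $5$ respectively $6$, so that the lower and upper bounds coincide. This is already implicit in the two theorems, whose proofs both exhibit a clique of the required size and bound from above the \param{bandwidth} of an explicit proper interval ordering.
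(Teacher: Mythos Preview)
Your argument is correct and matches the paper's one-line derivation: combine \cref{thm:np-interval,thm:np-interval-cycle} with \cref{prop:proper-bandwidth} and the chain $\param{treewidth}\le\param{pathwidth}\le\param{bandwidth}$. The only difference is that you additionally pin down the \emph{lower} bounds on the three parameters via the clique argument; the paper does not bother with this, since ``graphs of \param{treewidth}~$4$'' is read as ``\param{treewidth} at most~$4$'', and hardness on the subclass of \param{bandwidth}-$4$ graphs automatically transfers to the larger \param{pathwidth}-$4$ and \param{treewidth}-$4$ classes. So your extra step is harmless but unnecessary.
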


We will show in \cref{sec:algo} that these results are tight (if $\P \neq \NP$) for at least \param{bandwidth} and \param{pathwidth} by showing that POHPP can be solved in polynomial time on graphs of \param{pathwidth}~3 and POHCP can be solved in polynomial time on graphs of \param{pathwidth}~4. This tightness can be formulated even stronger as adding one or two edges is enough to let the problems' complexity switching from easy to hard.

\begin{corollary}
    For every $k,\ell \in \N$, let $\B^\ell_k$ be the class of graphs $G$ that contain a set $F$ of $\ell$ edges such that $G - F$ has \param{bandwidth}~$k$. Then it holds:
    \begin{enumerate}
    \item POHPP is \NP-complete on $\B^1_3$.
    \item POHCP is \NP-complete on $\B^1_4$ and on $\B^2_3$.
\end{enumerate}
\end{corollary}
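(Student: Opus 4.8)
The plan is to establish each of the three statements by giving a reduction whose output graph provably lies in the respective class; since membership in \NP{} is clear, only hardness is at stake. The constructions behind \cref{thm:np-interval,thm:np-interval-cycle} cannot be reused directly here: they produce proper interval graphs of \param{clique number}~5 and~6, and their clause gadgets contain $m$ pairwise vertex-disjoint maximum cliques. Hence deleting a bounded set $F$ of edges destroys at most a bounded number of these cliques and never lowers the \param{bandwidth}, so these graphs lie in none of $\B^1_3$, $\B^1_4$, $\B^2_3$. I therefore aim for reductions in which the entire excess over the easy \param{bandwidth} regime is carried by just one edge.

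For POHPP on $\B^1_3$, I would reduce from \textsc{Monotone Weighted 2-SAT}. Because its clauses have only two literals, the clause gadget needs a maximum clique of size~4 rather than~5, so the variable part together with all clause gadgets can be laid out along a single ordering $\sigma$ of \param{bandwidth}~3; on their own they form a polynomially solvable instance. The remaining ingredient, the cardinality bound ``at most~$k$ true variables'', is what I would make responsible for the hardness, encoding it by one edge~$e$ that is the unique edge of span~4 in $\sigma$, every other edge having span at most~3. The partial order $\pi$ would be set up so that the presence of $e$ forces a $\pi$-extending Hamiltonian path to couple the true-variable selections with a budget region of size~$k$, so that such a path exists exactly when $\Phi$ has a satisfying assignment of weight at most~$k$. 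Then $G-e$ respects $\sigma$ and hence has \param{bandwidth}~3, giving $G\in\B^1_3$, while $G$ itself has \param{bandwidth}~4. The equivalence proof would mirror \cref{thm:np-interval}: from a $\pi$-extending path I read off an assignment, using $\pi$ to force one choice per variable gadget and one satisfied literal per clause, and conversely I realize any good assignment as an explicit path.

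The two cycle statements would then follow from the path construction with little extra effort. I would design the POHPP instance to have a unique $\pi$-minimal vertex~$s$ and a unique $\pi$-maximal vertex~$t$, so that every $\pi$-extending Hamiltonian path starts in~$s$ and ends in~$t$. Adding the single edge~$st$ makes the endpoints of every such path adjacent and, conversely, forces the endpoints of any $\pi$-extending path with adjacent ends to be $s$ and $t$; this is a \param{bandwidth}-preserving alternative to the universal-vertex reduction of \cref{cor:path-to-cycle}, showing that $(G+st,\pi)$ is a yes-instance of POHCP precisely when $(G,\pi)$ is a yes-instance of POHPP. Now $(G+st)-\{e,st\}=G-e$ has \param{bandwidth}~3, so $G+st\in\B^2_3$; and since $\sigma$ realizes $G$ with maximum span~4 (only $e$ reaching span~4), deleting just the closing edge~$st$ leaves the \param{bandwidth}-4 graph~$G$, so $G+st\in\B^1_4$.

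The crux is the first reduction: I must lay the variable and clause gadgets out in \param{bandwidth}~3 and encode the global cardinality bound through a single span-4 edge, that is, concentrate all of the ``\param{bandwidth}-4 pressure'' of the hardness onto one edge without either trivializing the instance---which would contradict the polynomial-time algorithm for \param{pathwidth}~3 promised in \cref{sec:algo}---or breaking the equivalence with \textsc{Monotone Weighted 2-SAT}. Verifying that $G-e$ genuinely admits the \param{bandwidth}-3 ordering and that the one shortcut edge interacts with $\pi$ exactly as a weight-$k$ constraint is where the main work lies; the cycle results, by contrast, are routine once the path instance is in hand.
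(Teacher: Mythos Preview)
Your opening diagnosis is correct as far as it goes: the graphs produced in \cref{thm:np-interval,thm:np-interval-cycle} really do contain $\Theta(m)$ pairwise disjoint 5-cliques (resp.\ 6-cliques), so no bounded edge deletion lowers their \param{bandwidth}. But you draw the wrong conclusion from this. You overlook that \emph{every one of those maximum cliques contains a backbone vertex}: the non-backbone part of the construction (start, variable, clause, end gadgets with their mutual connections) is a proper interval graph of \param{clique number}~4, hence \param{bandwidth}~3. The backbone~$B$ is not a gadget in its own right; its only function is to supply a route from $t$ back to $s'$ so that constraint~(P3) can be honoured. The paper's proof therefore simply deletes~$B$ in its entirety and inserts the single edge~$ts'$ instead. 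The correctness argument of \cref{thm:np-interval} carries over verbatim (the step ``use the backbone to reach $s'$'' becomes ``use the edge $ts'$''), and removing that one edge gives back the \param{bandwidth}-3 proper interval graph, placing the instance in~$\B^1_3$. For the cycle statements, the second backbone~$B'$ of \cref{thm:np-interval-cycle} is likewise collapsed to the edge~$t's$; dropping that edge alone yields~$\B^1_4$, and dropping both yields~$\B^2_3$.

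Your alternative plan, by contrast, is not yet a proof. You propose to encode the cardinality bound of \textsc{Monotone Weighted 2-SAT} by a single span-4 edge~$e$, but you give no gadget that would make one edge act as a counter over~$n$ independent binary choices; that is the entire content of the construction, and you leave it as ``where the main work lies''. Nothing in the paper suggests such a mechanism, and it is far from clear that it is achievable---and, in any case, it is unnecessary once you see that the excess \param{bandwidth} in the existing reduction is concentrated entirely in a replaceable auxiliary path. Your idea for the cycle step (add one edge~$st$ joining the forced endpoints) is sound and is essentially what the paper does, but it rests on a path reduction you have not supplied.
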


\begin{proof}
    We replace the backbone $B$ in the proof of \cref{thm:np-interval} by an edge from $t$ to $s'$. The correctness of the reduction follows analogously. If we remove this edge, then the remaining graph is a proper interval graph of \param{clique number}~4 and, thus, has \param{bandwidth}~3.

    Similarly, we can replace the backbone $B'$ in the proof of \cref{thm:np-interval-cycle} by an edge from $t'$ to $s$ and removing that edge leads to proper interval graph of \param{clique number}~5. If we use both edge replacements, we get an element of $\B^2_3$.
\end{proof}

\subsection{Rectangular Grid Graphs}
We now build on the ideas presented in \cref{thm:np-interval} to show various results for hardness in rectangular grid graphs. 
All gadgets used in the following reductions are \(w\times h\) rectangular grid graphs. 
For such a gadget \(Z\), we use \(Z[a,b]\) to denote the vertices in the \(a\)-th row and the \(b\)-th column of the gadget. Note that in the figures the rows are counted from top to bottom and the columns from left to right.

\subsubsection{Unweighted Graphs}
\begin{theorem}\label{thm:np-grid}
   POHPP is \NP-complete on rectangular grid graphs of height $7$.
\end{theorem}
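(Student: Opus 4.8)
The plan is to adapt the construction from \cref{thm:np-interval} into the setting of rectangular grid graphs, again reducing from \textsc{3-SAT}. The main change is that in a grid of height~7 we no longer have the luxury of a separate backbone path through the graph; instead, the ``return trip'' that the backbone provided (going back from $t$ to $s'$ after the forward pass through the satisfied literals) must be realized using rows of the grid itself. I would therefore design variable gadgets, clause gadgets, and connector gadgets as small $w\times 7$ rectangular grid blocks laid out left to right, and encode the truth value of each variable by \emph{which horizontal track} (upper versus lower rows) the Hamiltonian path uses when passing through the corresponding variable gadget. The partial order $\pi$ is then used to force: (i) the path to begin in a designated start vertex and make a forward sweep, (ii) each literal vertex to be reachable only after the matching variable vertex has been visited on the correct track (mirroring constraints \pef{ui:p4} and \pef{ui:p5}), and (iii) the ``filler'' vertices that pad the grid to a proper rectangle to be visited only on the return sweep, analogously to the backbone constraints \pef{ui:p2} and \pef{ui:p3}.

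Concretely, I would first fix the global layout: a start column, then one gadget per variable, then one gadget per clause, then an end column, with filler rows above or below to bring every gadget up to height exactly~7. Second, I would specify, for each gadget type, the local wiring together with its entry and exit vertices on the left and right boundary, so that the only way to traverse a gadget consistently with $\pi$ is to commit to a single track corresponding to a truth value, and to route through exactly one satisfied literal in each clause gadget. Third, I would define $\pi$ as the transitive closure of a small set of precedence constraints of the same flavor as (P1)--(P5): a start constraint, an ordering $t \prec s' \prec t'$ forcing the three-phase structure (forward pass, return, cleanup), literal-after-variable constraints, and constraints pushing all filler vertices to the right of $t$ so they can only be consumed on the return. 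Fourth, I would argue the two directions of the equivalence exactly as in \cref{thm:np-interval}: a $\pi$-extending Hamiltonian path must, on its forward $s$--$t$ subpath, select one track per variable and hence one literal per clause, yielding a satisfying assignment; and conversely a satisfying assignment yields an explicit routing, where the forward sweep picks the satisfied-literal vertices, a middle sweep returns along the filler rows, and a final sweep mops up the unused variable and literal vertices before ending in $t'$.

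The main obstacle, and the reason height~7 rather than something smaller is needed, is the rigidity of grid graphs: unlike in a proper interval graph, one cannot freely add ``shortcut'' edges between gadgets, so every entry/exit connection, every track switch, and every guarantee that the filler vertices are \emph{forced} onto the return pass must be engineered purely through adjacency in the $w\times 7$ grid plus the order $\pi$. In particular, the delicate part is showing that there is \emph{no} alternative Hamiltonian path that cheats by, say, picking up a filler vertex during the forward pass or by servicing two literals of one clause and none of another; this requires a careful case analysis of how a path can enter and leave each clause gadget given the seven available rows, and a parity/counting argument ensuring the three sweeps partition the vertex set correctly. I expect the bulk of the work to be in verifying these routing invariants for the clause gadget, since that is where both the literal-selection logic and the track-switching between the forward and return sweeps must coexist within the fixed height.
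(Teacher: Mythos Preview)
Your high-level plan---reduce from \textsc{3-SAT}, lay out variable and clause gadgets as $w\times 7$ blocks, encode truth values by the row used to cross a variable gadget, and use $\pi$ to confine the forward $s$--$t$ sweep to a few middle rows---is exactly the paper's approach. Where your proposal diverges is in importing the $t\prec s'\prec t'$ three-phase mechanism from \cref{thm:np-interval}. The paper drops $s'$ and $t'$ entirely: the only distinguished vertices are $s$ and $t$, and the partial order simply declares that every vertex in rows $1,2,6,7$ (and the middle of row~$4$ inside each variable gadget) comes after $t$. This single blanket constraint does all the work your ``return'' and ``cleanup'' phases would do, and it makes the forward-direction argument almost immediate: since $\cP_{s,t}$ is confined to rows $3,4,5$ and row~$4$ is blocked in the interior of each variable gadget, the path is forced onto row~$3$ or row~$5$ there, and the middle columns of each clause gadget expose only literal vertices in those rows. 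No parity or counting argument is needed, and the ``careful case analysis'' you anticipate for the clause gadget collapses to the observation that a literal vertex on $\cP_{s,t}$ must be preceded by its matching variable vertex. Your version with $s',t'$ is not wrong, but realizing it inside height~$7$ would cost you extra horizontal tracks for the $t\to s'$ and $s'\to t'$ legs, and you would then have to redo the gadget bookkeeping to show those tracks fit; the paper's simplification sidesteps this entirely.
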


\begin{proof}
    We present a reduction from 3-SAT to  POHPP.
    Let \(\Phi\) be a 3-SAT formula over variables \(x_1,\dots, x_n\) with clauses \(c_1,\dots, c_m\).
    We construct an instance \((G,\pi)\) for  POHPP where \(G\) is a \((6n + 4m + 2)\times 7\) rectangular grid graph.
    In order to be able to argue about different parts of \(G\), it is conceptually subdivided into different gadgets. We also name some special vertices within these gadgets.
   
     \begin{description}
        \item[Left border gadget \(S\) (rose):] A \( 2\times 7\) grid graph. 
        \item[Variable gadget \(X_i\) (yellow and blue):] A \(6\times 7\) grid graph for \(i=1,\dots, n\).
        \(X_i\) is assigned to variable \(x_i\).
        We call the vertices \(X_i[3,3]\) and \(X_i[3,4]\) the \emph{negative variable vertices} of \(X_i\). The vertices \(X_i[5,4]\) and \(X_i[5,5]\) are the \emph{positive variable vertices}. 
        We denote the middle four columns of \(X_i\) by \(X_i'\).
        \item[Clause gadget \(C_j\) (green and light blue):]  A \(4\times 7\) grid graph  assigned to \(c_j\) for \(j=1,\dots m\).
        The vertices \(C_j[a+2,2]\) and  \(C_j[a+2,3]\) are called \emph{literal vertices} and are assigned to the literal \(\ell_j^a\) for \(a=1,2,3\).
    \end{description}
    We denote the vertex \(X_1[3,1]\) as \(s\) and the vertex \(C_m[3,4]\) as \(t\).
    \(G\) is made up of the gadgets in the following order: \(S,X_1,\dots, X_n, C_1,\dots, C_m\), see \cref{fig:grid_gadgets}.
    The partial order \(\pi\) is the reflexive, transitive closure of the  following constraints:
    
    \begin{enumerate}[(P1)]
   \item $s\prec v$ for all \(v\in V(G) \setminus \{s\}\). \label{item:grid_start}
        \item $t\prec v$ for all \(v\in S\).
        \item $t\prec v$ for all vertices \(v\) in Rows~\(1,2,6\) and \(7\).
        \item $t\prec v$ if \(v\) is \(X_i[a,b]\) for \((a,b)\in \{4\}\times \{2,3,4,5\}\).
        \item $u\prec v$ if \(u\) is a negative variable vertex in \(X_i\) and \(v\) in \(C_j\) is assigned to a literal \(\overline{x}_i\).\label{item:grid_lit1}
        \item $u\prec v$ if \(u\) is a positive variable vertex in \(X_i\) and \(v\) in \(C_j\) is assigned to a literal \(x_i\).\label{item:grid_lit2}
    \end{enumerate}
    
    We call \pef{item:grid_lit1} and \pef{item:grid_lit2} \emph{literal constraints}.
    \begin{figure}
        \centering
        \includegraphics[width=\textwidth]{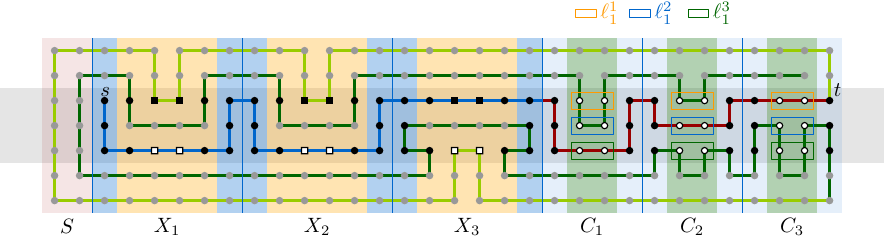}
        \caption{The gadgets for \cref{thm:np-grid} combined to represent a formula. The gray vertices come after \(t\) in the partial order \(\pi\).
        A black square is negative variable vertex, a white square is a positive variable vertex.
        A white disk marks a literal vertex.
        }
        \label{fig:grid_gadgets}
    \end{figure}

Now we show that the instance described above has a $\pi$-extending Hamiltonian path if and only if the given formula \(\Phi\) is satisfiable.

\proofsubparagraph{From a $\pi$-extending Hamiltonian path to a satisfying assignment}

    Let \(\mathcal{P}\) be a \(\pi\)-extending Hamiltonian path of \(G\).
     By \pef{item:grid_start}, \(\cP\) starts in \(s\).
    Let \(\mathcal{P}_{s,t}\) be the prefix of \(\mathcal{P}\) ending in \(t\).
    Observe that by the definition of \(\pi\), all vertices that are in Rows \(1,2,6\) or \(7\) or in Row~4 of some \(X_i'\) cannot lie on \(\mathcal{P}_{s,t}\).

    \begin{claim}\label{claim:pos_neg}
    For any $i \in \{1,\dots,n\}$, $\cP_{s,t}$ either contains both positive or both negative variable vertices of \(X_i\).
\end{claim}
\begin{claimproof}
Assume for a contradiction that there is an \(i\), such that at least one of \ \(X_i[3,3]\) or \(X_i[3,4]\) and one of \(X_i[5,3]\) or \(X_i[5,4]\) lie on \(\cP_{s,t}\).
As rows \(1,2,4,6,7\) of \(X_i'\) are all after \(t\) in any linear extension of \(\pi\), a valid prefix path \(\cP_{s,t}\) cannot switch rows within \(X_i'\).
In particular, it can only cross \(X_i'\) in one row.
\(\cP_{s,t}\) thus first visits either the positive or the negative variable vertices and then exits \(X_i'\) on the right and then reenters it from the right, see \cref{fig:grid-assignment} for an illustration.
After traversing \(X_i'\) twice, all vertices of \(X_i'\) that can be visited before \(t\) are already on the path.
Thus, there are no possible vertices left to cross \(X_i'\) again and the path cannot reach \(t\), a contradiction.
\end{claimproof}
\begin{figure}
    \centering
      \includegraphics[page=2]{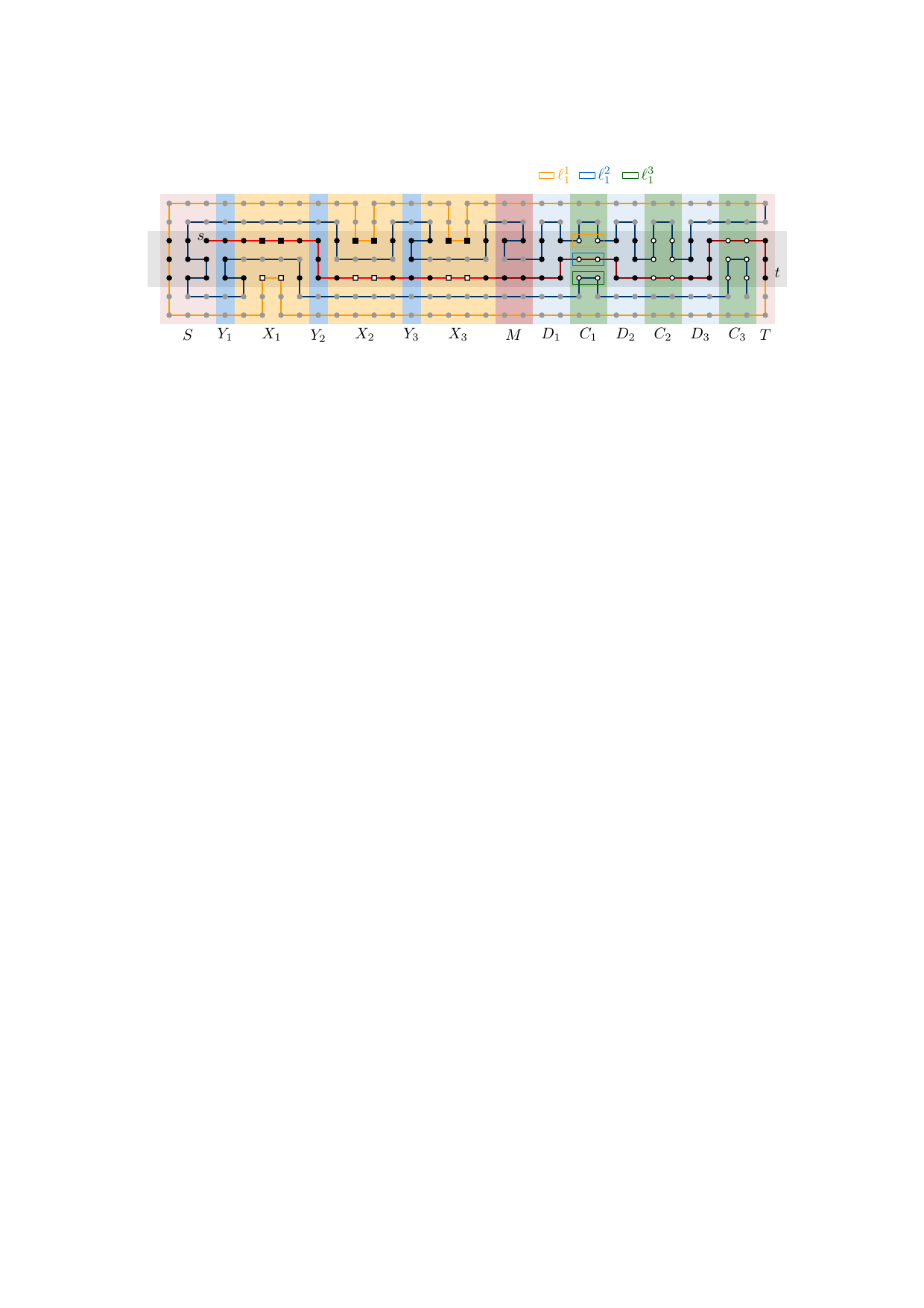}
    \caption{The path \(\cP_{s,t}\) is stuck when traversing both row \(3\) and row \(5\) of \(X_i\)}
    \label{fig:grid-assignment}
\end{figure}

\cref{claim:pos_neg} allows us to define an assignment of the variables of \(\Phi\) as follows.
Set \(x_i\) to false if and only if \(\cP_{s,t}\) traverses \(X_i'\) through the negative variable vertices.
Consider a clause gadget \(C_j\) and a vertex \(v\) in column \(2\) or \(3\) of \(C_j\) that is visited by \(\cP_{s,t}\).
As only vertices in  rows \(3,4\) or \(5\) are on \(\cP_{s,t}\), the vertex \(v\) is a literal vertex.
Let \(\ell_j^a\) be the literal assigned to \(v\).
If \(\ell_j^a = x_i\) for some \(i\), then by \pef{item:grid_lit2}, the positive variable vertices of \(x_i\) are before \(v\) on \(\cP_{s,t}\). 
Thus the assignment specified above sets \(x_i\) to true, satisfying \(c_j\).
In the other case, \(\ell_j^a = \overline{x}_i\), by \pef{item:grid_lit1}, the negative variable vertices of \(x_i\) are visited before \(v\) on \(\cP_{s,t}\). 
As in this case \(x_i\) is set to false, the literal \(\ell_j^a\) satisfies \(c_j\).

\begin{figure}
    \centering
    \includegraphics[page=2, width=\textwidth]{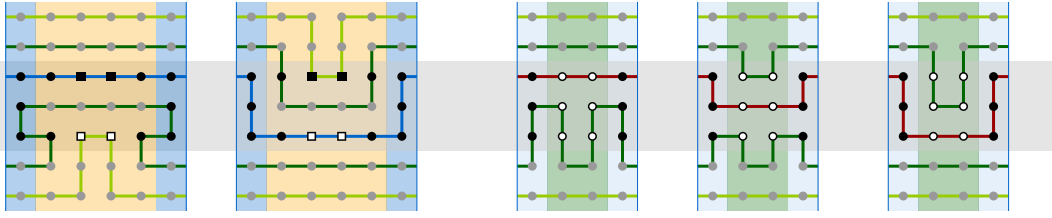}
    \caption{The path \(\cP\) in the variable and clause gadgets}
    \label{fig:grid_variableClauses}
\end{figure}

\proofsubparagraph{From a satisfying assignment to a $\pi$-extending Hamiltonian path}
Assume that \(\Phi\) has a satisfying assignment. 
We now show that there is a \(\pi\)-extending Hamiltonian path \(\cP\) in the graph.
We describe the path in three parts.
The first part, \(\cP_{s,t}\), is a valid prefix that connects \(s\) and \(t\).
This part is marked in light blue and red in \cref{fig:grid_gadgets,fig:grid_variableClauses}.
The second part (\(\cP_2\)) of the path visits the remaining variable vertices.
This part is drawn in light green in the figures.
In the last part (\(\cP_3)\), all remaining vertices, in particular those in the clause gadgets are visited (see the dark green path in the figures).

The path \(\cP_{s,t}\) starts in \(s\). While visiting the variable gadgets,  it maintains the invariant, that the first and last vertex visited in each gadget are \(X_i[3,1]\) and \(X_i[3,6]\), respectively.
If \(x_i\) is set to false, then it stays in row~\(3\) and crosses \(X_i\) through the negative variable vertices.
In the other case, it visits the vertices $X_i[3,1]$, $X_i[4,1]$, and $X_i[5,1]$  and then crosses \(X_i'\) through the positive variable vertices and then takes the three vertices in the last column to reach \(X_i[3,6]\). See the blue path in \cref{fig:grid_gadgets}.

A similar process is repeated for the clause gadgets. The path is drawn red in \cref{fig:grid_gadgets}. Again, the first and last vertex visited in \(C_i\) are \(C_i[3,1]\) and \(C_i[3,4]\).
Let \(\ell_j^a\) be an arbitrary literal that satisfies the clause \(c_j\), then \(\cP_{s,t}\) crosses the middle two columns of \(C_j\) via the literal vertices assigned to \(\ell_j^a\).
The first and last column are used to reach this row from \(C_j[3,1]\) and then reach \(C_j[3,4]\).
As \(t=C_m[3,4]\), \(\cP_{s,t}\)  ends in the last column of the graph.

To see that \(\cP_{s,t}\) is the prefix of a \(\pi\)-extending Hamiltonian path, we first note that only vertices \(v\) that are not constrained to be right of \(t\) are visited.
Furthermore, if the path visits a vertex in a clause gadget that is assigned to a negative literal, then the corresponding negative variable vertices were visited before. 
Analogously, all positive variable vertices are visited before vertices assigned to a positive literal.
This makes sure that the literal constraints involving the visited literal vertices are satisfied.

For the subpath \(\cP_2\), all vertices in variable gadgets that are not on \(\cP_{s,t}\) are visited.
This is done by first walking along row \(1\) with the invariant, that each gadget is entered and left in this row.
The path only leaves row \(1\) in a variable gadget if \(x_i\) is set to true  in the satisfying assignment. In this case, it visits the negative variable vertices as indicated in the second figure in \cref{fig:grid_variableClauses}.
When reaching \(S[1,1]\) the path goes down to \(S[7,1]\) and then continues in row \(7\), again only leaving the row to visit positive variable vertices if \(x_i\) is set to false, see the leftmost figure in \cref{fig:grid_variableClauses}.

In the final phase, all remaining vertices are visited. Recall that this part of the path is called \(\cP_3\) and it is drawn in dark green in all figures.
For the clause gadgets, \(\cP_3\) first visits all vertices below \(\cP_{s,t}\) maintaining the invariant that each gadget is entered and left in row \(6\). Then the path visits the remaining vertices in \(S\) and then all remaining vertices above \(\cP_{s,t}\) with the invariant that every gadget is entered and left in row \(2\).
The path ends in \(C_m[2,3]\).
See \cref{fig:grid_gadgets} and the three figures at the right of \cref{fig:grid_variableClauses} for an illustration that this invariant can be maintained for all possible assignments of the variables and true literals.

By maintaining the invariants mentioned above, it is clear, that the path \(\cP\) that results from combining \(\cP_{s,t}, \cP_2\) and \(\cP_3\) is indeed a Hamiltonian path.
It is left to argue that the path \(\cP\) is indeed \(\pi\)-extending.
We have already argued that \(\cP_{s,t}\) is a prefix of a \(\pi\)-extending Hamiltonian path.
Therefore, the first four types of constraints of $\pi$ as well as the literal constraints for all variable vertices on \(\cP_{s,t}\) are fulfilled.
To see that the remaining literal constraints are satisfied, we argue as follows. If some vertex assigned to a literal is visited on~\(\cP_3\), then all remaining positive and negative variable vertices have already been visited in~$\cP_2$.
\end{proof}

\begin{theorem}
    POHPP is \NP-complete on rectangular grid graphs of height \(h\) for every fixed \(h \geq 7\).
\end{theorem}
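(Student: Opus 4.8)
The plan is to reduce the general height-$h$ case to the height-$7$ case of \cref{thm:np-grid} by a padding argument. Given a $3$-SAT formula $\Phi$ and a fixed height $h \geq 7$, I would take the $(6n+4m+2)\times 7$ grid graph $G$ together with its partial order $\pi$ from the proof of \cref{thm:np-grid} and append $h-7$ additional rows below it, obtaining a $(6n+4m+2)\times h$ rectangular grid graph $G_h$. To the partial order I would add the single family of constraints $t \prec v$ for every newly introduced vertex $v$, yielding a partial order $\pi_h$; thus $s$ remains the unique minimum and every new vertex is forced to lie after $t$, exactly like the vertices of the original rows $1,2,6,7$. This construction is clearly polynomial, so it remains to show that $(G_h,\pi_h)$ is a yes-instance of POHPP if and only if $\Phi$ is satisfiable.

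For the forward direction, let $\cP$ be a $\pi_h$-extending Hamiltonian path. By \pef{item:grid_start} it starts in $s$, and its prefix $\cP_{s,t}$ ending in $t$ cannot use any vertex constrained to lie after $t$. In particular $\cP_{s,t}$ avoids all new rows, as well as the original rows $1,2,6,7$ and the forbidden vertices of row $4$, so it lives entirely in rows $3,4,5$ of the original gadgets. Consequently the prefix is subject to exactly the same constraints as in \cref{thm:np-grid}, and the extraction of a satisfying assignment via \cref{claim:pos_neg} and the literal constraints carries over verbatim.

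For the backward direction I would start from the Hamiltonian path $\cP$ constructed in \cref{thm:np-grid}, which already visits every vertex of the original grid and is $\pi$-extending, and insert a traversal of the new rows into the cleanup phase. Recall that during phase $\cP_2$ the path runs down the whole left border of $S$ to the bottom-left corner $S[7,1]$ before sweeping along row $7$. At this corner I would splice in a detour that descends into the new block (rows $8,\dots,h$), visits all of its vertices by a boustrophedon, and returns to row $7$, after which the path continues exactly as before. Since the block is a full-width rectangle all of whose vertices are unconstrained beyond lying after $t$, such a detour exists, and the resulting path is Hamiltonian and, placing every new vertex after $t$, still $\pi_h$-extending. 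The \textbf{main obstacle} is precisely this splicing step: one must verify that for every $h \geq 7$ the boustrophedon through the rectangular block of height $h-7$ can be entered at $S[8,1]$ and exited at a row-$7$ vertex from which the original cleanup resumes without revisiting vertices. This is routine, and here it is convenient that the width $W = 6n+4m+2$ is even, which fixes the parity of the snake so that it re-enters row $7$ at a usable column; this is the only place where the extension genuinely differs from \cref{thm:np-grid}.
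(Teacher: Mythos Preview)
Your padding idea matches the paper's: append $h-7$ rows, constrain them after $t$, and thread the cleanup through them. The paper's implementation differs, though --- it does not leave the height-$7$ construction intact. It widens $S$ to three columns, relocates $s$ to $S[5,3]$, appends a $3\times h$ right border gadget $T$, and absorbs the new rows gadget by gadget: the row-$7$ portion of $\cP_2$ is replaced by a space-filling curve that enters and leaves each block in row $h$. Your single global boustrophedon is conceptually simpler, and for $h\ge 9$ it can indeed be spliced cleanly between $S[7,1]$ and $S[7,2]$, since a Hamiltonian path in the $(h-7)\times W$ block from $[8,1]$ to $[8,2]$ exists whenever $h-7\ge 2$; the original cleanup then genuinely continues unchanged.

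The gap is at $h=8$. There the added block is the single row $8$, a path on $W>2$ vertices, and any traversal entered at $[8,1]$ is forced to exit at $[8,W]$; your even-$W$ observation buys you an exit adjacent to row $7$, but at column $W$, not near column $2$. Re-entering row $7$ at $C_m[7,4]$ means the rest of the cleanup cannot ``continue exactly as before'': the original $\cP_2$ proceeds rightward from $S[7,2]$ through the variable gadgets with its upward detours, whereas you are now at the far right with all clause-gadget row-$7$ vertices between you and those detours, and $\cP_3$'s row-$6$ entry/exit invariants no longer fit once row $7$ of the clause gadgets has been consumed separately. The $h=8$ case therefore needs a genuine re-routing argument that you have not supplied; this is precisely the difficulty the paper sidesteps by modifying the border gadgets and handling the extra rows per gadget rather than in one monolithic block.
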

\begin{proof}
The same construction as in the proof of \Cref{thm:np-grid} can be used with a slight modification.
All gadgets are now \(k \times h\) grids.
Gadget widths for the variable and clause gadgets are the same as in the construction of \cref{thm:np-grid}. The left end gadget size is increased to have three columns and the start vertex is now \(S[5,3]\). 
To the right a \(3 \times h\) right end gadget \(T\) is added.

The vertices in Rows~\(8,\dots, h\) take over the ordering constraints of the vertices in Row~\(7\) in the same column.
The proof then follows by the same arguments as that of \cref{thm:np-grid}, replacing the part of \(\cP_2\)  that visits the vertices in Row~\(7\) by a space filling curve that enters and leaves each variable block and each remaining gadget in Row~\(h\).
See \cref{fig:grid_larger_seven} for an illustration, especially of the routing through $S$ and $T$. \qedhere
\begin{figure}
    \centering
    \includegraphics[page=3, width=\textwidth]{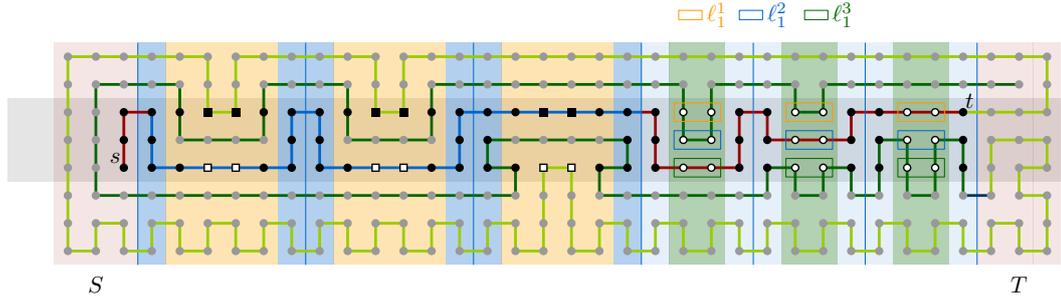}
    \caption{Example for the construction with height \(8\)}
    \label{fig:grid_larger_seven}
\end{figure}
\end{proof}

We now extend the construction above to show that POHCP is also \NP-hard. The construction is a simple extension of that used for \cref{thm:np-grid}. The main difference is, that we have to introduce some more rows and columns in order to be able close the cycle.

\begin{theorem}\label{thm:cycle_grid_unweighted}
    POHCP is \NP-complete on rectangular grid graphs of height $9$.
\end{theorem}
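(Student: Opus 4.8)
The plan is to reduce from 3-SAT again, reusing the height-$7$ construction from \cref{thm:np-grid} as the backbone of the argument, and to augment it with a few extra rows and columns so that the Hamiltonian path closing the cycle can be routed back to its start. Recall that POHCP asks for a $\pi$-extending Hamiltonian path whose two endpoints are adjacent. In the proof of \cref{thm:np-grid} the path $\cP$ started at $s = X_1[3,1]$ and ended at $C_m[2,3]$; these two vertices are nowhere near each other in the grid, so the first task is to reshape the construction so that the forced start vertex and the forced final vertex lie in adjacent grid cells. The natural way to do this is to add a right end gadget $T$ (as in the height-$h$ variant) and extra rows (bringing the total height to $9$), and to place $s$ and the final vertex of $\cP_3$ in the same column of two neighbouring rows, so that the single extra edge between them closes the cycle.

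Concretely, I would keep the variable gadgets $X_i$, the clause gadgets $C_j$, and the literal/ordering constraints \pef{item:grid_lit1}--\pef{item:grid_lit2} essentially intact, since the heart of the encoding---\cref{claim:pos_neg}, that $\cP_{s,t}$ must traverse each $X_i'$ through either the positive or the negative variable vertices but not both---carries over verbatim. The two new rows (call them rows $8$ and $9$) and an added right end gadget $T$ serve purely as "plumbing": they give $\cP_3$ room to sweep up all vertices not used by the prefix $\cP_{s,t}$ and the intermediate path $\cP_2$, and then return to a cell adjacent to $s$. As in \cref{thm:np-grid}, I would add ordering constraints forcing all vertices of $T$, of the new bottom rows, and of rows $1,2,6,7,8$ to come after $t$ in $\pi$, so that $\cP_{s,t}$ is still confined to rows $3,4,5$ inside each $X_i'$ and $C_j$. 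The forward direction (satisfying assignment $\Rightarrow$ $\pi$-extending Hamiltonian cycle) then follows by the same three-part description of the path $\cP = \cP_{s,t} \oplus \cP_2 \oplus \cP_3$ used before, with $\cP_3$ modified to finish in the cell adjacent to $s$; the reverse direction (cycle $\Rightarrow$ satisfying assignment) is unchanged, since closing an edge from the last vertex to $s$ only restricts the path further and the assignment is read off from $\cP_{s,t}$ exactly as in \cref{thm:np-grid}.

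The main obstacle, and where the bulk of the real work lies, is the closing-the-cycle routing: I must exhibit a single concrete family of space-filling sweeps for $\cP_2$ and $\cP_3$ that (a) covers every remaining vertex exactly once, (b) respects the "after $t$" constraints so that no vertex forced to lie after $t$ is visited before $t$, (c) honours the literal constraints for every literal vertex visited in $\cP_3$ (which, as argued before, is automatic once all variable vertices are already on $\cP_2$), and (d) terminates in a cell adjacent to $s$ so the extra edge closes the cycle. This is genuinely delicate because the sweep must work \emph{uniformly} for all $2^n$ truth assignments and all choices of satisfying literal in each clause: the detours taken by $\cP_{s,t}$ differ from gadget to gadget (recall it dips into column~$1$ and column~$6$ only when $x_i$ is true), so $\cP_2$ and $\cP_3$ must adapt their row-$1$, row-$7$, row-$2$, row-$6$ invariants accordingly. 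I would handle this exactly as in \cref{thm:np-grid}: establish an invariant of the form "each gadget is entered and left in row~$r$" for each sweep, verify by a finite case check (supported by a figure analogous to \cref{fig:grid_variableClauses}) that the invariant can be maintained across every gadget type under every local configuration, and finally check that the last sweep ends next to $s$. Getting the height to be exactly $9$ rather than larger is the quantitative crux; I expect that two extra rows beyond the height-$7$ path construction suffice precisely because the return sweep needs one fresh row to travel leftward across the whole graph plus one to turn back up to $s$, and I would confirm this by drawing the full routing once and reading off the required height.
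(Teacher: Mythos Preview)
Your high-level strategy---reuse the height-$7$ reduction, add extra rows and a right end gadget, and route the final sweep back to a neighbour of $s$---is exactly the paper's strategy, and the correctness argument for both directions does carry over as you say. The gap is in the concrete ``plumbing'' plan. You propose to keep the gadgets and the position $s = X_1[3,1]$ essentially intact and to append rows $8$ and $9$ at the bottom, arguing that ``one fresh row to travel leftward plus one to turn back up to $s$'' suffices. But $s$ sits in row~$3$: after a leftward traversal in row~$8$ or~$9$ you still need to climb five or six rows through the left border gadget $S$ (or the first column of $X_1$) to reach a neighbour of $s$, and in the height-$7$ routing those columns of $S$ are already entirely consumed by $\cP_2$ (descending $S[1,1]\to S[7,1]$) and by $\cP_3$ (the remaining $S$ vertices). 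Two extra bottom rows do not buy you a free vertical channel back to row~$3$, so the cycle cannot close under your plan as stated.

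The paper resolves this not by adding plumbing at the bottom but by restructuring the gadgets themselves. It widens the variable gadgets to $8$ columns and the clause gadgets to $6$ columns, moves $s$ down to $X_1[7,1]$ so that $\cP_{s,t}$ now enters and leaves each gadget in row~$7$, and---this is the key trick you are missing---has $\cP_3$ deliberately skip the single row of vertices lying \emph{directly above} $\cP_{s,t}$ until the very end. That reserved ``shadow row'' is then traversed right-to-left as the last segment of $\cP_3$, landing on the neighbour of $s$ and closing the cycle. The extra gadget width is what makes room for this shadow row to coexist with the row-$2$ / row-$8$ sweeps inside each gadget, and the rows blocked before $t$ become $1,2,3,8,9$ (with row~$5$ of $X_i'$ also blocked) rather than your proposed $1,2,6,7,8$. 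So the reduction is not a simple vertical extension of the height-$7$ picture; the closing mechanism forces a genuine redesign of the gadget dimensions and of where $\cP_{s,t}$ lives.
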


\begin{proof}

We, once again, adapt the proof of \cref{thm:np-grid}.
The main challenge to go from Hamiltonian paths to Hamiltonian cycles is that we have to route the path back to \(s\).
In order to be able to do this, the variable and clause gadgets have to be slightly modified:
  \begin{description}
        \item[Left border gadget \(S\) (rose):] A \(2\times 9\) grid graph. 
        \item[Right border gadget \(T\) (rose): ] A \(2\times 9\) grid graph. 
        \item[Variable gadget \(X_i\) (yellow and blue):] A \(8\times 9\) grid graph for \(i=1,\dots, n\).
        \(X_i\) is assigned to variable \(x_i\).
        We call the vertices \(X_i[4,4]\) and \(X_i[4,5]\) the \emph{negative variable vertices} of \(X_i\). The vertices \(X_i[6,4]\) and \(X_i[6,5]\) are the \emph{positive variable vertices}. 
        We denote the middle four columns of \(X_i\) by \(X_i'\).
        \item[Clause gadget \(C_j\) (green and light blue):]  A \(6\times 9\) grid graph  assigned to \(c_j\) for \(j=1,\dots m\).
        The vertices \(C_j[a+3,3]\) and  \(C_j[a+3,4]\) are assigned to the literal \(\ell_j^a\).
    \end{description}

 We denote the vertex \(X_1[7,1]\) as \(s\) and the vertex \(T[4,2]\) as \(t\).
    \(G\) is made up of the gadgets in the following order: \(S,X_1,\dots, X_n, C_1,\dots, C_m, T\), see \cref{fig:grid_gadgets}.
    The partial order \(\pi\) is the reflexive, transitive closure of the  following constraints:
    \begin{enumerate}[(P1)]
   \item $s\prec v$ for all \(v\in V(G) \setminus \{s\}\). \label{item:cycle_grid_start}
        \item $t\prec v$ for all \(v\in S\cup T\).
        \item $t\prec v$ for all vertices \(v\) in Rows~\(1,2,3,8\) and \(9\).
        \item $t\prec v$ if \(v\) is \(X_i[a,b]\) for \((a,b)\in \{5,6\}\times \{3,4,5,6\}\).
        \item $t\prec v$ if \(v \in C_j[a,b]\) with \((a,b)\in \{7\}\times\{2,3\}\)
        \item $u\prec v$ if \(u\) is a negative variable vertex in \(X_i\) and \(v\) in \(C_j\) is assigned to a literal \(\overline{x}_i\). \label{item:cycle_grid_lit1}
        \item $u\prec v$ if \(u\) is a positive variable vertex in \(X_i\) and \(v\) in \(C_j\) is assigned to a literal \(x_i\).\label{item:cycle_grid_lit2}
    \end{enumerate}
   Again, we call \pef{item:cycle_grid_lit1} and \pef{item:cycle_grid_lit2} \emph{literal constraints}.
    \begin{figure}
        \centering
        \includegraphics[width=\textwidth, page=5]{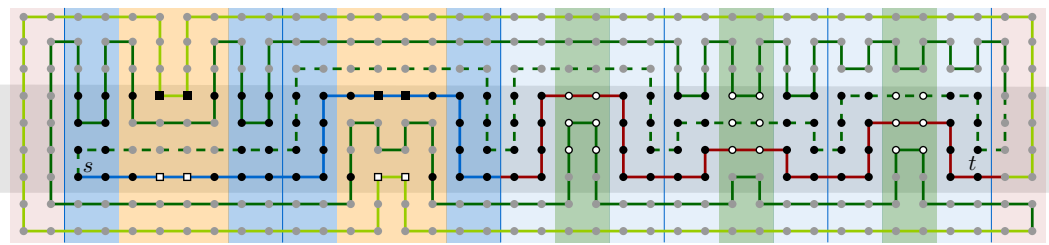}
        \caption{The gadgets for \cref{thm:cycle_grid_unweighted} combined to represent a formula. The gray vertices come after \(t\) in the partial order \(\pi\).
        A black square is negative variable vertex, a white square is a positive variable vertex.
        A white disk marks a literal vertex.
        }
        \label{fig:grid_gadgets_cycle}
    \end{figure}

The proof for the correctness follows the same structure as that in \cref{thm:np-grid}. We will only highlight the parts that have to be adapted for the new gadgets.
To show that a \(\pi\)-extending Hamiltonian path in this graph implies the existence of a satisfying assignment, we have to argue that \cref{claim:pos_neg} still holds and that \(\cP_{s,t}\) visits a literal vertex in each clause gadget.
As in all \(X_i'\) each row except for the ones that contain variable gadgets cannot be used, the proof of \cref{claim:pos_neg} carries over. Furthermore, the only vertices in the middle two columns of a clause gadget that can be visited are the literal vertices. 
With these two observations, the proof of \cref{thm:np-grid} carries over and using the same rule for the assignment yields a satisfying assignment.

For the other direction the main difference to \cref{thm:np-grid} is that the path has to visit all vertices in a proper order and then reach back to \(s\).
We again subdivide the path into \(\cP_{s,t}, \cP_{2}\) and \(\cP_{3}\). 
\(\cP_{s,t}\) is defined as above, with the difference that the path now changes between the gadgets in row \(7\) instead of row \(3\).
\(\cP_2\) again stays in the top and bottom row, only leaving the row to visit the variable vertices not visited yet by \(\cP_{s,t}\).
Finally, all remaining vertices have to be visited. Note that as \(t\) was already visited and all variable vertices are visited, the partial order does not give any more constraints on these vertices. 
\(\cP_3\) now visits all remaining vertices, except for those that are directly above the vertices visited by \(\cP_{s,t}\). This part is marked solid dark green in \cref{fig:grid_gadgets_cycle}. Finally, \(\cP_3\) closes the cycle by visiting the remaining vertices (dashed dark green blue curve).
\cref{fig:grid_gadgets_cycle} shows all possible assignments of variables and clauses and how the paths can be routed in each case.
\end{proof}

\subsubsection{Weighted Graphs}
For the following proofs, we will use  \textsc{Monotone Weighted 2-SAT} in the reduction.

\begin{theorem}\label{thm:minPOHPPheight5}
    MinPOHPP is \NP-hard on rectangular grid graphs of height $5$.
\end{theorem}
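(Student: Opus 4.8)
The plan is to give a polynomial reduction from \textsc{Monotone Weighted 2-SAT} to the decision version of MinPOHPP on rectangular grid graphs of height~5, reusing the overall skeleton of \cref{thm:np-grid} but in a leaner, weighted form. The guiding idea is that in the unweighted setting of \cref{thm:np-grid} we needed several ``parking'' rows purely to \emph{force} the true/false choice and the clause traversal structurally (this is exactly what the stuck-path argument in the proof of \cref{thm:np-grid} exploits), whereas with edge weights we can instead \emph{penalize} the expensive choice and let cost minimization do the forcing. This should let us shrink the height from~$7$ to~$5$. Given a monotone 2-CNF formula $\Phi$ with variables $x_1,\dots,x_n$ and clauses $c_1,\dots,c_m$ (each a disjunction of two positive literals) together with the bound $k$, I would build a height-5 grid $G$ consisting of a start gadget, one variable gadget $X_i$ per variable, one clause gadget $C_j$ per clause, and an end gadget, arranged left to right, and set the weight threshold to $W=k$.

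Concretely, I would reserve the middle row (Row~$3$) as the corridor along which the prefix subpath $\cP_{s,t}$ runs, and use a partial order $\pi$ that (i) makes $s$ the unique minimum, (ii) parks all vertices of Rows~$1,2,4,5$ (outside a thin strip inside each variable gadget) after $t$, so that $\cP_{s,t}$ is confined to the corridor and to the variable and clause detours, and (iii) enforces \emph{literal constraints} of the form ``each positive variable vertex of $X_i$ precedes every literal vertex assigned to $x_i$''. Because $\Phi$ is monotone, only these positive constraints are needed, which simplifies the gadgets considerably. Each variable gadget should offer $\cP_{s,t}$ exactly two ways to cross it: a \emph{false} routing that stays in the corridor, and a \emph{true} routing that dips one row off the corridor to visit the positive variable vertices. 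I would place a single weight-$1$ edge on the true detour and weight $0$ on all remaining edges, so that crossing $X_i$ via the true routing costs exactly one more than via the false routing, while the later filling phases can cover the gadget without using the penalized edge. In each clause gadget the two literal vertices lie on the corridor, and by the literal constraints they can appear on $\cP_{s,t}$ only if the corresponding variable has already been set true; thus $\cP_{s,t}$ can traverse $C_j$ only when at least one of its two variables is true.

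For correctness I would argue both directions exactly as in \cref{thm:np-grid}. From a satisfying assignment $\A$ with $t\le k$ true variables I route $\cP_{s,t}$ through the true detour of each true variable (cost~$1$ each) and through a satisfied literal vertex of each clause, then complete the Hamiltonian path with two space-filling phases $\cP_2,\cP_3$ covering the parked rows; the total weight is then exactly $t\le k$. Conversely, any $\pi$-extending Hamiltonian path of weight at most~$k$ passes, in each clause gadget, through a literal vertex (forced by the parking constraints), hence sets at least one variable of every clause to true, so the set of variables whose gadget is crossed by the true routing is a satisfying assignment; since the weight equals precisely the number of such true routings, this assignment uses at most~$k$ true variables. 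The main obstacle I expect is the one that controls the height: verifying that the two filling phases $\cP_2$ and $\cP_3$ can visit all remaining vertices in only five rows for \emph{every} combination of variable settings and chosen literals, and that this can always be done using only weight-$0$ edges, so that the total path weight counts true variables exactly, never more and never less. Checking these space-filling curves case by case---together with confirming that $G$ embeds in a height-5 rectangular grid and that $\pi$ admits a consistent linear extension---is the delicate but routine part of the argument.
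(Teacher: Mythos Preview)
Your approach is essentially the paper's: the same reduction from \textsc{Monotone Weighted 2-SAT}, the same use of a single weight-$1$ edge per variable gadget so that the cost of the Hamiltonian path counts the number of variables set true, and the same three-phase path structure $\cP_{s,t},\cP_2,\cP_3$.

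There is one concrete design flaw in your sketch that you should fix. You say that in a clause gadget both literal vertices ``lie on the corridor'' while all of Rows $1,2,4,5$ of the clause gadget are parked after $t$. With that layout $\cP_{s,t}$ is confined to Row~$3$ inside $C_j$ and therefore must pass through \emph{both} literal vertices; by your literal constraints this means $\cP_{s,t}$ can cross $C_j$ only if \emph{both} variables of $c_j$ are true, so the reduction collapses. The paper avoids this by moving the literal vertices one row off the main corridor (to Row~$4$ while the corridor is Row~$5$) and inserting a single ``forced'' vertex $C_j[4,3]$ between them with the constraint $C_j[4,3]\prec t$. That vertex has only three neighbors visitable before $t$: the two literal vertices $C_j[4,2]$, $C_j[4,4]$ and one corridor vertex $C_j[5,3]$. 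Hence $\cP_{s,t}$ must reach $C_j[4,3]$ through \emph{exactly one} literal vertex, which is precisely the ``at least one true literal'' semantics you need. A second point worth making explicit (the paper does) is the converse inequality: you must argue that whenever $\cP_{s,t}$ visits the variable vertex it necessarily uses the weight-$1$ edge; the paper enforces this by also parking the neighbor to the left and the neighbor above the variable vertex after~$t$, so the only entry into that vertex available to $\cP_{s,t}$ is via the weighted edge from below.
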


\begin{proof}
We follow a similar strategy as for the proof for \cref{thm:np-grid,thm:cycle_grid_unweighted}, however we use \textsc{Monotone Weighted 2-SAT} instead of \textsc{3-SAT} for the reduction. 
Another relevant difference is that there will be no negative variable vertices.
The variable gadgets \(X_i\) are  \(6\times 5\) grids and the clause gadgets \(Y_i\) are \(7\times 5\) grids. In addition there is a \(1\times 5\) right boundary gadgets \(T\).
Let \(X_i\) be a variable gadget. Then \(X_i[3,3]\) is the \emph{variable vertex} for \(x_i\).
Similarly, for a clause gadget \(C_j\), the vertex \(C_j[4,2]\) is assigned to the first literal in clause \(c_j\) and \(C_j[4,4]\) is assigned to the second literal.
Let \(s=X_1[5,1]\) and \(t=T[5,1]\).
The partial order is the reflexive and transitive closure of the following constraints:
\begin{enumerate}[(P1)]
\item \(s\prec v\) for all \(v\in V(G)\setminus \{s\}\) \label{enum:weighted:sstart}
    \item \(t \prec v\) for all \(v\in X_i[a,b]\) with \(a\in \{1,2\}\) and for \(X_i[3,2]\) and \(X_i[4,2]\)
    \item \(t \prec v\) for all \(v\in C_j[a,b]\) with \(a\in \{1,2,3\}\)
    \item \( v \prec t\) for all \(v=C_j[4,3]\). \label{enum:weighted:forcev}
    \item \(X_i[3,3]\prec C_j[4,2a]\) if \(\ell_j^a = x_i\). \label{enum:weighted:posliteral}
\end{enumerate}
The edge weights are chosen as follows: All weights are set to \(0\) except for the edges \(\{X_i[4,3], X_i[3,3]\}\), which have weight~\(1\).
The gadgets are combined in the order \(X_1,\dots,X_n,\allowbreak C_1, \dots, C_m, T\).

We now show that there is a \(\pi\)-extending Hamiltonian Path of weight at most \(k\) in this graph if and only if there is a satisfying assignment for \(\Phi\) that sets at most \(k\) variables to true.

\proofsubparagraph{From a $\pi$-extending Hamiltonian Path to a satisfying assignment}
Let \(\cP\) be a \(\pi\)-extending Hamiltonian path with weight at most \(k\). By \pef{enum:weighted:sstart}, \(s\) is the first vertex on \(\cP\). Let  \(\cP_{s,t}\) be the prefix that connects \(s\) to \(t\).
Consider \(\cP_{s,t}\) in a clause gadget \(C_j\).
The vertex \(C_j[4,3]\) has to be on \(\cP_{s,t}\) by \pef{enum:weighted:forcev}. Furthermore, as \(C_j[4,3]\) has only three neighbors that can be on \(\cP_{s,t}\) one of the literal vertices is also on \(\cP_{s,t}\).

Consider the variable vertex \(v\) in \(X_i\). The vertices in the column left and the row above of \(v\) in \(X_i\) are not on \(\cP_{s,t}\). Thus, if \(v\) is on \(\cP_{s,t}\), the vertical edge below the variable vertex was used in the gadget.

The assignment of the variables can be chosen as follows. Variable \(x_i\) is set to true if and only if \(\cP_{s,t}\) visits the variable vertex in \(X_i\).
The constraints of \pef{enum:weighted:posliteral} now ensure that a clause gadget can only be crossed if at least one literal is true in the assignment.
As the total weight of the path is at most \(k\), there are at most \(k\) variable vertices that are visited and thus the assignment is a valid solution to the \textsc{2-SAT} instance.

\proofsubparagraph{From a satisfying assignment to a $\pi$-extending Hamiltonian Path}
The path is again subdivided into three parts.
 See \cref{fig:minPOHPP-height-5} for an illustration.
 The first part enters and leaves each gadget in the last row. If \(x_i\) is set to true, it visits the variable vertex \(X_i[3,3]\) and its horizontal neighbor \(X_i[3,4]\) in between. 
 If the variable is set to false, the vertices \(X_i[4,3]\) and \(X_i[4,4]\) are visited. 
The path then visits the clause gadgets, visiting a literal vertex that represents a literal that satisfies this clause as well as \(C_j[4,3]\).

The middle part of the path simply goes up in \(T\), traverses through row \(1\) and then goes back down to \(S[4,1]\).
The path then visits the remaining vertices, first in the variable and then in clause gadgets. The path always enters and leaves a gadget in row \(4\). In particular, the remaining variable vertices are visited before the remaining literal vertices, making sure that \pef{enum:weighted:posliteral} is satisfied.
Again, we refer to \cref{fig:minPOHPP-height-5} for an illustration.
\begin{figure}
    \centering
    \includegraphics[page=6]{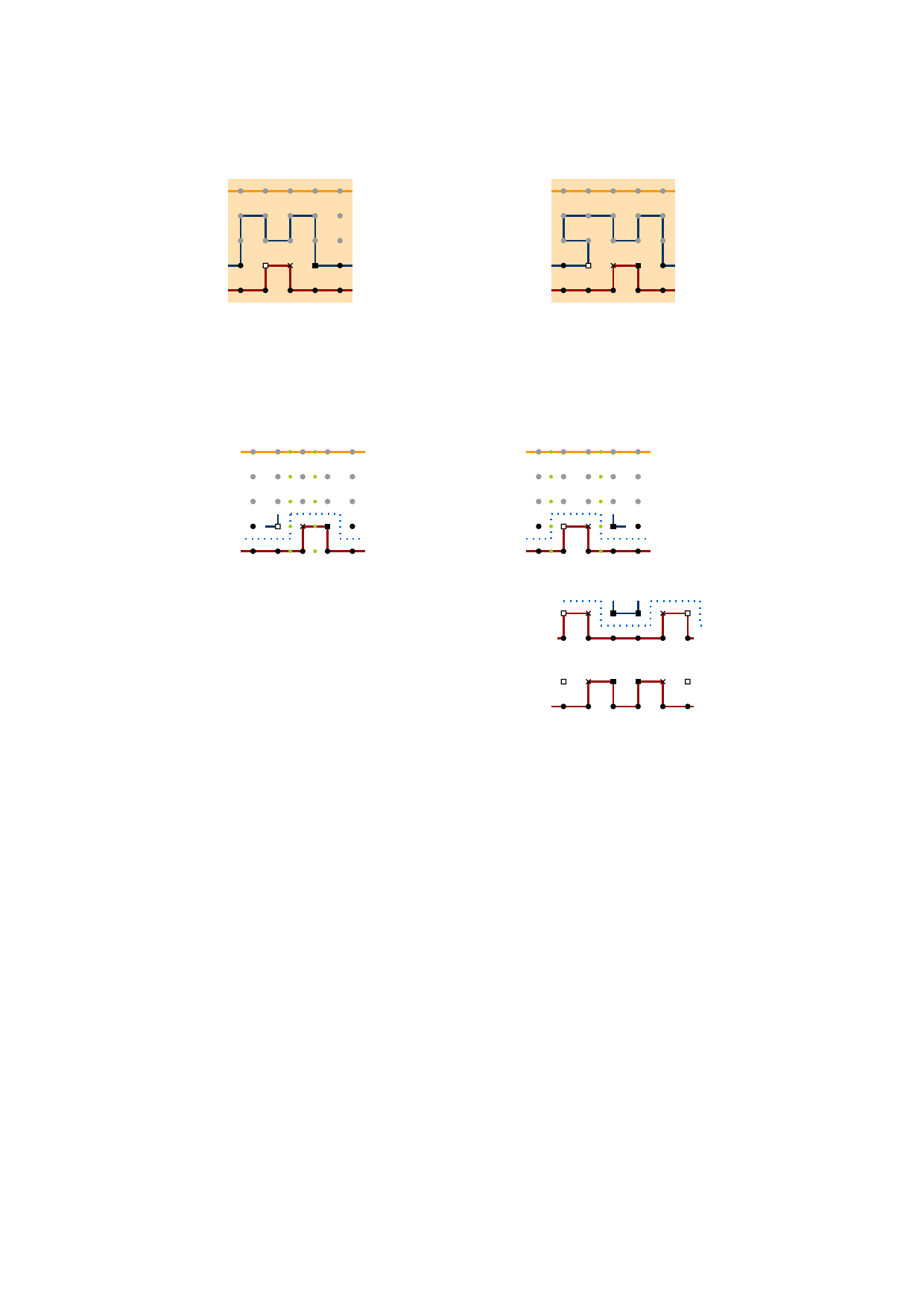}
    \caption{The construction for MinPOHPP in grids of height at most \(5\). Gray vertices are after \(t\) in the partial order. Vertices marked with an \(\times\) have to be visited before \(t\). The  variable vertices are white squares and literal vertices are white circles.}
    \label{fig:minPOHPP-height-5}
\end{figure}
\end{proof}

\begin{theorem}
   MinPOHCP is NP-hard for rectangular grid graphs of height \(6\).
\end{theorem}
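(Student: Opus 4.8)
The plan is to adapt the construction from \cref{thm:minPOHPPheight5} in exactly the same spirit that \cref{thm:cycle_grid_unweighted} adapted \cref{thm:np-grid}: take the weighted path reduction from \textsc{Monotone Weighted 2-SAT} and modify it so that the forced prefix $\cP_{s,t}$ still encodes a truth assignment, while the remaining rows provide enough room to route the path back from $t$ to a vertex adjacent to $s$, thereby closing a Hamiltonian cycle. Concretely, I would keep the variable gadgets (with the single weight-$1$ edge below each variable vertex, so that the total weight again counts the number of \texttt{true} variables) and the clause gadgets essentially as before, but increase the height from $5$ to $6$ to create one extra row that can carry the return portion of the cycle. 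The start vertex $s$ and the target $t$ would be placed so that $s$ and $t$, or rather the two endpoints of the Hamiltonian \emph{path} whose endpoints must be adjacent, can be joined by a grid edge; following \cref{obs:cycle-to-path} in reverse, the natural choice is to force $s$ to be the unique minimal element and $t$ adjacent to $s$ in the grid.

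The forward direction (satisfying assignment of weight $\le k$ $\Rightarrow$ $\pi$-extending Hamiltonian cycle of weight $\le k$) would reuse the three-phase path description: a prefix $\cP_{s,t}$ that enters and leaves each gadget in a fixed row and detours through the variable vertex $X_i[\cdot,\cdot]$ (paying weight $1$) exactly when $x_i$ is \texttt{true}, crossing each clause gadget through a literal vertex of a satisfied literal; a second phase $\cP_2$ sweeping the top rows to collect the remaining variable vertices; and a final phase $\cP_3$ filling in all remaining vertices. The only genuinely new ingredient is that $\cP_3$ must terminate at a vertex adjacent to $s$ so that the cycle closes, which is why the extra row of height is needed. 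I would verify, via the analogue of the figure used for \cref{thm:minPOHPPheight5}, that for every combination of variable truth values and chosen satisfying literals the space-filling routing in the extra row can indeed be completed with all the $t \prec v$ and literal constraints respected.

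For the reverse direction (Hamiltonian cycle $\Rightarrow$ satisfying assignment), I would argue as in \cref{thm:minPOHPPheight5}: constraint \pef{enum:weighted:sstart}-type conditions force the cycle to begin at $s$ and pass through $t$ at the forced position, the forced vertices $C_j[\cdot,\cdot] \prec t$ guarantee that the prefix $\cP_{s,t}$ crosses each clause gadget and hence visits a literal vertex, and the positive-literal constraints of type \pef{enum:weighted:posliteral} force the corresponding variable vertex to appear before that literal vertex. Reading off $x_i$ as \texttt{true} iff $\cP_{s,t}$ visits the variable vertex of $X_i$ then yields a satisfying assignment, and the weight bound $k$ caps the number of \texttt{true} variables because each visited variable vertex contributes exactly one unit of weight via its incident weight-$1$ edge.

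I expect the main obstacle to be purely the routing bookkeeping for closing the cycle: one must choose the placement of $S$, $T$, and the extra row so that the return path $\cP_3$ (i) visits every remaining vertex, (ii) never re-enters a vertex constrained to lie after $t$ before it is legitimately available, and (iii) ends adjacent to $s$, simultaneously for \emph{all} assignment patterns. As in \cref{thm:cycle_grid_unweighted}, the cleanest way to discharge this is to fix an invariant stating in which row each gadget is entered and left in each phase, and then exhibit the case analysis in an accompanying figure rather than spelling out every crossing; the height $6$ should be exactly the smallest value for which such invariants can be maintained while also paying only for \texttt{true} variables and keeping the weight-$1$ edges off the return portion. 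Establishing that $6$ rows genuinely suffice, and that the back route uses no weight-$1$ edge, is the part that requires the most care.
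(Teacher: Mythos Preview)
Your proposal is correct and follows essentially the same approach as the paper: increase the height from $5$ to $6$, add a small left border gadget $S$, shift the gadgets down by one row so that the bottom five rows replicate the height-$5$ construction, and use the freed row to route the return portion of the cycle back to a neighbor of $s$. One small point to straighten out: in the paper $s$ and the checkpoint vertex $t$ sit at opposite ends of the grid (namely $s=S[6,1]$ and $t=T[6,1]$), not adjacent; $t$ retains its role as the marker separating the assignment-encoding prefix $\cP_{s,t}$ from the rest of the cycle, and it is the \emph{final} vertex of the three-phase path (after $\cP_3$) that must be adjacent to $s$, not $t$ itself.
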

\begin{proof}
We adapt the construction in the proof of \cref{thm:minPOHPPheight5} in order to be able to close the cycle.
The variable gadgets are now \(6\times 6\) rectangular grids and the clause gadgets are \(7\times 6\) rectangular grid. In addition to the \(1\times 6\) right end gadget \(T\), there is also a \(1\times 6\) left end gadget \(S\).
Let \(s=S[6,1]\) and \(t=T[6,1]\).
The variable vertex for \(x_i\) is the vertex \(X_i[4,3]\) and the literal vertices for \(\ell_j^1\) and \(\ell_j^2\) are \(C_j[5,3]\) and \(C_j[5,5]\), respectively.
The only weighted edges are \(\{X_i[4,3],X_i[5,3]\}\) with a weight of \(1\), all other edges have weight \(0\).
The graph \(G\) is then obtained by concatenating the gadgets in the order \(S, X_1,\dots, X_n, C_1, \dots C_m, T\). 

The partial order is essentially the same as the one defined in \cref{thm:minPOHPPheight5}. The bottom five rows of \(G\) are identified with the height \(5\) graph defined above, where the last column in a variable gadget and the first and last column of the clause gadgets have the same constraints as their neighboring rows in the same gadget. The top row of \(G\) simply takes over the constraints from row \(2\).

Going from a \(\pi\)-extending Hamiltonian cycle of weight at most \(k\) to a satisfying assignment works the same way as in \cref{thm:minPOHPPheight5} since the subgraph of vertices that can be visited before \(t\) is defined analogously to that used above and, thus, the argument carries over.

Given a satisfying assignment, a \(\pi\)-extending Hamiltonian cycle can be found as indicated in \cref{fig:grid-skizze}.
Te part between \(s\) and \(t\) first visits all variable vertices assigned to true variables and can then cross the clause gadgets.
The path goes back to \(S\) and then visits the remaining variable vertices and simply crosses the clause gadgets. Finally, it visits the remaining literal vertices and connects back to \(s\).
\end{proof}

As grid graphs of height $h$ have \param{outerplanarity} $\lceil \frac h2 \rceil$, we can state the following complexity results for this parameter.

\begin{figure}
    \centering
    \includegraphics[page=7, width=\textwidth]{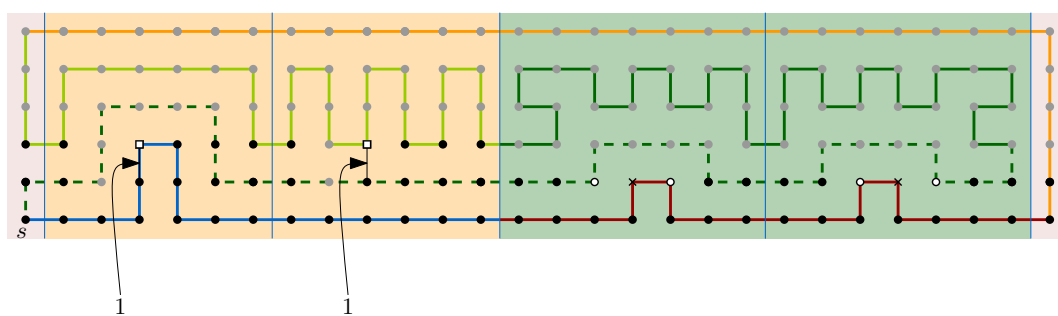}
    \caption{The reduction for MinPOHCP. White circles are the variable vertices, the white squares are the literal vertices. Vertices marked with \(\times\) are before \(t\) in \(\pi\), gray vertices are after \(t\) in~\(\pi\).}
    \label{fig:grid-skizze}
\end{figure}

\begin{corollary}
The following statements are true.
\begin{enumerate}
    \item POHPP is \NP-complete on graphs of \param{outerplanarity}~$4$.
    \item POHCP is \NP-complete on graphs of \param{outerplanarity}~$5$.
    \item MinPOHPP and MinPOHCP are \NP-hard on graphs of \param{outerplanarity}~$3$.
\end{enumerate}
\end{corollary}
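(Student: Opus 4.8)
The plan is to read off all four statements directly from the grid hardness theorems, using the displayed fact that a rectangular grid graph of height~$h$ has \param{outerplanarity} $\lceil h/2 \rceil$. The only substantive point is to justify that bound; after that the corollary is a substitution of the values $h \in \{5,6,7,9\}$.

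First I would prove the outerplanarity bound through the natural planar embedding of a $w \times h$ grid. There the outer face is bounded by the frame formed by the top and bottom rows together with the leftmost and rightmost columns. Deleting every vertex incident to the outer face removes exactly this frame and leaves an induced $(w-2) \times (h-2)$ grid, or the empty graph once a side length drops below~$2$. Each such round therefore shrinks both dimensions by~$2$, and since in every construction the width grows with the formula while the height is a fixed constant (so $h \le w$), the height is the limiting dimension and a short parity case analysis shows that exactly $\lceil h/2 \rceil$ rounds are needed. This yields outerplanarity $\lceil h/2 \rceil$; for the hardness statements only the upper bound matters, as it certifies that each hard instance lies in the relevant outerplanarity class, so no argument that rival embeddings cannot do better is required.

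With the bound in place the corollary follows by substitution: \cref{thm:np-grid} (POHPP, height~$7$) gives outerplanarity $\lceil 7/2\rceil = 4$, yielding~(1); \cref{thm:cycle_grid_unweighted} (POHCP, height~$9$) gives $\lceil 9/2\rceil = 5$, yielding~(2); and \cref{thm:minPOHPPheight5} (MinPOHPP, height~$5$) together with the hardness result for MinPOHCP on grids of height~$6$ give $\lceil 5/2\rceil = \lceil 6/2\rceil = 3$, yielding~(3). Membership in \NP for the two unweighted statements is inherited verbatim from the corresponding grid theorems. Since no reduction has to be rebuilt, the only point demanding care is the observation that height, not width, governs the peeling process, i.e.\ that the reductions genuinely produce grids with $h \le w$.
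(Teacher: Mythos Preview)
Your proposal is correct and follows exactly the paper's approach: the paper simply invokes the fact that a rectangular grid of height~$h$ has outerplanarity $\lceil h/2\rceil$ and reads off the corollary from the grid hardness theorems. You actually supply more detail than the paper by sketching the peeling argument and noting that only the upper bound and the condition $h\le w$ are needed; this is fine and entirely in line with the intended proof.
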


Note that for graphs of outerplanarity~1, i.e., outerplanar graphs, MinPOHPP can be solved in polynomial time~\cite{beisegel2024computing}. MinPOHCP is trivial for outerplanar graphs as they have at most one Hamiltonian cycle and this cycle can be found in polynomial time~\cite{syslo1979characterizations}.

\section{Algorithms}\label{sec:algo}

In the section above, we showed that POHPP and POHCP are \NP{}-complete for graphs with \param{treewidth} and \param{pathwidth} at least four or five, respectively. 
In this section we give polynomial-time algorithms for \param{pathwidth} at most three or four and \param{treewidth} at most two or three. In both cases, we will present an algorithm for the cycle case. Using \cref{obs:path-to-cycle}, we then can directly derive an algorithm for the path case.

\subsection{Pathwidth}\label{subsec:pathwidth3}

Let \(X_1,\dots, X_k\) be a path decomposition of width \(4\) of a graph \(G\).
For our algorithm we assume that every bag contains exactly five vertices. Furthermore, we assume that \(X_i\) and \(X_{i+1}\) differ in exactly two vertices.
We call the unique vertex \(u\in X_{i}\setminus X_{i+1}\) the vertex that is \emph{forgotten} in \(X_{i+1}\) and the unique vertex \(w\in X_{i+1}\setminus X_i\) the vertex that is \emph{introduced} in \(X_{i+1}\).
Furthermore, let \(G_i\) be the induced subgraph on \(V_i = \bigcup_{j=1}^i X_j\).
A general path-decomposition of width~4 can be computed in linear time~\cite{bodlaender1996linear-time,furer2016faster}.
It might not fulfill the constraint that \(|X_i|= 5\) and that consecutive bags differ in exactly two vertices. It can however be brought easily into this form in linear time.

As each vertex set \(X_i\) is a separator of \(G\), the following observation holds:
 \begin{observation}\label{obs:forgotten_connected}
 The vertex \(u\) forgotten in \(X_{i+1}\) has no edge to a vertex in \(V\setminus V_i\).
 The vertex \(w\) introduced in \(X_{i+1}\) has no edge to a vertex in \(V_i  \setminus X_{i+1}\).
 \end{observation}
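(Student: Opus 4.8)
The plan is to derive both claims directly from the three defining properties of a tree decomposition, specialized to a path. The only property I really need is condition~(3), the contiguity property, which for a path reads: for each vertex $v$, the set of indices $j$ with $v \in X_j$ is an interval $\{a_v, a_v+1, \dots, b_v\}$. I would first record the two elementary membership facts that drive everything. Since $u$ is forgotten in $X_{i+1}$ we have $u \in X_i$ and $u \notin X_{i+1}$, so by contiguity $u$ appears only in bags $X_j$ with $j \le i$. Symmetrically, since $w$ is introduced in $X_{i+1}$ we have $w \notin X_i$ and $w \in X_{i+1}$, so $w$ appears only in bags $X_j$ with $j \ge i+1$.

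For the first claim I would argue via the edge-coverage property, condition~(2). Suppose, for contradiction, that $u$ had a neighbor $z \in V \setminus V_i$. Then some bag $X_\ell$ contains both $u$ and $z$. Since $u \in X_\ell$, the membership fact above forces $\ell \le i$, and hence $X_\ell \subseteq V_i$; but then $z \in V_i$, contradicting $z \in V \setminus V_i$. This settles the statement about $u$.

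For the second claim I would argue similarly but combine the membership restrictions of \emph{both} endpoints. Suppose $w$ had a neighbor $z \in V_i \setminus X_{i+1}$, and pick a bag $X_\ell$ containing both $w$ and $z$; since $w \in X_\ell$ we get $\ell \ge i+1$. On the other hand, $z \in V_i$ means $z$ lies in some bag of index $\le i$, while $z \notin X_{i+1}$; by contiguity the interval of indices of bags containing $z$ starts at an index $\le i$ and must stop before $i+1$ (if it reached any index larger than $i+1$ it would have to contain $i+1$ as well, as the interval is contiguous), so $z$ appears only in bags of index $\le i$. This contradicts $z \in X_\ell$ with $\ell \ge i+1$, proving the statement about $w$.

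The main obstacle is minor and lives entirely in the second part: one must check that the bag-interval of $z$ lies strictly to the left of $i+1$, which genuinely uses both $z \in V_i$ (to place a bag on the left) and $z \notin X_{i+1}$ (to forbid the boundary bag) together with contiguity, rather than either fact in isolation. Everything else is a direct translation of the separator property already announced before the statement, so no new idea is required. I would present the whole thing as a two-line consequence of conditions~(2) and~(3), and I would explicitly note that neither the normalization $|X_i| = 5$ nor the assumption that consecutive bags differ in exactly two vertices is needed here; all that is used is $u \in X_i \setminus X_{i+1}$ and $w \in X_{i+1} \setminus X_i$.
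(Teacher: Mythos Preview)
Your argument is correct and is exactly the standard unpacking of the separator property of bags in a path decomposition; the paper itself gives no proof beyond the one-line remark ``As each vertex set $X_i$ is a separator of $G$, the following observation holds,'' and your write-up simply spells this out via conditions~(2) and~(3). There is nothing to add.
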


 Our algorithm is based on the folklore dynamic programming algorithm that solves \textsc{Hamiltonian Cycle} for graphs of bounded \param{pathwidth}.
 We will first sketch that algorithm and then describe the modifications needed when considering partial order constraints.

Let \(\C\) be an ordered Hamiltonian cycle in \(G\) and \(\C_i = \C \cap G_i\) be the set of paths (or, in case $V_i = V(G)$, the set containing $\C$) induced by \(V_i\). 
For \(P=(v_1,\dots, v_\ell)\in \C_i\) we call \(v_1\) the \emph{start vertex}, \(v_\ell\) the \emph{end vertex} and the remaining vertices the \emph{interior vertices}. 
In some cases, we refer to the start and end vertices as \emph{terminal vertices}.

If \(|P|=1\), we call \(P\) a \emph{trivial path}.
 A cycle \(\C\) induces a \emph{signature} that encodes the interaction of \(\C_i\) with the vertices in \(X_i\).
 The subproblems in the dynamic program are then all pairs of a bag \(X_i\) and a signature \(\sigma\) for that bag and the entry for that pair is \texttt{true} if there is a set of paths that visit all vertices in \(V_i\) and interact with \(X_i\) as indicated by the signature and \texttt{false} otherwise.
 The recurrence in the dynamic program then considers all entries for \(X_{i-1}\) with a signature \(\gamma\) that is \emph{compatible} to $\sigma$, i.e., both $\gamma$ and $\sigma$ can hypothetically be induced by the same Hamiltonian cycle.
 If there is one such signature $\gamma$ with value \texttt{true} and the vertex introduced in \(X_i\) can be connected to the vertices in \(X_i \cap X_{i-1}\) to form \(\sigma\), then the entry of $\sigma$ is \texttt{true}.

 The number of signatures and pairs of compatible signatures is bounded by a function depending on the width of the bags and, thus, the algorithm is an \FPT{} algorithm when parameterized by \param{pathwidth}.

When not considering ordering constraints, the information if there is a set of paths that visit every vertex in \(V_i\) exactly once and that interact with \(V\setminus V_i\) as indicated by \(\sigma\) is enough.
However, in POHCP the order in which the vertices are visited is relevant. 
 Thus, we also need information about the partition of the vertices in \(X_i\) to the paths.
 One could naively extend the dynamic program above to store all possible partitions of the vertices in \(V_i\) to the paths. 
 However, as the number of these partitions grows exponentially with~\(n\) this approach is not feasible in general.
For \param{pathwidth}~\(4\), however, we can show that the relevant information can be maintained and computed efficiently.

The following lemmas are the basis for our algorithm.
\begin{lemma}\label{lem:pw3_storinternal}
    Let \(\C\) be a Hamiltonian cycle, let \(1\leq i < k\) and \(u\) be the vertex forgotten in \(X_{i+1}\). Then
    \begin{enumerate}
        \item \(u\) is an interior vertex of a path in \(\C_i\), and
        \item \(\C_i\) contains at most two trivial paths.
    \end{enumerate}
\end{lemma}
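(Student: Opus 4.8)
The plan is to analyze how the cycle $\C$ meets the cut $(V_i, V\setminus V_i)$ and to exploit that the separator $X_i$ has only five vertices. First observe that since $i<k$, the vertex $w$ introduced in $X_{i+1}$ lies outside $V_i$, so $V\setminus V_i\neq\emptyset$ and $\C_i$ is a genuine disjoint union of paths. For the first statement I would argue directly: the forgotten vertex $u$ has exactly two neighbours on the cycle $\C$, and by \cref{obs:forgotten_connected} both of them lie in $V_i$. Hence both cycle-edges incident to $u$ survive in $\C_i$, so $u$ has degree $2$ there and is therefore an interior vertex of the (necessarily non-trivial) path of $\C_i$ containing it.

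For the second statement, the idea is a counting argument on the four vertices of $B := X_i\cap X_{i+1} = X_i\setminus\{u\}$. The key claim to establish is that every terminal vertex and every trivial path of $\C_i$ lies in $B$. Indeed, if a vertex $v\in V_i$ is a path endpoint or an isolated vertex of $\C_i$, then at least one of its two cycle-neighbours lies outside $V_i$, so $v$ has a neighbour in $V\setminus V_i$. But a vertex of $V_i\setminus X_i$ was forgotten at some earlier step, so \cref{obs:forgotten_connected} forbids it from having such a neighbour, and $u$ itself has no neighbour outside $V_i$. Thus necessarily $v\in X_i\setminus\{u\}=B$.

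Then I would finish by a simple count. Let $t$ be the number of trivial paths and $a$ the number of non-trivial paths of $\C_i$. Each trivial path contributes one vertex of $B$ and each non-trivial path contributes its two distinct endpoints, and all of these are distinct vertices of $B$; hence $t+2a\le|B|=4$. By the first statement $u$ lies in the interior of a non-trivial path, so $a\ge 1$, and therefore $t\le 4-2a\le 2$, as required.

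I expect the only real subtlety to be the bookkeeping around the cut: checking that every endpoint or isolated vertex genuinely sends one edge of $\C$ into $V\setminus V_i$ (so that it falls into $B$ via \cref{obs:forgotten_connected}), and that distinct paths contribute pairwise disjoint sets of terminal vertices, which is what legitimises the inequality $t+2a\le 4$. Everything else follows directly from \cref{obs:forgotten_connected} together with $|X_i|=5$ and $|X_i\cap X_{i+1}|=4$.
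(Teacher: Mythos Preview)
Your proposal is correct and follows essentially the same approach as the paper's proof: both parts rest on \cref{obs:forgotten_connected} and a count of the vertices in the bag $X_i$. The paper argues tersely that $u$ together with two further vertices of its path occupy three of the five slots in $X_i$, leaving at most two for trivial paths; you make the same count but phrase it over $B=X_i\setminus\{u\}$ via the inequality $t+2a\le 4$, and you spell out explicitly why every terminal vertex must lie in $B$, a step the paper leaves implicit.
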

\begin{proof}
As $u$ must have two neighbors in $\C$ and, by \cref{obs:forgotten_connected}, it cannot be connected to vertices later, it has to be an interior vertex. 

If $u$ is an interior vertex, then there are at least two further vertices that are part of the same path in $\C_i$. Thus, there are at most two vertices left that can form trivial paths.
\end{proof}

Unlike in the case without constraints, for the POHCP we have to distinguish two different types of non-trivial paths in $\C_i$. Let $\lambda = (v_1, \dots, v_n)$ be the linear extension of $\pi$ that is contained in $\C$. Then, one of the paths of $\C_i$ might contain $v_n$. This path could also contain $v_1$ but not all vertices between $v_1$ and $v_n$ in $\lambda$, i.e., the ordering of the path is not part of a linear extension of $\pi$. Therefore, these paths have to be handled differently than paths that do not contain the end of $\lambda$. We call a path $P$ in $\C_i$ a \emph{close-part} if it contains the end vertex of $\lambda$. Otherwise, we call it a \emph{midpart}. Note that a path $P =(w_1, \dots, w_\ell)$ can only be a close-part if there is a $j \in \{1, \dots, \ell\}$ such that $(w_1, \dots, w_j)$ forms a suffix of a linear extension of $\pi$ and $(w_{j+1}, \dots, w_\ell)$ forms a prefix of a linear extension of $\pi$.   

\begin{lemma}\label{lem:pw3_form}
     For \(1\leq i < k\), \(\C_i\) has one of the following forms:
    \begin{enumerate}
        
    \item \(\C_i\) contains exactly one non-trivial path together with at most two trivial paths.
    \item \(\C_i\) contains two non-trivial midparts of $\C$.
    \item \(\C_i\) contains a non-trivial midpart and a non-trivial close-part of $\C$.
\end{enumerate}
Furthermore, \(\C_i = \{\C\}\) if and only if \(i=k\).
\end{lemma}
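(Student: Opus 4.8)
The plan is to base everything on a counting argument for the \emph{terminal} vertices of the paths in $\C_i$, using that $X_i$ is a separator. First I would record the separator consequence: every edge of $\C$ that crosses between $V_i$ and $V(G)\setminus V_i$ has its $V_i$-endpoint in $X_i$, since otherwise it would join $V_i\setminus X_i$ to $V(G)\setminus V_i$, contradicting that $X_i$ separates these sets. Consequently every start or end vertex of a path in $\C_i$ lies in $X_i$ (it has a cycle-neighbour outside $V_i$), and the single vertex of a trivial path lies in $X_i$ as well (both of its cycle-neighbours are outside $V_i$).

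Then I would let $p$ and $q$ denote the numbers of non-trivial and trivial paths in $\C_i$. The $p$ non-trivial paths contribute $2p$ distinct terminal vertices (start $\neq$ end since the paths have length $\ge 2$, and distinct paths are vertex-disjoint), and the $q$ trivial paths contribute $q$ further distinct vertices, all lying in $X_i$. By \cref{lem:pw3_storinternal} the forgotten vertex $u\in X_i$ is an interior vertex of some path, hence it is not a terminal and lies on a non-trivial path; this gives $p\ge 1$ and forces the $2p+q$ terminal vertices into $X_i\setminus\{u\}$, so that $2p+q\le |X_i|-1=4$. The only integer solutions with $p\ge1$ are $p=1$ (with $q\le 2$) and $p=2$ with $q=0$, since any $p\ge 3$ already violates $2p\le 4$. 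The case $p=1$ is precisely the first listed form.

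For the case $p=2,\,q=0$ I would locate the end vertex of the linear extension $\lambda$ contained in $\C$. As the two non-trivial paths partition $V_i$, at most one of them can contain that vertex: if it lies in $V_i$, the path containing it is the unique close-part and the other is a midpart (the third form), whereas if it lies outside $V_i$ neither path contains it and both are midparts (the second form). For the final assertion I would use that consecutive bags introduce exactly one new vertex, so $V_i=V(G)$ iff $i=k$; then $\C_k=\{\C\}$ by definition, while for $i<k$ the set $V(G)\setminus V_i$ is nonempty, forcing at least one crossing edge and making $\C_i$ a proper collection of paths rather than the whole cycle. The step I expect to be most delicate is the counting bound $2p+q\le 4$: it requires carefully checking that the counted terminal vertices are pairwise distinct and genuinely avoid $u$, since this is exactly what rules out three or more paths and pins the structure down to the three listed forms.
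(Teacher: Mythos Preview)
Your proof is correct and follows essentially the same approach as the paper: both arguments hinge on counting terminal vertices inside $X_i$, use \cref{lem:pw3_storinternal} to secure one interior vertex (yielding the bound $2p+q\le 4$), and then separate the two-path case into the midpart/close-part subcases. Your handling of the close-part case is in fact a bit more direct than the paper's: you invoke the definition (at most one path can contain the end vertex of $\lambda$), whereas the paper argues via the edge between the first and last vertex of $\lambda$ to rule out two close-parts.
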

\begin{proof}
If \(\C_i\) contains only one non-trivial path, then there are at most two trivial paths, due to \cref{lem:pw3_storinternal}. \(\C_i\) cannot contain three non-trivial paths as they would need six terminal vertices but there are only five vertices in $X_i$. Now assume that \(\C_i\) contains two non-trivial paths. Due to \cref{lem:pw3_storinternal}, there is at least one interior vertex in $X_i$. Hence, there cannot be a trivial path in $\C_i$.

If $\C_i$ contains two close-parts, then one of them contains the start vertex $s$ of the linear extension $\lambda$ of $\C$ and the other the end vertex $t$ of $\lambda$. However, in $\C$ these two vertices are adjacent. Therefore, the edge $st$ is present in $V_i$ and the two paths in $\C_i$ must form one path.

If \(\C_i = \{\C\}\), then \(V_i=V(G)\) has to hold. This is only the case for \(V_k\) and thus the last part of the lemma holds.
\end{proof}

The following straight-forward observation helps us to reduce the number of relevant partitions of the vertices to the paths.
\begin{observation}\label{obs:change}
    For each Hamiltonian cycle \(\C\) such that \(\C_{i}\) contains two non-trivial paths, there is an index \(\ell \leq i\) such that all \(\C_j\) for \(j=\ell,\dots, i\) contain two non-trivial paths and either \(\ell=1\) or \(\C_{\ell-1}\) contains only one non-trivial path.
\end{observation}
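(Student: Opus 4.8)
The plan is to prove this by a simple minimality (``backward walk'') argument on the index, using \cref{lem:pw3_form} to interpret what it means for $\C_{\ell-1}$ to fail to have two non-trivial paths. The statement does not ask for the full set of indices $j \le i$ with two non-trivial paths to be contiguous; it only asks for a single contiguous run ending at $i$, so the argument is purely one of bookkeeping.

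First I would record that the candidate starting indices form a nonempty set. Define
\[
A = \{\, j : 1 \le j \le i \text{ and } \C_{j'} \text{ has two non-trivial paths for all } j' \in \{j, \dots, i\} \,\}.
\]
Since the hypothesis says $\C_i$ contains two non-trivial paths, the run $\{i\}$ witnesses $i \in A$, so $A \neq \emptyset$; let $\ell = \min A$. By the definition of $A$ together with $\ell \in A$, every $\C_j$ with $j \in \{\ell, \dots, i\}$ contains two non-trivial paths, which is the first conclusion.

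For the second conclusion, suppose $\ell > 1$, so that $\ell - 1 \notin A$. Because $\C_{j'}$ already has two non-trivial paths for all $j' \in \{\ell, \dots, i\}$, the only way for $\ell - 1$ to miss membership in $A$ is that $\C_{\ell - 1}$ itself does not contain two non-trivial paths. Here I would invoke \cref{lem:pw3_form}: since $1 \le \ell - 1 < i < k$, the structural lemma applies to $\C_{\ell-1}$ and forces it into one of the three listed forms. The two forms with two non-trivial paths are excluded, so $\C_{\ell-1}$ must be of the first form and therefore contains \emph{exactly} one non-trivial path, which is precisely the stated alternative.

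I do not expect a genuine obstacle, as this is essentially an index-minimality argument. The only points requiring care are (i) confirming that $\C_{\ell-1}$ lies in the index range $1 \le \cdot < k$ where \cref{lem:pw3_form} is valid, so that ``not two non-trivial paths'' can be strengthened to ``exactly one non-trivial path'' rather than, say, zero (recall each $\C_j$ with $j<k$ always contains at least one non-trivial path, since the forgotten vertex is interior by \cref{lem:pw3_storinternal}); and (ii) noting that the hypothesis already forces $i < k$, because $\C_k = \{\C\}$ is a single cycle rather than a pair of paths.
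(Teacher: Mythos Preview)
Your argument is correct and is precisely the intended one: the paper treats this as a self-evident observation and gives no proof, and your backward-walk/minimality formalization, together with the invocation of \cref{lem:pw3_form} to upgrade ``not two'' to ``exactly one'', fills in exactly the routine details the paper omits.
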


\begin{figure}
    \centering
    \includegraphics{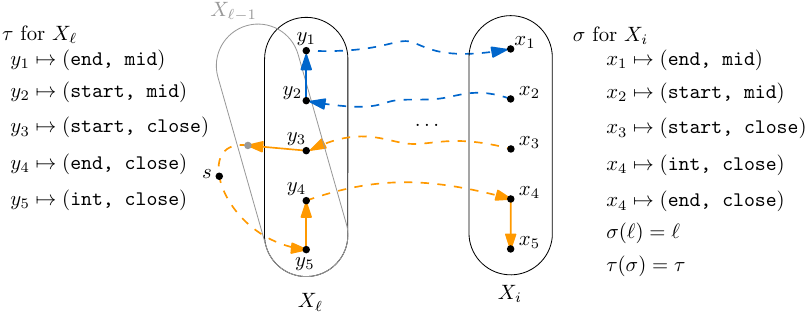}
    \caption{Illustration of the definition of a signature for a fixed directed Hamiltonian cycle \(\C\).}
    \label{fig:signature_example}
\end{figure}

We define the \emph{signature} \(\sigma\) of a bag \(X_i\) as a mapping of the vertices in \(X_i\) to the possible types of vertices and paths. 
Intuitively, a signature \(\sigma\) encodes the following information for a path that is a candidate for a \(\pi\)-extending Hamiltonian cycle: How does the path interact with \(X_i\)? If there are two non-trivial paths in $X_i$, then in which bag \(X_\ell\) did the second path appear? And, finally, how does the path interact with \(X_{\ell}\)?
If \(\sigma\) induces two paths, then \(\sigma\) also contains a value \(\ell\) for \(1\leq \ell \leq i\). If \(\ell\neq i\), then we also store a signature \(\tau\) for \(X_{\ell}\).
Let \(\C(\sigma)\) be the set of paths induced by \(\sigma\). For every vertex $v \in X_i$, we store the following values: $\sigma_p \in \{1,2,3\}$ encoding in which of the paths the vertex $v$ is contained in and \(\sigma_t(v)\in \{\texttt{start},\texttt{end},\texttt{int}\}\) encoding the vertex type assigned to \(v\) by \(\sigma\). Furthermore, we store for every path whether it is a midpart (\texttt{mid}) or a close-part (\texttt{close}) (see \cref{fig:signature_example} for an illustration of these concepts).

We call a signature \emph{valid} if \(\C(\sigma)\) and \(\sigma_t\) do not violate \cref{lem:pw3_storinternal} and \cref{lem:pw3_form}.
In the following we will write \(\ell(\sigma)\) and \(\tau(\sigma)\) for the values \(\ell\) and \(\tau\) stored with a signature \(\sigma\). 

 We say that the signatures \(\gamma\) for \(X_{i-1}\) and \(\sigma\) for \(X_{i}\) are \emph{compatible} if they are both valid and there is a way to extend the paths induced by \(\gamma\) with the vertex \(w\) introduced in \(X_{i}\) to get the signature \(\sigma\).
 If \(\sigma\) induces two paths and \(\ell(\sigma) < i\), then \(\sigma\) and \(\gamma\) are only compatible if \(\ell(\sigma) = \ell(\gamma)\) and \(\tau(\sigma) = \tau(\gamma)\) for \(\ell(\sigma) \leq i-2\) and \(\tau(\sigma) = \gamma = \tau(\gamma)\) if \(\ell(\sigma) = i-1\).
 Intuitively this means that the second path appeared in the same bag and that bag has the same signature for both \(\sigma\) and \(\gamma\).
Furthermore, if \(\ell(\sigma)=i\), then a signature \(\gamma\) is only compatible to \(\sigma\) if \(\gamma\) induces only one non-trivial path.

\begin{lemma}\label{lem:pw3_onlyone}
Let \(\sigma\) be a signature for \(X_i\) such that \(\C(\sigma)\) contains two non-trivial paths. 
Let \(\beta\) and $\gamma$ be signatures of \(X_{i+1}\) that are compatible to \(\sigma\) such that \(\C(\beta)\) and \(\C(\gamma)\) both contain two non-trivial paths. Then $\ell(\beta) = \ell(\gamma)$, $\tau(\beta) = \tau(\gamma)$. Furthermore, $\beta_p(v) = \gamma_p(v)$, and $\beta_t(v) = \gamma_t(v)$ for all $v \in X_{i+1}$. In other words, $\beta$ and $\gamma$ may only differ in the mapping of the paths to \texttt{mid} and \texttt{close}.
\end{lemma}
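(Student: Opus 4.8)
The plan is to analyze how a signature \(\sigma\) for \(X_i\) with two non-trivial paths can be extended by the vertex \(w\) introduced in \(X_{i+1}\), and to show that almost everything about the extended signature is forced. The key structural fact I would lean on is \cref{obs:change}: once two non-trivial paths have appeared, they persist in every subsequent \(\C_j\) until the point where \(\C_j\) again has a single non-trivial path. Since both \(\beta\) and \(\gamma\) are assumed to induce two non-trivial paths and are compatible to the two-path signature \(\sigma\), neither can be the bag in which the second path first appears; hence by the compatibility condition stated just before the lemma (namely \(\ell(\beta)=\ell(\sigma)=\ell(\gamma)\) and the matching of the stored signature \(\tau\)), we immediately get \(\ell(\beta)=\ell(\gamma)\) and \(\tau(\beta)=\tau(\gamma)\). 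This disposes of the \(\ell\)- and \(\tau\)-components.

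Next I would argue that the path-assignment \(\beta_p = \gamma_p\) and the terminal types \(\beta_t=\gamma_t\) agree on all of \(X_{i+1}\). The inductive intuition is that the partition of the vertices of \(X_i\) into the two paths, together with which vertices are terminal, is already fixed by \(\sigma\). When we pass to \(X_{i+1}\), exactly one vertex \(u\) is forgotten and exactly one vertex \(w\) is introduced. By \cref{lem:pw3_storinternal}, the forgotten vertex \(u\) must be an \emph{interior} vertex of its path, so forgetting \(u\) does not change which path any surviving vertex of \(X_{i+1}\cap X_i\) belongs to, nor does it create or destroy a terminal. The introduced vertex \(w\), by \cref{obs:forgotten_connected}, can only attach to vertices in \(X_i\cap X_{i+1}\); compatibility says \(w\) is glued onto the existing paths to realize \(\beta\) (resp.\ \(\gamma\)). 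Since \(\sigma\) already fixes the two paths and their terminal structure, and since the lemma requires that both \(\beta\) and \(\gamma\) still have \emph{two} non-trivial paths (so \(w\) cannot, say, merge the two paths into one or spawn a spurious third), the only consistent ways to place \(w\) — which endpoint it extends and what terminal type it receives — must coincide for \(\beta\) and \(\gamma\). Thus \(\beta_p(v)=\gamma_p(v)\) and \(\beta_t(v)=\gamma_t(v)\) for every \(v\in X_{i+1}\).

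The main obstacle I anticipate is the \(w\)-placement step: I need to rule out that two genuinely different ways of attaching \(w\) could both yield two-path signatures compatible with the same \(\sigma\). The cleanest way to handle this is a short case analysis on the terminal type \(\beta_t(w)\in\{\texttt{start},\texttt{end},\texttt{int}\}\) and on which of the two paths \(w\) joins, using that there are only five vertices in a bag so the number of free terminal slots is tightly constrained by \cref{lem:pw3_form}. Whenever \(w\) is made a terminal, it must replace a terminal of a specific path (to keep exactly two non-trivial paths with the correct count of terminals), and whenever \(w\) is interior it must bridge two terminals of the \emph{same} path; in either configuration the adjacencies available to \(w\) (limited to \(X_i\cap X_{i+1}\)) together with the fixed data in \(\sigma\) leave a unique outcome for \(\sigma_p\) and \(\sigma_t\). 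The only freedom that genuinely survives is whether each resulting path is labeled \texttt{mid} or \texttt{close}, since that label depends on global information (whether the path contains the end vertex \(t\) of \(\lambda\)) that the local extension does not determine. This is precisely the exception asserted in the statement, so the lemma follows.
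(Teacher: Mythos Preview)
Your argument for the agreement of \(\ell\) and \(\tau\) via the compatibility conventions is fine, but the crucial step --- pinning down \(\beta_p,\beta_t\) on all of \(X_{i+1}\) --- has a real gap.

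After forgetting the interior vertex \(u\), the four surviving vertices of \(X_i\cap X_{i+1}\) are exactly the four terminals of the two paths in \(\sigma\). You correctly note that the introduced vertex \(w\) must be glued on so that two non-trivial paths remain; in fact \(w\) cannot be interior (that would either close one path into a cycle or merge the two paths), so \(w\) becomes a terminal and exactly one of the four old terminals, call it \(v\), becomes interior by attaching to \(w\). Your proposed case analysis appeals to the ``adjacencies available to \(w\)'' to single out \(v\), but nothing in the hypotheses rules out \(w\) being adjacent in \(G\) to two or more of the four terminals; compatibility with \(\sigma\) and the counting in \cref{lem:pw3_form} only tell you that \emph{some} terminal becomes interior, not which one. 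So as written, your argument does not exclude two different valid, compatible two-path signatures for \(X_{i+1}\).

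The missing ingredient is a look-ahead to \(X_{i+2}\) (which exists because a two-path signature is not valid for \(X_k\) by \cref{lem:pw3_form}). Validity of \(\beta\) (and likewise of \(\gamma\)) forces, via \cref{lem:pw3_storinternal}, the vertex \(u'\) forgotten in \(X_{i+2}\) to be interior in \(X_{i+1}\). Since a two-path signature on five vertices has a \emph{unique} interior vertex, this forces \(v=u'\) for both \(\beta\) and \(\gamma\); hence \(w\) attaches to \(u'\), inherits \(u'\)'s terminal role and path label, and all of \(\beta_p,\beta_t\) is determined. Your outline never invokes \(X_{i+2}\) or the validity constraint on \(u'\), and without it the uniqueness claim is unsupported.
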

\begin{proof}
Consider the bags \(X_{i}, X_{i+1}\) and \(X_{i+2}\).
Note that \(X_{i+2}\) exists as \(\C(\gamma)\) contains two paths and such a signature is not valid for~\(X_k\), due to \cref{lem:pw3_form}.

See \cref{fig:pw3_onlyone}  for an illustration of the following arguments. Let \(w\) be the vertex introduced in \(X_{i+1}\) and \(u\) be the vertex forgotten in \(X_{i+1}\).
Furthermore, let \(u'\) be the vertex forgotten in \(X_{i+2}\).
Then, by \cref{lem:pw3_storinternal}, \(\sigma_t(u) = \texttt{int}\) as well as \(\beta_t(u') = \gamma_t(u') = \texttt{int}\).
 All other vertices in \(X_{i+1}\) have to be mapped to terminal vertices to fulfill the assumption on $\beta$ and \(\gamma\) and, thus, \(u'\) is the unique vertex that is assigned to be interior by $\beta$ and \(\gamma\).
 Similarly, \(u\) is the only vertex that is assigned to be interior by \(\sigma\).

Furthermore, note that \(\beta_t(w) \neq \texttt{int}\) and \(\gamma_t(w)\neq \texttt{int}\) since connecting \(w\) to two vertices in \(X_{i+1}\setminus X_{i}\) would either close a cycle, if they are in the same path, or it would connect the paths, a contradiction to the assumption on $\beta$ and \(\gamma\).
Thus, \(\beta_t(w), \gamma_t(w)\in \{\texttt{start}, \texttt{end}\}\) and it follows that \(w\neq u'\).
As \(\gamma_t(u') = \texttt{int}\)  but \(\sigma_t(u')\in \{\texttt{start}, \texttt{end}\}\), in any path that has signature \(\sigma\) in \(X_{i}\) and induces two non-trivial paths in \(X_{i+1}\), vertex \(w\) is connected to \(u'\).
Without loss of generality, assume \(\sigma_t(u') = \texttt{start}\) and $\sigma_p(u') = 1$. Then \(\beta_t(w) = \gamma_t(w) = \texttt{start}\), \(\beta_t(u') = \gamma_t(u')= \texttt{int}\), $\beta_p(w) = \gamma_p(w) = \beta_p(u') = \gamma_p(u') = 1$ and for all other vertices in \(v\in X_i \cap X_{i+1} \cap X_{i+2}\) we have \(\beta_t(v) = \gamma_t(v) = \sigma_t(v)\) as well as \(\beta_p(v) = \gamma_p(v) = \sigma_p(v)\).
Thus $\beta$ and $\gamma$ have the same mappings \(\beta_t\) and \(\gamma_t\).
As \(\ell(\sigma)  = \ell(\gamma)\) and \(\tau(\sigma) = \tau(\gamma)\) holds for every pair of compatible signatures, the statement follows.
 \qedhere
 \begin{figure}
     \centering
     \includegraphics{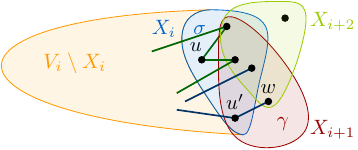}
     \caption{Illustration of the proof of \cref{lem:pw3_onlyone}. Vertex \(u'\) is a terminal vertex in the signature \(\sigma\) of $X_i$ but an interior vertex in the signature \(\gamma\) of $X_{i+1}$. Thus, it is connected to \(w\).}
     \label{fig:pw3_onlyone}
 \end{figure}
\end{proof}

\begin{lemma}\label{lem:pw3_midotherunique}
   Let \(i\geq 2\) and let \(\sigma\) be a valid signature for \(X_i\) that induces two non-trivial paths with \(\ell(\sigma) < i\).
   Then there is at most one signature \(\gamma\) for \(X_{i-1}\) that is compatible with~$\sigma$.
\end{lemma}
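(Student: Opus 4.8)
The plan is to argue that once $\sigma$ is fixed, the bag $X_{i-1}$ has essentially no freedom in how its vertices must be configured to yield $\sigma$ after introducing the new vertex $w \in X_i \setminus X_{i-1}$ and forgetting the vertex $u \in X_{i-1} \setminus X_i$. Since $\ell(\sigma) < i$, by the compatibility rules stated above, any compatible $\gamma$ must satisfy $\ell(\gamma) = \ell(\sigma)$ and $\tau(\gamma) = \tau(\sigma)$ (with the extra identification $\tau(\sigma) = \gamma = \tau(\gamma)$ in the boundary case $\ell(\sigma) = i-1$). So the auxiliary data $\ell$ and $\tau$ carried by $\gamma$ is completely pinned down by $\sigma$, and it remains only to show that the vertex-to-path map $\gamma_p$, the vertex-type map $\gamma_t$, and the mid/close labels are also forced. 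First I would note that $\gamma$ must induce two non-trivial paths: since $\sigma$ induces two non-trivial paths and $\ell(\sigma) < i$, the second path did not appear at step $i$, so $X_{i-1}$ already carries both paths.

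The key step is a local case analysis at the ``seam'' between $X_{i-1}$ and $X_i$. The vertex $u$ forgotten in $X_i$ must, by \cref{lem:pw3_storinternal}, be an interior vertex of $\C_{i-1}$, hence $\gamma_t(u) = \texttt{int}$ is forced; and by \cref{lem:pw3_form} (two non-trivial paths, no trivial paths) $u$ is the \emph{unique} interior vertex in $X_{i-1}$, so every other vertex of $X_{i-1}$ is terminal under $\gamma$. For the shared vertices $v \in X_{i-1} \cap X_i$, I would argue that $\gamma_t(v)$ and $\gamma_p(v)$ are determined by $\sigma$: a vertex that is terminal in $\sigma$ and not newly created stays terminal of the same kind, and the path index is inherited because $\ell(\gamma)=\ell(\sigma)$ fixes the labelling of the two paths consistently back to the common anchor bag $X_{\ell}$. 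The only genuinely new vertex on the $X_i$ side is $w$: since $u$ must be interior in $\gamma$ but is terminal (say \texttt{start}) in $\sigma$, the introduction of $w$ must be exactly the operation that reconnects $u$ into the interior by attaching $w$ as the new terminal endpoint of $u$'s path — this is precisely the mechanism already worked out in the proof of \cref{lem:pw3_onlyone}, run one step backwards. This forces which path $w$ belongs to and forces $\gamma_t(u)$, leaving no choice.

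This determines $\gamma_p$ and $\gamma_t$ entirely. It remains to rule out the one degree of freedom highlighted by \cref{lem:pw3_onlyone}, namely the assignment of the two paths to $\texttt{mid}$ and $\texttt{close}$. Here I would use \cref{lem:pw3_form}: two non-trivial paths can only occur in forms~2 (two midparts) or~3 (a midpart and a close-part), and a configuration with two close-parts is impossible. The mid/close labelling of $\gamma$ is inherited through $\tau(\gamma) = \tau(\sigma)$ together with the fact that the anchor bag $X_{\ell}$ already commits each path to being a mid- or close-part; no vertex introduced between $X_\ell$ and $X_i$ can convert a midpart into a close-part, since that would require visiting the end vertex of $\lambda$, whose position is tracked by $\tau$. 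Hence the labelling of $\gamma$ coincides with that forced by $\sigma$, and $\gamma$ is unique.

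I expect the main obstacle to be the case distinction for the shared terminal vertices, specifically verifying that no \emph{second} compatible configuration of $\gamma$ arises from a different choice of which terminal vertex of $X_{i-1}$ becomes $w$'s neighbour. The clean way around this is to lean on the forced-interior argument: because $u$ is the unique interior vertex of $X_{i-1}$ and $w$ must reconnect it, the attachment point is not a free choice but is dictated by $u$, exactly mirroring the forward argument of \cref{lem:pw3_onlyone}. Everything else is bookkeeping inherited through the $\ell$ and $\tau$ compatibility conditions.
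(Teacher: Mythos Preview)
Your backward local analysis contains a concrete slip and, more importantly, a genuine gap that the paper's forward argument is designed to avoid.

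The slip: you write ``$u$ must be interior in $\gamma$ but is terminal (say \texttt{start}) in $\sigma$''. But $u$ is the vertex \emph{forgotten} in $X_i$, so $u\notin X_i$ and $\sigma$ says nothing about it. The vertex that changes type between $\gamma$ and $\sigma$ is the unique interior vertex $v^*$ of $\sigma$, which lies in $X_{i-1}\cap X_i$; it is terminal in $\gamma$ and becomes interior when $w$ is attached to it. Your description has the roles of $u$ and $v^*$ swapped.

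The gap: even after correcting this, the backward step does not pin down $\gamma_p(u)$. All four shared vertices of $X_{i-1}\cap X_i$ have their path index and type forced by $\sigma$ (three stay terminal, $v^*$ was terminal and inherits $\sigma_p(v^*)$), but $u$ is not shared, so nothing in $\sigma$'s local data says which of the two paths $u$ is interior to. Concretely, take $X_{i-1}=\{u,a,b,c,d\}$ with $\gamma$ having path~1 $=a\text{--}u\text{--}b$ and path~2 $=c\text{--}d$, versus $\gamma'$ with path~1 $=a\text{--}b$ and path~2 $=c\text{--}u\text{--}d$; after forgetting $u$ and attaching $w$ to $a$, both yield the same $\sigma$ on $X_i=\{w,a,b,c,d\}$. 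The pair $(\ell,\tau)$ does not distinguish them locally either, since both can carry the same stored values.

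This is precisely why the paper does \emph{not} argue backward from $X_i$ to $X_{i-1}$. Instead it anchors at $X_{\ell(\sigma)}$ with the stored signature $\tau(\sigma)$ and walks \emph{forward}, applying \cref{lem:pw3_onlyone} inductively: at each forward step the vertex that must become interior is the one about to be forgotten, and that vertex \emph{is} present in the earlier bag, so its path index is read off directly. The asymmetry between forward and backward is exactly what makes the anchor $\tau(\sigma)$ necessary; your proposal treats $(\ell,\tau)$ as bookkeeping, but it is doing the real work.
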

\begin{proof}
If there was more than one signature \(\gamma\) in $X_{i-1}$ that is compatible to \(\sigma\), then
there are at least two sequences \(\sigma =\sigma_i^1, \sigma_{i-1}^1,\dots\allowbreak, \sigma_{\ell(\sigma)+1}^1,  \sigma_{\ell(\sigma)}^1 = \tau(\sigma)\) and \(\sigma = \sigma_i^2, \sigma_{i-1}^2,\dots,\allowbreak\sigma_{\ell(\sigma)+1}^2,  \sigma_{\ell(\sigma)}^2 = \tau(\sigma)\), such that $\sigma_j^1$ is compatible with $\sigma_{j-1}^1$ and $\sigma_j^2$ is compatible with $\sigma_{j-1}^2$ for all $j$. Recall that a signature is only compatible if either the information about \texttt{close} and \texttt{mid} matches or if a vertex that can be the first or last vertex of a linear extension is introduced in this bag.

Applying \cref{lem:pw3_onlyone} with \(i= \ell(\sigma)\) implies that \(\sigma_{\ell(\sigma)+1}^1 = \sigma_{\ell(\sigma)+1}^2\).
Applying this inductively yields \(\sigma_j^1 = \sigma_j^2\) for all \(j\) with \(\ell(\sigma)\leq j \leq i-1\) and the lemma follows.
 \end{proof}

Given a signature \(\sigma\) for \(X_i\), a \emph{path mapping} \(f_\sigma\) assigns each vertex \(v\in V_i\) to one of the paths induced by \(\sigma\).
In particular \(f_\sigma(v) = \sigma_p(v)\) for \(v\in X_i\), i.e., the vertices in \(X_i\) are mapped according to \(\sigma\).

\begin{lemma}\label{obs:pw:unique_partition}
    Let \(\sigma\) be a signature for \(X_i\). There is at most one possible path mapping \(f_\sigma\).
\end{lemma}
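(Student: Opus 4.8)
The plan is to prove the statement by induction on $i$, using the structural dichotomy of \cref{lem:pw3_form}: a valid signature $\sigma$ either induces a single non-trivial path (possibly together with trivial paths) or two non-trivial paths. Throughout, I read a path mapping $f_\sigma$ as the partition of $V_i$ into paths induced by some Hamiltonian cycle $\C$ whose restriction $\C_i$ realizes $\sigma$, and the goal is to show that any two such cycles induce the same assignment of $V_i$ to the paths. The base case $i = 1$ is immediate: there $V_1 = X_1$, so $f_\sigma(v) = \sigma_p(v)$ is forced for every vertex.

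For the one-path case the argument is direct and needs no induction. By \cref{lem:pw3_storinternal}, every vertex forgotten before bag $X_i$ is an interior vertex of a path of $\C_i$; since a trivial path consists of a single bag vertex, every such forgotten vertex must lie on the unique non-trivial path. Hence $f_\sigma(v)$ equals the index of that path for all $v \in V_i \setminus X_i$, while $f_\sigma$ agrees with $\sigma_p$ on $X_i$. This determines $f_\sigma$ completely.

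The two-path case is where the stored data $\ell(\sigma)$ and $\tau(\sigma)$ do the work, and it is handled by the induction. First I would note that, by \cref{lem:pw3_form}, there are no trivial paths, so when passing from $X_{i-1}$ to $X_i$ the introduced vertex $w$ can neither start a new non-trivial path nor merge the two paths (the latter would collapse $\sigma$ to a single non-trivial path); thus $w$ merely extends one of the two paths, and the partition of $V_{i-1}$ into the two classes is \emph{preserved} inside $V_i$. If $\ell(\sigma) < i$, then \cref{lem:pw3_midotherunique} yields a unique compatible predecessor signature $\gamma$ at $X_{i-1}$, which again induces two non-trivial paths with $\ell(\gamma) = \ell(\sigma)$ and $\tau(\gamma) = \tau(\sigma)$; by the induction hypothesis $f_\gamma$ is unique, and since the partition is preserved and the two path labels are pinned down by the values $\sigma_p$ on the terminal vertices in $X_{i-1} \cap X_i$, the mapping $f_\sigma$ on $V_{i-1}$ is forced, with $f_\sigma(w) = \sigma_p(w)$. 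If instead $\ell(\sigma) = i$, compatibility forces the predecessor $\gamma$ to induce only one non-trivial path; all previously forgotten vertices then lie on that single path by the one-path argument, and this path becomes exactly one of the two paths of $\sigma$, identified through $\sigma_p$ on its surviving terminals. Tracing the unique chain of predecessors down to the splitting bag $X_{\ell(\sigma)}$ where the second path appears (cf.\ \cref{obs:change}) then closes the induction.

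The main obstacle I anticipate is the bookkeeping in the two-path case: one must argue carefully that the two paths never swap roles or silently merge along the chain from $X_{\ell(\sigma)}$ to $X_i$, so that the partition computed at the splitting bag propagates unambiguously. This is precisely the content secured by \cref{lem:pw3_onlyone,lem:pw3_midotherunique}, which guarantee that the only freedom left between compatible two-path signatures is the \texttt{mid}/\texttt{close} labeling -- data that does not affect the partition into paths. Once this invariance is established, uniqueness of $f_\sigma$ follows.
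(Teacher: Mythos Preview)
Your proposal is correct and follows essentially the same route as the paper: induction on $i$, the one-path case handled directly by sending all of $V_i\setminus X_i$ to the unique non-trivial path, and the two-path case split according to whether $\ell(\sigma)=i$ or $\ell(\sigma)<i$, with \cref{lem:pw3_midotherunique} supplying the unique compatible predecessor in the latter subcase. Your additional remarks about partition preservation and the role of \cref{lem:pw3_onlyone} are a helpful elaboration but do not change the argument.
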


\begin{proof}
    We use an induction over $i$. First note that the mapping of the vertices of $X_i$ to the paths in $\sigma$ is fixed by $\sigma$. This implies directly that the statement holds for $i = 1$. Thus, assume that $i > 1$. If $\sigma$ induces only one non-trivial path, then every vertex of $V_i \setminus X_i$ is contained in that path. As mentioned above, the mapping of the vertices in $X_i$ is fixed by~$\sigma$.

    So assume $\sigma$ induces two non-trivial paths. If $\ell(\sigma) = i$, then one of the paths was created in $X_i$, i.e., this path contains the vertex introduced in $X_i$. The other path has to contain all the vertices of $V_i \setminus X_i$.

    So assume that $\ell(\sigma) < i$. Due to \cref{lem:pw3_midotherunique}, there is at most one signature $\sigma'$ of $X_{i-1}$ that is compatible with $\sigma$. By induction, the path mapping of $\sigma'$ is unique if it exists. Hence, for all vertices in $V_i \setminus X_i$, the path mapping is fixed. As above, the path mapping of the vertices in $X_i$ is directly fixed by $\sigma$.
\end{proof}

A path mapping \emph{contradicts} \(\pi\) if there is no possible order in which the vertices  in \(V\setminus V_i\) can be appended and prepended to the paths without violating the constraints given by \(\pi\).

Now we can define the dynamic program that will solve POHCP:
For  each bag \(X_i\) and each valid signature  \(\sigma\) for \(X_i\), define the subproblem \(D[i,\sigma]\).
If there is a path mapping \(f_\sigma\) for \(X_i\) that does not contradict \(\pi\) then \(D[i,\sigma]=f_\sigma\), otherwise \(D[i,\sigma]=\bot\).
The algorithm considers increasing values of \(i\).
In the base case \(D[1,\sigma]\), we check for the one possible path partition \(f_\sigma\) whether the vertices in the paths can be ordered according to $\pi$.

For an entry \(D[i,\sigma]\) with \(i\geq 2\), iterate over all valid signatures \(\gamma\) for \(X_{i-1}\) that are compatible to \(\sigma\) and where \(D[i-1,\gamma]\neq \bot\).
Let \(w\) be the vertex introduced in \(X_{i}\) and \(f_\sigma^\gamma\) be the path partition that extends \(D[i-1,\gamma]\) by assigning \(w\) to the path defined by \(\sigma\).
As \(\sigma\) and \(\gamma\) are consistent, the position of \(w\) in the paths is uniquely determined. Iterate over the vertices \(V\setminus \{w\}\) to explicitly check if adding \(w\) in this unique position contradicts \(\pi\). As $w$ could also join two paths, we describe the details by considering the following: We assume that we append a path $P_1$ (that might only contain $w$) to the front of another path $P_2$ (that might only contain $w$). Note that we can split the joining of two paths into two appending operations. 

\begin{itemize}
    \item If both $P_1$ and $P_2$ are midparts, then we check whether no vertex of $P_1$ is a successor of a vertex of  $P_2$ in $\pi$.
    \item If $P_1$ is a midpart and $P_2$ is a close-part, then we check whether all successors of vertices of $P_1$ in $\pi$ are elements of $P_1 \cup P_2$.
    \item If $P_1$ is a close-part and $P_2$ is a midpart, then we check whether all predecessors of vertices of $P_2$ in $\pi$ are elements of $P_1 \cup P_2$.
\end{itemize}

If there is a \(\gamma\) such that \(w\) can be added to the path, then set \(D[i,\sigma]=f_\sigma^\gamma\).
In the other case, set \(D[i,\sigma] = \bot\).
If there is a valid signature \(\sigma\) for \(X_k\) such that \(D[k,\sigma]\neq \bot\), then the algorithm returns \emph{yes}, in the other case, it returns \emph{no}.

\begin{theorem}\label{thm:pw4}
    MinPOHCP on graphs of \param{pathwidth} at most \(4\) can be solved in \(\O(n^4)\) time.  
\end{theorem}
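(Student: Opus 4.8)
The plan is to prove two things about the dynamic program described above: that it correctly decides (and, in the weighted setting, optimizes) MinPOHCP, and that it runs in \(\O(n^4)\) time. I would first augment each table entry so that, instead of merely recording a path mapping or \(\bot\), \(D[i,\sigma]\) stores the minimum total edge weight over all collections of vertex-disjoint paths on \(V_i\) that realize the signature \(\sigma\) and whose induced partial orders do not contradict \(\pi\). By \cref{obs:pw:unique_partition} the path mapping belonging to \(\sigma\) is unique, so the only freedom left is which edges are chosen among already-forgotten vertices; hence taking a minimum is well defined and the recurrence simply adds the weight of the (uniquely determined) edge or edges incident to the introduced vertex \(w\) to the best compatible predecessor value.

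For correctness I would establish, by induction on \(i\), the invariant that \(D[i,\sigma]\neq\bot\) with value \(\omega\) if and only if there is a set of disjoint paths covering \(V_i\) that interacts with \(X_i\) exactly as \(\sigma\) prescribes, has total weight \(\omega\), and whose ordering can be completed to a \(\pi\)-extending Hamiltonian cycle on all of \(V(G)\). The base case \(i=1\) is the explicit check. For the step, the forward direction uses that a fixed optimal cycle \(\C\) induces at each \(X_i\) a valid signature (\cref{lem:pw3_storinternal,lem:pw3_form}) together with a compatible signature on \(X_{i-1}\); \cref{obs:change} guarantees that the stored pair \((\ell,\tau)\) correctly pins down where the second path was born, and \cref{lem:pw3_onlyone,lem:pw3_midotherunique} ensure that this prehistory is uniquely propagated, so no information is lost. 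The backward direction reconstructs a cycle from a non-\(\bot\) entry at \(X_k\), where \cref{lem:pw3_form} forces \(\C_k=\{\C\}\) and the single close-part together with the adjacency of the two endpoints yields the Hamiltonian cycle. The heart of the argument is that the three-case test for appending a midpart or close-part to the front of another path faithfully simulates the global constraint: a midpart may not receive a \(\pi\)-successor from the path it is appended to, and a close-part may only absorb all remaining successors (respectively predecessors), so scanning \(V\) once certifies extendability.

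For the running time I would count table entries and transition cost. A single-path signature fixes only the type, path index, and mid/close status of the five vertices of \(X_i\), giving \(\O(1)\) possibilities; a two-path signature additionally stores an index \(\ell\in\{1,\dots,i\}\) and a base signature \(\tau\) for \(X_\ell\), contributing a further factor of \(\O(n)\cdot\O(1)\). Hence each of the \(\O(n)\) bags carries \(\O(n)\) signatures, for \(\O(n^2)\) entries in total. Computing a single entry \(D[i,\sigma]\) iterates over the compatible predecessors \(\gamma\) of \(X_{i-1}\), of which there are \(\O(n)\) in the worst case, and for each candidate it scans \(V\setminus\{w\}\) in \(\O(n)\) time to run the extendability test; this is \(\O(n^2)\) per entry. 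Multiplying yields the claimed \(\O(n^4)\) bound, into which the linear-time computation and normalization of the width-4 path decomposition are absorbed.

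The step I expect to be the main obstacle is the correctness of the two-path bookkeeping rather than the arithmetic of the running time. One must argue that storing only \(\ell(\sigma)\) and the single snapshot \(\tau(\sigma)\) --- and not the entire history between \(X_\ell\) and \(X_i\) --- suffices to recover the unique path mapping and hence to evaluate the ordering test correctly; this is exactly what \cref{lem:pw3_onlyone,lem:pw3_midotherunique,obs:pw:unique_partition} deliver, so the delicate part is verifying that the compatibility relation between consecutive signatures is defined so that these uniqueness results apply at every transition, including the special cases \(\ell(\sigma)=i-1\) and \(\ell(\sigma)=i\) where the second path is just appearing.
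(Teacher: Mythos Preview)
Your correctness outline and overall algorithm are exactly those of the paper, and your invocation of \cref{lem:pw3_onlyone,lem:pw3_midotherunique,obs:pw:unique_partition} for the uniqueness of the two-path history is on point. The gap is in the running-time analysis.

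You claim that, for each compatible predecessor \(\gamma\), the extendability test costs \(\O(n)\) because one ``scans \(V\setminus\{w\}\)''. That is correct only when \(w\) is appended or prepended to a single existing path. When \(\sigma\) induces \emph{one} non-trivial path while the compatible \(\gamma\) induces \emph{two}, the introduced vertex \(w\) merges the two paths, and the first bullet of the three-case test (``no vertex of \(P_1\) is a \(\pi\)-successor of a vertex of \(P_2\)'') requires checking \emph{pairs} of vertices, one from each part of \(f_\gamma\); this is \(\O(n^2)\), not \(\O(n)\). With your coarse count of \(\O(n)\) predecessors and an \(\O(n^2)\) test, you would obtain \(\O(n^3)\) per entry and \(\O(n^5)\) overall.

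The paper recovers \(\O(n^4)\) by splitting the signatures of a bag into two groups. Two-path signatures with \(\ell(\sigma)<i\) are the ones contributing the \(\Theta(n)\) count per bag, but for these \cref{lem:pw3_midotherunique} guarantees a \emph{single} compatible predecessor, and no merge occurs, so each such entry costs \(\O(n)\); this group contributes \(\O(n^2)\) per bag. All remaining signatures (one non-trivial path, or two with \(\ell(\sigma)=i\)) number only \(\O(1)\) per bag; they may have \(\O(n)\) compatible predecessors and require the \(\O(n^2)\) merge test, contributing \(\O(n^3)\) per bag. Summing over \(\O(n)\) bags gives \(\O(n^4)\). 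Your argument needs this case distinction; as written it either undercounts the merge cost or, once corrected, overshoots the bound.
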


\begin{proof}
 As in each bag one vertex is forgotten, there are \(\O(n)\) possible bags.
 For each bag, there are \(\O(n)\) valid signatures as there are only $\O(1)$ valid choices of $\C(\sigma)$, at most $\O(n)$ choices for $\ell(\sigma)$ and $\O(1)$ choices for $\tau(\sigma)$.

 There are two cases for these signatures of $X_i$. First, we consider those signatures $\sigma$ where $\C(\sigma)$ contains two non-trivial paths and $\ell(\sigma) < i$. There are $\O(n)$ of these signatures for a bag $X_i$. By \cref{lem:pw3_midotherunique}, there is only one signature $\gamma$ of \(X_{i-1}\) for which we have to check if \(f_\sigma^\gamma\) contradicts \(\pi\). 
 Finding  \(\gamma\) takes \(\O(n)\) time. Checking if it contradicts \(\pi\) takes an additional \(\O(n)\) time. We only have to find out where the predecessors and successors of the newly introduced vertex $w$ lie in the path mapping. Thus, we need a total of \(\O(n)\) time to compute \(D[i,\sigma]\). As there are $\O(n)$ many of these signatures, we can compute their $D$-values in $\O(n^2)$ total time.

 Now let us consider the other signatures. There are only $\O(1)$ of them for a bag $X_i$. There might be $\O(n)$ valid signatures of $X_{i-1}$ that are compatible with $\sigma$ where $\O(1)$ have only one non-trivial path while $\Omega(n)$ might have two non-trivial paths. As mentioned above, the newly introduced vertex $w$ might join the two non-trivial paths induced by the signature of $X_{i-1}$. Then checking for consistency with $\pi$ might take $\O(n^2)$ time as we have to check whether there is a pair of vertices, one of each path, whose tuple in $\pi$ contradict the ordering of the newly build path. Thus, we need $\O(n^3)$ time in total to compute the $D$-values of one such signature $\sigma$ of $X_i$. As there are $\O(1)$ of them, we need $\O(n^3)$ time to compute the $D$-values of all these signatures of $X_i$.

 Thus, processing one bag costs $\O(n^3)$ time. As there are $\O(n)$ bags, our algorithm needs $\O(n^4)$ time in total.

The correctness of the algorithm follows directly by the validity of the steps described above.

 The algorithm described above can be easily modified to store the weight that stems from a path partition, together with the partition. 
 If there are multiple ways to construct the same path partitions, the one that induces the smallest weight is stored.
\end{proof}

As adding a universal vertex to a graph increases the \param{pathwidth} of the graph by exactly one, \cref{obs:path-to-cycle} implies an algorithm for the path variant on graphs of \param{pathwidth} at most $3$.

\begin{theorem}\label{thm:pw3}
    MinPOHPP on graphs of \param{pathwidth} at most \(3\) can be solved in \(\O(n^3)\) time.  
\end{theorem}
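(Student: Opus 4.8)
The plan is to obtain the path algorithm as a direct consequence of \cref{thm:pw4} through the universal-vertex reduction of \cref{obs:path-to-cycle}. Given an instance $(G,\pi)$ of MinPOHPP with $G$ of \param{pathwidth} at most $3$, I would form $G'$ by adding a universal vertex $u$ joined through zero-weight edges and let $\pi'$ extend $\pi$ by declaring $u$ the unique maximal element. Since adding a universal vertex raises the \param{pathwidth} by exactly one, $G'$ has \param{pathwidth} at most $4$, so the algorithm of \cref{thm:pw4} applies; by \cref{obs:path-to-cycle}, a cost-minimal $\pi$-extending Hamiltonian path of $G$ is exactly a cost-minimal $\pi'$-extending Hamiltonian cycle of $G'$ with its last vertex removed. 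Correctness of the reduction is therefore immediate, and the only remaining task is to establish the sharper $\O(n^3)$ running time rather than the $\O(n^4)$ bound one would get by invoking \cref{thm:pw4} as a black box.

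The improvement rests on a single structural observation about $G'$. Because $u$ is universal it must lie in every bag of any path decomposition, so $u \in V_i$ for all $i$; and because $u$ is the unique maximal element of $\pi'$, it is the end vertex of the linear extension of every $\pi'$-extending Hamiltonian cycle. Hence, for every $i$, the path of $\C_i$ that contains $u$ is always a close-part (whether $u$ is interior or terminal there, its two cycle-neighbors are the first and last real vertices, so that path has the required suffix/prefix shape). In particular, whenever $\C_i$ has two non-trivial paths, one of them contains $u$ and is a close-part, so the ``two non-trivial midparts'' configuration -- the second case of \cref{lem:pw3_form} -- can never occur on $G'$.

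I would then revisit the running-time analysis of \cref{thm:pw4}. There the $\O(n^3)$-per-bag term came exclusively from the case in which the newly introduced vertex $w$ joins two non-trivial \emph{midparts}, a step whose consistency check against $\pi$ may cost $\O(n^2)$. By the observation above this case does not arise for $G'$: every join of two non-trivial paths merges a midpart with a close-part, and such a merge is validated by the cheaper midpart/close-part rules, which only test whether all successors (respectively predecessors) of the vertices of one path lie inside the union of the two paths. Maintaining with each path a counter of how many of its vertices' successors still lie outside the processed part lets this test run in $\O(n)$ time. Each bag then carries $\O(1)$ signatures with one non-trivial path (or with $\ell(\sigma)=i$), each combined with $\O(n)$ compatible predecessor signatures at $\O(n)$ cost, together with the unchanged $\O(n^2)$ contribution from signatures with two non-trivial paths and $\ell(\sigma)<i$; this yields $\O(n^2)$ per bag and hence $\O(n^3)$ overall.

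The main obstacle is precisely this last point: once the midpart/midpart join is excluded, I must make rigorous that every remaining consistency check -- in particular the midpart/close-part merge created by $w$ -- can genuinely be carried out in $\O(n)$ rather than $\O(n^2)$ time, by pre-maintaining the right per-path bookkeeping as the dynamic program sweeps the decomposition. Everything else (the reduction itself, the \param{pathwidth} bound, and correctness) is routine.
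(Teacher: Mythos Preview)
Your reduction via \cref{obs:path-to-cycle} is precisely the paper's argument; the paper's entire proof is the single sentence that adding a universal vertex raises the \param{pathwidth} by one, so \cref{thm:pw4} applies. The paper, however, does not explain why the bound drops to $\O(n^3)$ rather than the $\O(n^4)$ one gets from a black-box invocation of \cref{thm:pw4}, so in this respect your proposal is actually more complete than the paper's own proof.

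Your structural observation is correct and is what closes this gap: since $u$ lies in every bag and is the last vertex of every $\pi'$-extending linear order, whenever $\C_i$ has two non-trivial paths one of them contains $u$ and is therefore a close-part, so case~2 of \cref{lem:pw3_form} cannot occur on $G'$ and the $\O(n^2)$ midpart--midpart merge check from the proof of \cref{thm:pw4} is never triggered. For the remaining midpart--close-part merges the union of the two paths is all of $V_i$ (by \cref{lem:pw3_form} there are no trivial paths in this situation), so the test ``all successors of the midpart lie in the union'' reduces to ``no successor lies outside $V_i$''; after precomputing for each vertex the last bag in which one of its $\pi$-successors is introduced, this is an $\O(n)$ scan per compatible $\gamma$, and the checks involving the newly introduced $w$ are likewise $\O(n)$ using the stored path mapping $f_\gamma$. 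This yields $\O(n^2)$ per bag and $\O(n^3)$ overall, as you claim. Your proposed counter bookkeeping is one way to organize this, though the simpler precompute-and-scan already suffices; either way, the obstacle you flag is genuinely routine.
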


As the \param{pathwidth} of a partial grid graph of height $h$ is at most $h$, we can state the following. 

\begin{corollary}
    The following statements are true.
    \begin{enumerate}
        \item MinPOHPP can be solved in $\O(n^3)$ time on partial grid graphs of height at most \(3\).
        \item MinPOHCP can be solved in $\O(n^3)$ time on partial grid graphs of height at most \(4\).
    \end{enumerate}
\end{corollary}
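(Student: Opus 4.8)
The plan is to read the corollary off from the two algorithmic theorems of \cref{subsec:pathwidth3} via the pathwidth bound asserted just above it. Recall that a partial grid graph of height $h$ has pathwidth at most $h$: sweeping the host rectangular grid in column-major order and keeping on each frontier only the at most $h+1$ vertices that still have a neighbour to the right gives a path decomposition of width $h$, and passing to a subgraph only removes vertices from bags. Hence a height-$3$ partial grid has pathwidth at most $3$ and a height-$4$ partial grid has pathwidth at most $4$. The first statement is then immediate: since MinPOHPP is solvable in $\O(n^3)$ time on graphs of pathwidth at most $3$ by \cref{thm:pw3}, the same bound holds on height-$3$ partial grids.

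For the second statement, \cref{thm:pw4} already yields an $\O(n^4)$-time algorithm for MinPOHCP, because the pathwidth is at most $4$. The real task — and the main obstacle — is to save the last factor of $n$. In the analysis of \cref{thm:pw4} the whole $\O(n^4)$ stems from a single bottleneck: a bag $X_i$ carrying a signature $\sigma$ with one non-trivial path that is reached from $\Omega(n)$ pairwise distinct compatible predecessor signatures $\gamma$ of $X_{i-1}$, each inducing two non-trivial paths and differing only in the index $\ell(\gamma)$ recording where the second path first appeared. For each such $\gamma$ the introduced vertex $w$ joins the two paths and an $\O(n^2)$ consistency check against $\pi$ is performed, for a total of $\O(n^3)$ per bag.

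My plan to remove the extra factor is to exploit that the host grid has bounded height. With at most $h+1\le 5$ vertices on the separating column and at most two non-trivial paths (\cref{lem:pw3_form}), I would argue that in such a thin planar strip the two vertex-disjoint paths covering $V_{i-1}$ are forced, column by column, by the way their four terminals sit on the frontier; consequently all of the $\Omega(n)$ predecessors $\gamma$ above induce the \emph{same} bipartition of $V_{i-1}$ into the two paths $P_1\cup P_2$ (they can differ only in the bookkeeping value $\ell(\gamma)$ and in the \texttt{mid}/\texttt{close} labels, never in which vertex lies on which path). Granting this, the $\O(n^2)$ join-consistency check depends only on $P_1,P_2$ and hence has to be run \emph{once} per bag rather than $\Omega(n)$ times; selecting the minimum-weight admissible $\gamma$ among the $\Omega(n)$ compatible predecessors is an additional $\O(n)$ scan. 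This lowers the per-bag cost of the bottleneck from $\O(n^3)$ to $\O(n^2)$, and over the $\O(n)$ bags gives the claimed $\O(n^3)$ bound, the weighted bookkeeping of \cref{thm:pw4} being otherwise unchanged.

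I expect the delicate step to be making the statement ``the two-path partition is forced by the frontier, independently of the history $\ell(\gamma)$'' precise and reconciling it with \cref{lem:pw3_midotherunique,obs:pw:unique_partition}: one must show, using the grid embedding and height at most $4$, that two disjoint paths whose four endpoints lie on one grid column admit no two genuinely different vertex partitions of the already-processed region. A short case analysis on how the at most two frontier arcs can be completed within each thin column should settle this, after which both parts of the corollary follow.
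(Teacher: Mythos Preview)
Your handling of part~1 matches the paper exactly: it is read off directly from \cref{thm:pw3} together with the observation that a partial grid of height~$h$ has pathwidth at most~$h$.

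For part~2, the paper gives no argument beyond that same pathwidth bound; the corollary is stated as an immediate consequence of \cref{thm:pw4}. But \cref{thm:pw4} only yields $\O(n^4)$, not $\O(n^3)$, so the $\O(n^3)$ in the statement is almost certainly a typographical slip (it should match the $\O(n^4)$ of \cref{thm:pw4}). You have correctly spotted this mismatch and then attempted to \emph{prove} the stronger bound---something the paper does not do at all.

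Your proposed fix, however, has a genuine gap. The crucial claim is that all of the $\Omega(n)$ two-path predecessor signatures $\gamma$ at $X_{i-1}$ (which, by \cref{lem:pw3_onlyone}, agree on $\gamma_p,\gamma_t$ but differ in $\ell(\gamma)$) induce the \emph{same} bipartition of $V_{i-1}$. That is not what \cref{obs:pw:unique_partition} says: it gives one path mapping \emph{per signature}, and distinct $\ell(\gamma)$ values are distinct signatures. Tracing two such chains backwards, they coincide down to bag $X_{\max(\ell(\gamma),\ell(\gamma'))}$, but below that point one chain already carries two non-trivial paths while the other carries only one; the vertex forgotten at that bag (interior in both chains) need not be assigned to the same path, so the two bipartitions can genuinely differ. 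Nothing in the height-$4$ grid embedding forces this choice, and the ``short case analysis'' you sketch would have to rule out precisely these divergences column by column over the whole processed region---a global argument, not a local one. So even setting aside that the paper does not attempt this, your outline does not yet close the gap.
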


Besides these algorithmic results, our arguments above also imply some structural result about Hamiltonian cycles in graphs of \param{pathwidth} at most $4$. In particular, \cref{obs:pw:unique_partition} implies the following.

\begin{corollary}\label{corol:number-path-pw}
    Let $G$ be an $n$-vertex graph of \param{pathwidth} at most $4$ and let $X_1, \dots, X_k$ be a path decomposition of $G$. Let $\mathfrak{C}$ be the set of all Hamiltonian cycles in $G$. The size of the family of path mappings $\{\C \cap V_i \mid \C \in \mathfrak{C}, i \in \{1,\dots,k\}, V_i = \bigcup_{j=1}^i X_j\}$ is $\O(n^2)$. 
\end{corollary}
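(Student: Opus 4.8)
The plan is to bound the size of the family
$\{\C \cap V_i \mid \C \in \mathfrak{C}, i \in \{1,\dots,k\}\}$
by counting, for each fixed index $i$, how many distinct path mappings can arise as the restriction $\C \cap V_i$ over all Hamiltonian cycles $\C$. Since there are $\O(n)$ bags (as one vertex is forgotten per bag), it suffices to show that for each fixed $i$ the number of distinct path mappings is $\O(n)$; multiplying by the $\O(n)$ bags then yields the claimed $\O(n^2)$ bound.

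First I would observe that every restriction $\C \cap V_i$ is, by \cref{lem:pw3_form}, a set of paths of one of the three listed forms, and that each such restriction induces a valid signature $\sigma$ for the bag $X_i$ in the sense defined above. Crucially, by \cref{obs:pw:unique_partition}, each valid signature $\sigma$ for $X_i$ determines \emph{at most one} path mapping $f_\sigma$. Therefore the number of distinct path mappings $\C \cap V_i$ that can occur for a fixed $i$ is at most the number of valid signatures for $X_i$. The second step is then to count the valid signatures for a single bag, which is exactly the count already carried out in the proof of \cref{thm:pw4}: there are $\O(1)$ choices for the combinatorial type $\C(\sigma)$ (the interaction pattern of at most two paths with the five vertices of $X_i$, together with the \texttt{mid}/\texttt{close} labels), at most $\O(n)$ choices for the index $\ell(\sigma)$ recording where a second path appeared, and $\O(1)$ choices for the stored signature $\tau(\sigma)$ of the bag $X_{\ell(\sigma)}$. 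Hence there are $\O(n)$ valid signatures per bag, and thus $\O(n)$ distinct path mappings per bag.

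Combining these, the total number of pairs $(i,\sigma)$ is $\O(n)\cdot\O(n)=\O(n^2)$, and since each contributes at most one path mapping by \cref{obs:pw:unique_partition}, the family has size $\O(n^2)$, which is exactly the claim. The only subtlety I would take care to state explicitly is that we are counting \emph{distinct} path mappings across \emph{all} Hamiltonian cycles, not path mappings per cycle: different cycles $\C$ may yield the same restriction $\C \cap V_i$, but this only helps the bound since it means the map $\C \mapsto \C\cap V_i$ is not injective. I expect the main (though minor) obstacle to be making precise that every restriction $\C\cap V_i$ arises from some valid signature — i.e.\ that the signature machinery is complete in the sense that no Hamiltonian-cycle restriction escapes being captured by a valid signature — but this is immediate from the definition of validity (a signature is valid iff it does not violate \cref{lem:pw3_storinternal} and \cref{lem:pw3_form}, and any genuine restriction $\C\cap V_i$ satisfies both lemmas by construction).
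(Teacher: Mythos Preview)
Your proposal is correct and follows exactly the route the paper intends: the paper states the corollary as an immediate consequence of \cref{obs:pw:unique_partition}, and you have spelled out precisely this implication by combining the uniqueness of the path mapping per signature with the $\O(n)$ bound on signatures per bag (from the proof of \cref{thm:pw4}) and the $\O(n)$ bound on the number of bags. There is nothing to add.
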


Note that this result cannot be extended to the case of Hamiltonian paths even if the \param{bandwidth} of the graph is bounded by~$3$. Consider the graph given in \cref{fig:path-example} and a Hamiltonian path starting in $s$ and ending in $t$. Depending on the choice which of $v_i$ and $w_i$ we add to the prefix or to the suffix of the path, we get an exponential number of path mappings. The same example also shows that the number of Hamiltonian cycles of graphs of \param{bandwidth}~3 is exponential. Similarly, one can show that \cref{corol:number-path-pw} does not hold for graphs of \param{bandwidth}~5. To this end, consider the graph $G$ consisting of the vertex set $V(G) := \{v_1, \dots, v_n\}$ and the edges $E(G) := \{v_iv_j : |i - j| \leq 5\}$. Start with the two paths $v_3,v_1,v_5$ and $v_4,v_2,v_6$. Next, we have to attach one of the vertices $v_7$ and $v_8$ to the terminal vertex $v_3$. However, it is not fixed which of the vertices we have to attach to $v_3$. Repeating this argument, we can show that the number of path mappings is exponential in the number of vertices.

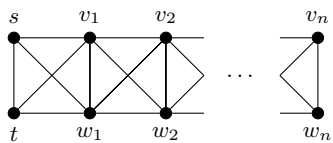
\begin{figure}
    \centering
    \begin{tikzpicture}[scale=1]
        \footnotesize
        \node[vertex, label=90:$s$] (s) at (0,1) {};
        \node[vertex, label=-90:$t$] (s') at (0,0) {};
        \node[vertex, label=90:$v_1$] (x1) at (1,1) {};
        \node[vertex, label=-90:$w_1$] (nx1) at (1,0) {};
        \node[vertex, label=90:$v_2$] (x2) at (2,1) {};
        \node[vertex, label=-90:$w_2$] (nx2) at (2,0) {};
        \node[vertex, label=90:$v_n$] (xn) at (4,1) {};
        \node[vertex, label=-90:$w_n$] (nxn) at (4,0) {};

        \draw (s) -- (x1) -- (nx1) -- (s);
        \draw (s') -- (x1) -- (nx1) -- (s') -- (s);
        \draw (x1) -- (x2) --
               (nx2) -- (nx1) -- (x2) -- (nx1) -- (x1) -- (nx2) -- (x2);
        \draw (x2) --+ (0.5, 0);
        \draw (nx2) --+ (0.5, 0);
        \draw (x2) --+ (0.5, -0.5);
        \draw (nx2) --+ (0.5, 0.5);
        
        \node at (3,0.5) {$\dots$};

        \draw (xn) -- (nxn);
        \draw (xn) --+ (-0.5,0);
        \draw (nxn) --+ (-0.5, 0);
        \draw (xn) --+ (-0.5, -0.5);
        \draw (nxn) --+ (-0.5, 0.5);
\end{tikzpicture}
    \caption{A graph of \param{bandwidth}~$3$ whose Hamiltonian paths have an exponential number of path mappings to the path decomposition.}
    \label{fig:path-example}
\end{figure}

\subsection{Treewidth}\label{subsec:treewidth2}
In this section we consider graphs that have \param{treewidth} at most \(3\).
We reuse some ideas from \cref{subsec:pathwidth3} together with some additional observations.
Let \(\mathcal{T}=(T,\{X_t\}_{t\in V(T)})\) be a tree decomposition of width~\(3\) for \(G\). 
Without loss of generality, we assume that every bag contains exactly \(4\) vertices and that \(\mathcal{T}\) is a rooted binary tree. 
The nodes with two children in \(\mathcal{T}\) are called \emph{join nodes} and the other nodes are called \emph{exchange nodes}.
Let \(t\) be a join node with children \(t_1, t_2\). Then \(X_t = X_{t_2} = X_{t_2}\).
We also assume that no leaf of \(\mathcal{T}\) is a child of a join node.
For the exchange nodes, we make the same assumption as for the vertex sets in \cref{subsec:pathwidth3}, namely that they differ in exactly two vertices. Let \(t_1\) be an exchange node with child \(t_2\), then the unique vertex in \(X_{t_1}\setminus X_{t_2}\) is introduced in \(X_{t_1}\) and the vertex in \(X_{t_2}\setminus X_{t_1}\) is forgotten in \(X_{t_1}\).
We can compute a nice tree decomposition of width at most \(3\) in linear time \cite{bodlaender1996linear-time}. Given a nice tree decomposition of width at most \(3\), a tree decomposition of the form above can be found in linear time.

For a node \(t\), let \(V_t\) be the union of \(X_{t'}\) for all \(t'\) in the subtree of \(\mathcal{T}\) rooted at \(t\).
Furthermore, let \(G_t\) be the induced subgraph on \(V_t\).
Let \(\C\) be an ordered Hamiltonian cycle in \(G\) and let \(\C_t= G_t \cap \C\). We reuse the definitions of start vertex, end vertex, interior vertex, terminal vertex,  trivial paths as well as midpart and close-part given in the section before.

\begin{lemma}\label{lem:tw2_form_non-join}
    Let \(\C\) be a Hamiltonian cycle in \(G\) and \(t\) be a node whose parent is an exchange node. Then \(\C_t\) consists of one non-trivial path together with at most one trivial path.
\end{lemma}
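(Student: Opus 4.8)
The plan is to bound the number of terminal vertices of $\C_t$ using the separator property of the bag $X_t$, exactly as in the pathwidth case, but adapted to a bag of size $4$. First I would recall from \cref{obs:forgotten_connected} that the bag $X_t$ separates $V_t \setminus X_t$ from the rest of the graph, so every path of $\C_t$ must have both of its terminal vertices inside $X_t$ (an interior vertex of a path in $\C$ that lies in $V_t$ but whose path continues outside $V_t$ forces the crossing point into the bag). Since $|X_t| = 4$, at most four terminal vertices are available, so $\C_t$ can contain at most two non-trivial paths, or one non-trivial path plus trivial paths, or purely trivial paths.

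Next I would use the hypothesis that the parent $t'$ of $t$ is an \emph{exchange} node. This means there is a unique vertex $u \in X_t \setminus X_{t'}$, i.e.\ a vertex that is forgotten when moving from $t$ to its parent. By the analogue of \cref{lem:pw3_storinternal}, this forgotten vertex $u$ cannot be connected in $\C$ to any vertex outside $V_t$, because $X_{t'}$ separates $V_t$ from the remaining vertices and $u \notin X_{t'}$. Since $u$ needs two neighbors on the cycle $\C$, both of these neighbors lie in $V_t$, and hence $u$ is an \emph{interior} vertex of some path in $\C_t$. This single forced interior vertex is the key structural fact.

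From here the counting finishes the argument. If $u$ is interior to a path $P$ of $\C_t$, then $P$ contains $u$ together with at least its two neighbors, so $P$ is non-trivial and uses at least one of the four slots in $X_t$ as an interior vertex (namely $u$ itself). That leaves at most three vertices of $X_t$ to serve as terminal vertices of the paths in $\C_t$. A second non-trivial path would require two further terminal vertices, and the path $P$ containing $u$ already requires two terminal vertices of its own (its two endpoints, which lie in $X_t$); this would demand $2 + 2 = 4$ terminal slots \emph{plus} the interior slot for $u$, i.e.\ five vertices, exceeding $|X_t| = 4$. Hence $\C_t$ has exactly one non-trivial path, and the remaining at most one vertex of $X_t$ can form at most one trivial path.

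The main obstacle I anticipate is making the terminal-counting rigorous: one must argue carefully that the endpoints of the non-trivial path $P$ genuinely occupy distinct vertices of $X_t$ and that $u$ is distinct from both endpoints (which holds precisely because $u$ is interior, not terminal). The exchange-node hypothesis is doing real work here — without a guaranteed forgotten vertex one could not conclude an interior vertex exists, and indeed at a join node (excluded by assumption) two non-trivial paths \emph{can} occur, which is why the lemma is stated only for nodes whose parent is an exchange node. I would therefore emphasize the case distinction on the parent type and lean on \cref{obs:forgotten_connected} to pin down that $u$'s cycle-neighbors stay inside $V_t$.
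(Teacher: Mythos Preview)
Your proposal is correct and follows essentially the same approach as the paper: identify the forgotten vertex $u$ as an interior vertex of some path in $\C_t$ (via \cref{lem:pw3_storinternal}/\cref{obs:forgotten_connected}), then count slots in the size-$4$ bag to conclude that $P$ occupies at least three of them, leaving room for at most one trivial path. The paper compresses this into three sentences, but the logical content is identical to what you wrote.
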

\begin{proof}
We can use \cref{lem:pw3_storinternal} since \(t\) behaves similar as a node in a path decomposition. Thus, the vertex $u$ that is forgotten in the parent of $t$ is an interior vertex in $t$. As the bag contains four vertices, there is at most one vertex that is not part of the path of $u$ and this vertex has to form a trivial path.
\end{proof}

\begin{lemma}\label{lem:tw2_form_join}
Let \(\C\) be a Hamiltonian cycle and let \(t\) be a node whose parent is a join node. Then \(\C_t\) either consists of one non-trivial path together with at most one trivial path or it contains two non-trivial paths.
\end{lemma}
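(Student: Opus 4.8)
The plan is to characterise $\C_t$ by pairing a counting bound on the terminal vertices with a non-degeneracy argument that forces at least one non-trivial path, in the spirit of \cref{lem:tw2_form_non-join} but adapted to a bag sitting directly below a join node. Throughout I would use that $X_t$ is a separator of $G$ (the analogue of \cref{obs:forgotten_connected} for $V_t$): no vertex of $V_t \setminus X_t$ has a $\C$-edge leaving $V_t$.

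First I would prove the upper bound of two non-trivial paths. Every terminal vertex of a non-trivial path of $\C_t$, and every trivial path, must lie in $X_t$: a terminal has its second $\C$-neighbour outside $V_t$, and an isolated vertex of $\C_t$ has both $\C$-neighbours outside $V_t$, so in either case the separator property forces the vertex into $X_t$. Since distinct non-trivial paths use disjoint pairs of terminals and $|X_t| = 4$, there can be at most two non-trivial paths. This restricts the number of non-trivial paths to $0$, $1$, or $2$.

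Second I would rule out zero non-trivial paths. As $t$ is a child of a join node, the assumption that no leaf is a child of a join node guarantees that $t$ is not a leaf; a short induction over $\mathcal{T}$ (a forgotten vertex at an exchange node lies in $V_\cdot\setminus X_\cdot$, and both children of a join node are themselves non-leaves) then gives $V_t\setminus X_t\neq\emptyset$. Every vertex of $V_t\setminus X_t$ has both $\C$-neighbours inside $V_t$ and is hence an interior vertex of some path of $\C_t$, so $\C_t$ contains at least one non-trivial path. Combining the two steps, $\C_t$ has exactly one or exactly two non-trivial paths; the two-path situation is precisely the second alternative of the lemma, so only the single-path case remains. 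In that case the unique non-trivial path $P$ absorbs all of $V_t\setminus X_t$ as interior vertices, and if $a$ denotes the number of interior vertices of $P$ that lie in $X_t$, then the number of trivial paths equals $4-2-a = 2-a$. Thus the asserted bound of at most one trivial path is equivalent to exhibiting at least one interior vertex of $P$ inside the bag $X_t$ itself.

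The hard part will be exactly this last point, and I expect it to carry the essential content of the lemma. In the non-join setting of \cref{lem:tw2_form_non-join} the vertex forgotten on the way to the exchange parent lies in $X_t$ and is forced to be interior by \cref{lem:pw3_storinternal}, giving $a\geq 1$ for free; at a child of a join node no vertex is forgotten on the way up, because the parent shares the bag, so this device is unavailable, and it is precisely the freedom it removes that permits a second non-trivial path to appear. To recover an interior vertex inside $X_t$ in the single-path case I would split on the type of $t$ itself: if $t$ is an exchange node I would attempt to transport the interior-vertex guarantee from its child through \cref{lem:pw3_storinternal}; if $t$ is a join node I would analyse how the single path restricts to the two children and argue that their common bag must host an interior vertex. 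This is the most delicate step, and it is where I would expect the simultaneous use of the $|X_t| = 4$ bound and the join structure to be indispensable.
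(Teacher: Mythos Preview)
Your first two steps---the terminal-counting bound on non-trivial paths and the non-leaf argument forcing at least one---are correct and already more careful than the paper, whose entire proof is the single sentence ``The statement follows directly from the fact that there are at most four terminal vertices in the bag.'' That counting gives at most two non-trivial paths and rules out any trivial path coexisting with two non-trivial ones, but it says nothing about the number of trivial paths in the single-non-trivial case; the paper simply does not argue that clause.

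Your ``hard part'' cannot be completed, because the clause ``at most one trivial path'' is false as written. Let $t$ be an exchange node whose parent is a join node, with child $t'$. By \cref{lem:tw2_form_non-join}, $\C_{t'}$ can consist of one non-trivial path $P$ with terminals $a,b$ and interior vertex $u$ (the vertex forgotten at $t$), together with one trivial path $\{c\}$; then $X_{t'}=\{a,b,u,c\}$ and $X_t=\{a,b,c,d\}$ with $d$ introduced at $t$. If both $\C$-neighbours of $d$ lie outside $V_t$, we get $\C_t=\{P,\{c\},\{d\}\}$: one non-trivial path and two trivial ones. A concrete eight-vertex realisation is the cycle $\C=(a,e,b,g,c,f,d,h)$ with $V_t=\{a,b,c,d,e\}$ sitting under two nested join nodes on the bag $\{a,b,c,d\}$, the remaining branches introducing $f$, $g$, $h$. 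So your proposed case split fails already in the exchange-node branch: the interior vertex supplied by \cref{lem:pw3_storinternal} is $u\in X_{t'}\setminus X_t$, not a vertex of $X_t$. The lemma should read ``at most two trivial paths'' in the single-path alternative; with that correction the paper's one-line proof is complete, and the remainder of \cref{subsec:treewidth2} is unaffected since the $\O(1)$ signature count and \cref{lem:tw2_unique_compatible} rely only on the dichotomy between one and two non-trivial paths.
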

\begin{proof}
The statement follows directly from the fact that there are at most four terminal vertices in the bag.
\end{proof}

\begin{lemma}\label{lem:tw2_mid+other}
    Let \(\C\) be an ordered Hamiltonian cycle and let \(t\) be a join node, such that \(\C_t\) contains two non-trivial paths.
    Let \(t'\) be the parent of \(t\) and \(t_1,t_2\) be the children of \(t\).
    Then \(\C_{t'}\) contains exactly one path.
    Furthermore, \(\C_{t_1}\) and \(\C_{t_2}\) contain exactly one non-trivial path each.
\end{lemma}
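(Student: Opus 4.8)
The plan is to exploit that $X_t$ is a size‑$4$ separator at every relevant node and to track, for each vertex of $X_t$, how its two incident cycle edges are distributed among the ``sides'' of the decomposition. Write $S=X_t$; since $t$ is a join node, $X_{t_1}=X_{t_2}=S$. First I would show that the parent $t'$ is itself a join node: if $t'$ were an exchange node, then \cref{lem:tw2_form_non-join} would force $\C_t$ to be one non-trivial path plus at most one trivial path, contradicting the hypothesis that $\C_t$ has two non-trivial paths. Hence $t'$ is a join node with children $t$ and a sibling $s$, and $X_s=X_{t'}=S$. I then split $V(G)\setminus S$ into the four blocks $A_1=V_{t_1}\setminus S$, $A_2=V_{t_2}\setminus S$, $B=V_s\setminus S$ and $R=V(G)\setminus V_{t'}$; using that each bag separates the graph, one checks that no cycle edge runs between two distinct blocks (each block reaches the others only through $S$). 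By the standing ``no leaf is a child of a join node'' assumption, $A_1,A_2,B$ are nonempty, while $R\neq\emptyset$ exactly when $t'$ is not the root. Finally, because $\C_t$ has two non-trivial paths whose four terminals must all lie in the $4$-vertex bag $S$, we get $\deg_{\C_t}(v)=1$ for every $v\in S$ and no vertex of $S$ is interior in $\C_t$.

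For the children I would argue by contradiction with a degree count across the join. Suppose $\C_{t_1}$ contained two non-trivial paths; their four terminals again exhaust $S$, so $\deg_{\C_{t_1}}(v)=1$ for all $v\in S$. For $v\in S$ the incident cycle edges satisfy $\deg_{\C_t}(v)=\deg_{\C_{t_1}}(v)+\deg_{\C_{t_2}}(v)-e_S(v)$, where $e_S(v)$ counts the cycle edges from $v$ to other vertices of $S$ (double counted on the right). Substituting $\deg_{\C_t}(v)=\deg_{\C_{t_1}}(v)=1$ yields $\deg_{\C_{t_2}}(v)=e_S(v)$ for every $v\in S$. Since those $e_S(v)$ edges already lie in $V_{t_2}$, we always have $\deg_{\C_{t_2}}(v)\geq e_S(v)$, so equality forces that no vertex of $S$ sends a cycle edge into $A_2$; as $S$ separates $A_2$ from the rest of $G$, this would disconnect $A_2$ from $\C$, giving $A_2=\emptyset$, a contradiction. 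Hence $\C_{t_1}$ has at most one non-trivial path. On the other hand every vertex of $A_1\neq\emptyset$ is interior in $\C_{t_1}$ (both of its cycle neighbours lie in $V_{t_1}$), so $\C_{t_1}$ has at least one non-trivial path. Thus $\C_{t_1}$ has exactly one, and swapping the roles of $t_1$ and $t_2$ settles $\C_{t_2}$.

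For the parent I would use the global cyclic structure. Since $\C_t$ consists of exactly two non-trivial paths, deleting $V(G)\setminus V_t$ from the single cycle $\C$ cuts it into two arcs, so $\C$ decomposes cyclically into two maximal $V_t$-runs alternating with two maximal runs inside $V(G)\setminus V_t=B\cup R$. Each external run lies entirely in $B$ or entirely in $R$, because $B$ and $R$ are non-adjacent. If $t'$ is not the root, then both $B\neq\emptyset$ and $R\neq\emptyset$, so among the two external runs exactly one is a $B$-run and one is an $R$-run. Forming $\C_{t'}$ deletes only $R$ from $\C$, i.e.\ removes a single arc, and hence leaves exactly one path, as claimed. (If $t'$ is the root then $R=\emptyset$ and $\C_{t'}=\C$ is the full cycle, a boundary case the dynamic program handles separately.)

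The main obstacle is the separator bookkeeping: one must derive the pairwise non-adjacency of $A_1,A_2,B,R$ from the tree-decomposition axioms at three distinct bags at once, and keep the \emph{number of paths} cleanly separate from the \emph{number of non-trivial paths} throughout, since the hypothesis concerns non-trivial paths whereas the trivial-path slack permitted by \cref{lem:tw2_form_join} is precisely what lets the degree identity close. The borderline case $R=\emptyset$ (that is, $t'$ being the root) is the only situation in which the first assertion can fail as literally stated, so isolating it is essential.
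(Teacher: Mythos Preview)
Your proof is correct and takes a genuinely different route from the paper's. The paper argues both halves with a single local observation: since the four vertices of the bag $S=X_t$ are already occupied as the four terminals of the two non-trivial paths of $\C_t$, any further non-trivial path coming from the other side (the child $t_2$, respectively the sibling $t''$) must have its two endpoints among those four terminals, and so its presence glues two of the existing paths together, collapsing $\C_t$ (respectively $\C_{t'}$) to a single non-trivial path. Your approach is more global: for the children you set up the degree identity $\deg_{\C_t}(v)=\deg_{\C_{t_1}}(v)+\deg_{\C_{t_2}}(v)-e_S(v)$ and deduce that a second non-trivial path on the $t_1$-side would sever all cycle edges from $S$ into $A_2$, disconnecting $A_2$ from $\C$; for the parent you read the structure directly off the cycle, noting that two $V_t$-runs force exactly two complementary runs in $B\cup R$, which by non-adjacency must be one $B$-run and one $R$-run. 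The paper's argument is shorter and pleasingly uniform across the two claims; your version is more explicit, gives a clean reason why $A_2$ cannot be empty, and isolates the borderline case $R=\emptyset$ (i.e.\ $t'$ the root, where $\C_{t'}=\{\C\}$) that the paper leaves implicit.
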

\begin{proof}
    By the form of the tree decomposition, \(t_1, t_2\) are not leaves of \(\mathcal{T}\) and, thus, there is at least one vertex in \(V_{t_i}\setminus X_{t_i}\). 
    This implies that \(\C_{t_i}\) contains at least one non-trivial path for each \(i\in \{1,2\}\).

Now assume that for one of the children, say \(t_1\), \(\C_{t_1}\) contains two non-trivial paths \(P_1\) and \(P_2\) and \(\C_{t_2}\) contains a path \(P_3\). 
As there are only four vertices in $X_t$, one of the paths must join the other two paths in $X_t$, a contradiction to the form of~\(\C_t\). Hence $C_{t_1}$ and $\C_{t_2}$ contain exactly one non-trivial path each.

    Now consider the parent \(t'\) of \(t\). Due to \cref{lem:tw2_form_non-join}, $t'$ has to be a join node. Let $t''$ be the sibling of \(t\). Again, $\C_{t''}$ is not a leaf and, thus, contains at least one non-trivial path. As above, the two paths of $\C_t$ and the one path of $C_{t''}$ have to join to one non-trivial path that contains all four vertices of $X_{t'}$.
 \end{proof}

As in \cref{subsec:pathwidth3}, we define the signature \(\sigma\) of a bag \(X_t\) as a mapping of the vertices to the possible types of vertices and paths.
Additionally, if the signature induces two non-trivial paths, \(\sigma\) stores an additional bit \(d=\{L,R\}\).
This bit is used to encode if the path with number~1 came from the left or from the right subtree.
A signature is \emph{valid} if it has one of the forms stated by \cref{lem:tw2_form_non-join} and \cref{lem:tw2_form_join}.
Let \(\sigma\) be a valid signature for a node \(t\).
If \(t\) is an exchange node, we call a signature \(\gamma\) for its child \emph{compatible} if \(\gamma\) is valid and there is a way to extend the paths induced by \(\gamma\) with the vertex introduced in \(X_t\).
Similarly, if \(t\) is a join node, we call the signatures \(\gamma_1, \gamma_2\) for its left and right child compatible to \(\sigma\) if joining the paths induced by \(\gamma_1, \gamma_2\) gives the paths induced by \(\sigma\).
In particular, if the path with number~1 in $\sigma$ is a close-part and \(d=L\), then \(\gamma_1\) has to induce a close-part. In the other case (\(d=R\)), \(\gamma_2\) has to induce a close-part.

Given a signature \(\sigma\) for \(X_t\), a \emph{path mapping} \(f_\sigma\) assigns each vertex \(v\in V_t\) to one of the paths induced by \(\sigma\).
Again, a path mapping \emph{contradicts} \(\pi\) if there is no way to append or prepend the vertices in \(V\setminus V_t\) to the paths without violating the constraints given by \(\pi\).

\begin{lemma}\label{lem:tw2_unique_compatible}
Let \(\sigma\) be a signature for \(X_t\). There is at most one possible path mapping \(f_\sigma\).
\end{lemma}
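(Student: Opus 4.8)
The plan is to prove the statement by structural induction on the subtree of \(\T\) rooted at \(t\), processing the decomposition bottom-up. Since \(\sigma\) fixes the path membership \(\sigma_p(v)\) of every vertex \(v \in X_t\), it suffices to show that the membership of each vertex in \(V_t \setminus X_t\) is forced. The backbone of the whole argument is the observation that every vertex \(v \in V_t \setminus X_t\) is an \emph{interior} vertex of \(\C_t\): as \(v\) lies in some bag of the subtree but not in \(X_t\), it is forgotten at an exchange node inside the subtree, and then \cref{lem:pw3_storinternal} (which applies here exactly as it does in \cref{lem:tw2_form_non-join}) shows it is interior there and hence remains interior in \(\C_t\). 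In particular, \(v\) lies on a non-trivial path, so every trivial path of \(\C(\sigma)\) consists solely of bag vertices. In the base case \(t\) is a leaf, so \(V_t = X_t\) and the mapping is read off directly from \(\sigma\).

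For the inductive step I would first dispose of the case in which \(\sigma\) induces a single non-trivial path (possibly together with trivial paths). By the observation above, all of \(V_t \setminus X_t\) must lie on this unique non-trivial path, so \(f_\sigma\) is forced without even consulting the children. Next, suppose \(\sigma\) induces two non-trivial paths and \(t\) is an exchange node with child \(t'\); by \cref{lem:tw2_form_non-join} the set \(\C_{t'}\) has a single non-trivial path \(P\), and a short case analysis shows that the second path of \(\C_t\) must be created by the vertex \(w\) introduced in \(X_t\), so that this \emph{fresh} path is contained in \(X_t\) and carries no vertex of \(V_t \setminus X_t\). Consequently all of \(V_t \setminus X_t\) lies on the other path, which is identified as the one \emph{not} containing \(w\) via \(\sigma_p(w)\); again \(f_\sigma\) is forced.

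The remaining, and most delicate, case is when \(\sigma\) induces two non-trivial paths and \(t\) is a join node with children \(t_1, t_2\). Here I would invoke \cref{lem:tw2_mid+other}, which guarantees that each \(\C_{t_i}\) consists of exactly one non-trivial path. Writing \(V_t \setminus X_t = (V_{t_1} \setminus X_t) \cup (V_{t_2} \setminus X_t)\) as a disjoint union, the one-path case applied to each child (formally, the induction hypothesis for a compatible child signature \(\gamma_i\)) shows that all of \(V_{t_i} \setminus X_t\) lies on the single non-trivial path contributed by \(t_i\); crucially, since every compatible \(\gamma_i\) is a one-non-trivial-path signature, this mapping is the same for all of them, so the possible non-uniqueness of \(\gamma_i\) is harmless. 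Finally, because the two child paths do not merge at the join (otherwise \(\C_t\) would have one path), each becomes a distinct path of \(\sigma\), and the stored bit \(d \in \{L,R\}\) records which of \(\sigma\)'s two paths receives the contribution of the left child. This pins down, for every \(v \in V_t \setminus X_t\), which path of \(\sigma\) it belongs to, completing the induction. The main obstacle I expect is precisely this join case: one has to argue carefully that the left/right provenance of the non-bag vertices is fully determined by \(d\) together with \cref{lem:tw2_mid+other}, and that the freedom in choosing the compatible child signatures does not translate into freedom in the path mapping.
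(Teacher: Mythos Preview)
Your argument is correct and follows essentially the same route as the paper's: a case analysis on the number of non-trivial paths in \(\sigma\), using that every vertex of \(V_t\setminus X_t\) is interior and, for the two-path join case, resolving the ambiguity via the bit \(d\) together with \cref{lem:tw2_mid+other}. You are in fact a bit more careful than the paper, since you also treat the situation where \(t\) is an exchange node whose signature induces two non-trivial paths (the fresh path then lies entirely in \(X_t\)), a case the paper's own proof elides by asserting that such a \(t\) must be a join node.
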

\begin{proof}
If $\sigma$ is a leaf or induces only one non-trivial path, then the statement is clear. If it induces two non-trivial paths, then $t$ is a join node, due to \cref{lem:tw2_form_non-join}. Let $t_1$ and $t_2$ be the children of $t$ and let $\gamma_1$ and $\gamma_2$ be signatures that are compatible with $\sigma$. As there are no interior vertices in $\sigma$, the terminal vertices of a non-trivial paths of $\sigma$ are terminal vertices of the same path in one of $\gamma_1$ and $\gamma_2$ and trivial paths in the other. The value \(d(\sigma)\) directly implies which non-trivial path has to be in the left child and which in the right child. As the path mapping of $\gamma_1$ and $\gamma_2$ are unique, the same holds for the path mapping of $\sigma$.
\end{proof}

We now have all the definitions to define the dynamic program for POHCP. 
For each bag \(X_t\) and each valid signature \(\sigma\) for \(X_t\), define the subproblem \(D[t,\sigma]\).
If there is a path mapping \(f_\sigma\) for a signature \(\sigma\) of \(X_t\), then \(D[t,\sigma]=f_\sigma\), otherwise \(D[t,\sigma] = \bot\).
The table is filled bottom up along the tree. 
For a leaf there is only one possible path mapping where it has to be checked if it contradicts \(\pi\).

For an exchange node, the same algorithm as in \cref{subsec:pathwidth3} can be used. 
Now consider a join node \(t\) with signature \(\sigma\).
Iterate over all compatible signatures \(\gamma_1,\gamma_2\) for its children and consider the uniquely defined path partition induced by this triple. 
If there is one pair of compatible signatures that induce a path partition \(f_\sigma\) that does not contradict \(\pi\), set \(D[t,\sigma]\) to this value, otherwise set \(D[t,\sigma]= \bot\).

\begin{theorem}\label{thm:tw3}
    MinPOHCP in graphs of \param{treewidth} at most \(3\) can be solved in \(\O(n^3)\) time.
\end{theorem}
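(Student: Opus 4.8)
The plan is to establish correctness and the running time of the dynamic program defined above, following the scheme of \cref{thm:pw4} closely and treating the join nodes as the one genuinely new ingredient. For correctness I would prove both directions by induction along $\T$. For soundness, whenever $D[t,\sigma] \neq \bot$ the stored path mapping together with its (inductively valid) children witnesses a partition of $V_t$ into paths realizing $\sigma$ that does not contradict $\pi$. For completeness, if $\C$ is a $\pi$-extending Hamiltonian cycle then at every node $t$ its restriction $\C_t$ induces a valid signature—this is exactly the content of \cref{lem:tw2_form_non-join,lem:tw2_form_join}—and the child signatures induced by $\C$ are compatible with it, so the recurrence reaches $\sigma$. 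The uniqueness statement of \cref{lem:tw2_unique_compatible} guarantees that a signature carries all information needed to continue, so evaluating the root (where $V_t = V(G)$ and $\C_t = \{\C\}$) answers POHCP.

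For the running time I would first bound the number of subproblems. Since one vertex is forgotten per node, $\T$ has $\O(n)$ nodes, and because each bag has constant size the only potentially unbounded datum that a signature could carry is absent here: in contrast to the pathwidth case there is no parameter $\ell(\sigma)$, only the direction bit $d \in \{L,R\}$ together with the \texttt{mid}/\texttt{close} labels. Hence each bag admits $\O(1)$ valid signatures, giving $\O(n)$ subproblems in total.

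Next I would bound the cost of filling a single entry. At an exchange node the computation is identical to the pathwidth case; moreover, by \cref{lem:tw2_form_non-join} the child of an exchange node carries only one non-trivial path, so the introduced vertex is only ever appended to a path or glued to a trivial path, and each of the $\O(1)$ compatible child signatures is processed by locating the predecessors and successors of the introduced vertex in $\O(n)$ time. At a join node $t$ with children $t_1,t_2$ I would iterate over the $\O(1)$ pairs $(\gamma_1,\gamma_2)$ of compatible child signatures; by \cref{lem:tw2_unique_compatible} each pair determines a unique path mapping. The expensive step is the consistency check: when a non-trivial path coming from $t_1$ and one coming from $t_2$ are glued at shared bag vertices into a single path, verifying that no constraint of $\pi$ is violated across the junction (using the same \texttt{mid}/\texttt{close} case distinction as in the pathwidth exchange step) requires comparing all pairs of vertices from the two sides and thus costs $\O(n^2)$. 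A constant number of such junctions can occur at one node, so each node is processed in $\O(n^2)$ time. Over $\O(n)$ nodes this yields $\O(n^3)$, the bound being dominated by the join nodes. For MinPOHCP one stores with each surviving path mapping its accumulated edge weight and, among all derivations of the same signature, keeps the one of minimum weight, exactly as in \cref{thm:pw4}.

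The main obstacle I anticipate is the join-node step, since it has no counterpart in the pathwidth algorithm. I must argue that the combination of the two subtree solutions is performed correctly: that the direction bit $d$ forces the close-part to originate from the prescribed child (as in the definition of compatibility), that precisely the intended paths are glued at the bag while the others stay separate—which is where \cref{lem:tw2_mid+other} is used to rule out three non-trivial paths meeting at a four-vertex bag—and that the $\O(n^2)$ check certifies every newly created order relation. Ensuring that no cross-constraint between the two subtrees is overlooked, while still respecting that a merged piece may be a close-part whose internal order wraps around the end of the linear extension, is the delicate point; the remaining counting is routine once \cref{lem:tw2_unique_compatible,lem:tw2_mid+other} are in hand.
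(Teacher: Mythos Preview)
Your proposal is correct and follows essentially the same approach as the paper: both count $\O(n)$ bags with $\O(1)$ signatures each, observe that the costly $\O(n^2)$ consistency check arises when two non-trivial paths are glued, and invoke \cref{lem:tw2_form_non-join,lem:tw2_form_join,lem:tw2_mid+other,lem:tw2_unique_compatible} to justify correctness and the signature bound. Your analysis is slightly finer in distinguishing that exchange nodes only cost $\O(n)$ (since their child carries a single non-trivial path), whereas the paper simply bounds every node by $\O(n^2)$; this does not affect the final $\O(n^3)$ bound.
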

\begin{proof}
There are \(\O(n)\) bags as every vertex is forgotten at most once. 
Furthermore, for each bag, there are \(\O(1)\) possible valid signatures. 
Consequentially, there can be only \(\O(1)\) compatible signatures to check for each node. Similar as in the proof of \cref{thm:pw4}, this check might take $\O(n^2)$ time. Consequently, we need $\O(n^3)$ time in total.

For correctness, observe that by \cref{lem:tw2_mid+other}, if there is a signature \(\sigma\) for a node \(t\) that induces two non-trivial paths, 
\cref{lem:tw2_form_non-join} implies that this node is a join node and by \cref{lem:tw2_unique_compatible} there is only one pair \(\gamma_1,\gamma_2\) of compatible signatures for \(\sigma\).
Furthermore, by \cref{lem:tw2_mid+other}, the parent of \(t\) induces exactly one non-trivial path.
Thus, the information about which side the two paths came from is not needed further above in the tree and can safely be ignored.

By storing the path partition with minimum weight if there is more than one candidate, the algorithm above extends to MinPOHCP.
\end{proof}

Similar as for \param{pathwidth}, we can use \cref{obs:path-to-cycle} to derive an algorithm for the path variant on graphs of \param{treewidth} at most $2$.

\begin{theorem}\label{thm:tw2}
    MinPOHPP on graphs of \param{treewidth} at most \(2\) can be solved in \(\O(n^3)\) time.  
\end{theorem}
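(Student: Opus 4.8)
The plan is to mirror the derivation of \cref{thm:pw3} from \cref{thm:pw4}, using \cref{obs:path-to-cycle} to turn the path problem into a cycle problem on a graph that is only slightly larger. Given an instance $(G,\pi)$ of MinPOHPP with $G$ of \param{treewidth} at most $2$, I would form the graph $G'$ by adding a universal vertex $u$ with zero-weight edges and the partial order $\pi'$ by making $u$ the unique maximal element, exactly as in \cref{obs:path-to-cycle}. That observation guarantees that a $\pi$-extending Hamiltonian path $\cP$ in $G$ is cost-minimal if and only if $\cP \oplus u$ is a cost-minimal $\pi'$-extending Hamiltonian cycle in $G'$. Hence it suffices to solve MinPOHCP on $(G',\pi')$ and then delete $u$ from the returned cycle to recover an optimal solution to the original instance.

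The one point that genuinely needs checking is that the algorithm of \cref{thm:tw3} is applicable to $G'$, i.e.\ that $G'$ has \param{treewidth} at most $3$. This follows immediately: take any tree decomposition of $G$ of width at most $2$ (so every bag has at most three vertices) and add $u$ to every bag. Since $u$ is adjacent to all other vertices, all edges incident to $u$ are now covered by some bag, and the subtree condition for $u$ is trivially satisfied because $u$ lies in every bag; the resulting decomposition has width at most $3$. Thus $G'$ is a graph of \param{treewidth} at most $3$ on $n+1$ vertices, and adding the universal vertex increases the \param{treewidth} by at most one, exactly as in the \param{pathwidth} argument preceding \cref{thm:pw3}.

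Applying \cref{thm:tw3} to $(G',\pi')$ then solves MinPOHCP in $\O((n+1)^3) = \O(n^3)$ time, and reading off the corresponding Hamiltonian path by removing $u$ costs only $\O(n)$ additional time. I do not expect any real obstacle here beyond this \param{treewidth} bookkeeping: the entire argument is a routine transfer through \cref{obs:path-to-cycle}, completely parallel to how \cref{thm:pw3} is obtained from \cref{thm:pw4} in the \param{pathwidth} setting. The only subtlety worth stating explicitly is that the class of \param{treewidth}-at-most-$2$ graphs is \emph{not} closed under the addition of universal vertices, so \cref{cor:path-to-cycle} cannot be invoked directly; instead one uses that the single added universal vertex pushes the \param{treewidth} up to at most $3$, which is precisely the range covered by \cref{thm:tw3}.
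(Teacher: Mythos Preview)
Your proposal is correct and follows exactly the paper's own approach: the paper simply states that, as for \param{pathwidth}, one uses \cref{obs:path-to-cycle} (adding a universal vertex, which raises the \param{treewidth} by one) and then applies \cref{thm:tw3}. Your write-up just makes explicit the tree-decomposition bookkeeping that the paper leaves implicit.
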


Similar as for \cref{corol:number-path-pw}, the statement of \cref{lem:tw2_unique_compatible} implies a linear bound on the number of path mappings of Hamiltonian cycles.

\begin{corollary}
    Let $G$ be an $n$-vertex graph of \param{treewidth} at most $3$ and let $(T,(X_t)_{t \in V(T)})$ be a tree decomposition of $G$ of width at most $3$. Let $\mathfrak{C}$ be the set of all Hamiltonian cycles in $G$. The size of the family of path mappings $\{\C \cap V_i \mid \C \in \mathfrak{C}, i \in \{1,\dots,k\}, V_i = \bigcup_{j=1}^i X_j\}$ is $\O(n)$. 
\end{corollary}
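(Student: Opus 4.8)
The plan is to mirror the argument used for \cref{corol:number-path-pw}, replacing \cref{obs:pw:unique_partition} by its tree-decomposition analogue \cref{lem:tw2_unique_compatible}. The central idea is that every path mapping in the family is completely determined by the pair consisting of a node $t$ and the signature that the corresponding Hamiltonian cycle induces at $X_t$. Since the decomposition has only linearly many nodes and each bag admits only constantly many valid signatures, this yields the $\O(n)$ bound.

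First I would fix a node $t$ and a Hamiltonian cycle $\C \in \mathfrak{C}$ and observe that $\C_t = \C \cap V_t$ induces a well-defined signature $\sigma$ for $X_t$: it records, for each vertex of $X_t$, on which path the vertex lies and whether it is a terminal or an interior vertex, together with the midpart/close-part classification and, if two non-trivial paths are present, the side bit $d$. By \cref{lem:tw2_form_non-join} and \cref{lem:tw2_form_join}, the resulting $\sigma$ is valid. Hence every element of the family arises as the path mapping $f_\sigma$ of some valid signature $\sigma$ of some bag $X_t$.

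Next I would invoke \cref{lem:tw2_unique_compatible}, which guarantees that for a fixed node $t$ and a fixed valid signature $\sigma$ there is at most one path mapping $f_\sigma$. Consequently, for a fixed node $t$ the number of distinct path mappings $\C \cap V_t$ that can occur across all Hamiltonian cycles is at most the number of valid signatures for $X_t$. As argued in the proof of \cref{thm:tw3}, each bag admits only $\O(1)$ valid signatures, so each node contributes only $\O(1)$ distinct path mappings to the family.

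Finally, since every vertex is forgotten at most once, the tree decomposition has $\O(n)$ nodes, again exactly as in the proof of \cref{thm:tw3}. Summing the $\O(1)$ contribution over all $\O(n)$ nodes gives a total of $\O(n)$ distinct path mappings, as claimed. I do not expect a genuine obstacle here; the only point requiring care is verifying that every intersection $\C \cap V_t$ really does correspond to a valid signature, which is precisely what the structural lemmas \cref{lem:tw2_form_non-join} and \cref{lem:tw2_form_join} supply.
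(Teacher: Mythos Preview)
Your proposal is correct and matches the paper's own argument: the paper does not spell out a proof, but simply states that \cref{lem:tw2_unique_compatible} implies the linear bound, in direct analogy to how \cref{obs:pw:unique_partition} implies \cref{corol:number-path-pw}. Your write-up makes this implication explicit in exactly the intended way.
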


\subparagraph{A note on treewidth 4} 
When seeing \cref{thm:pw4} and \cref{thm:tw3}, one might hope to combine and extend the algorithms to give an algorithm for POHCP on graphs of \param{treewidth} at most~\(4\) (or for POHPP on graphs of \param{treewidth} at most~$3$). 
This however seems to pose a bigger challenge as \cref{lem:tw2_unique_compatible} is not true anymore. 
Thus, the signature for nodes with two non-trivial paths has to maintain more than local information about the origin of each path.
In the algorithm of \cref{thm:pw4}, this global information was efficiently encoded by relying on the linear structure of the path decomposition.
This seems not to be that easy for graphs of \param{treewidth} \(4\), leaving an interesting avenue for further research.

\section{Conclusion}

We have considered the complexity of finding Hamiltonian paths and cycles with precedence constraints for graphs of bounded \param{bandwidth}, \param{pathwidth}, and \param{treewidth}. We have completely settled the complexity status of both problems for \param{bandwidth} and \param{pathwidth}. However, as can been seen in \cref{fig:results}, for \param{treewidth} there are some open cases. Therefore, one of our main open questions is the complexity of (Min)POHPP on graphs of \param{treewidth}~3 as well as of (Min)POHCP on graphs of \param{treewidth}~4.

Besides the general cases, we have also considered the restricted case of rectangular grid graphs of bounded height. We could show that all problems are hard for some fixed height value. As shown in \cref{fig:results}, the complexity results for grid graphs are more diffuse and the gaps between the polynomial-time results and the hardness results are larger. In particular, there are height values for which a weighted problem is hard while the complexity of the unweighted problem is open. These cases are of particular interest for the following reason: It is easy to see that MinPOHPP and MinPOHCP are hard on complete graphs while their unweighted variants are trivial. To the best of our knowledge, there is no \emph{sparse} graph class known for which the same holds. As grid graphs are sparse, rectangular grid graphs of height~5 (for paths) and~6 (for cycles) might be an example of such a class.

If the complexity gaps of rectangular grid graphs and graphs of bounded pathwidth turn out to be different, it would also be interesting to study the \emph{planar} graphs of bounded pathwidth to find the point where the complexity changes.  

Further research could also explore faster (linear-time) algorithms for grid graphs of height 3, graphs of pathwidth~2 or 3, graphs of treewidth~2 and proper interval graphs with clique number less than 5.

\bibliographystyle{plainurl}
\bibliography{lit}

\end{document}